\title{Algorithmic and Geometric Aspects of Data Depth With Focus on $\beta$-skeleton Depth}
\author{Rasoul Shahsavarifar}
\date{April, 2019}
\DeclareMathOperator{\HD}{\textit{HD}}
\DeclareMathOperator{\SD}{\textit{SD}}
\DeclareMathOperator{\LD}{\textit{LD}}
\DeclareMathOperator{\SkD}{\textit{SkD}}
\DeclareMathOperator{\SphD}{\textit{SphD}}
\DeclareMathOperator{\RNG}{\textit{RNG}}
\DeclareMathOperator{\Conv}{\textit{Conv}}
\DeclareMathOperator{\Sph}{\textit{Sph}}
\DeclareMathOperator{\AIC}{\textit{AIC}}
\DeclareMathOperator{\BIC}{\textit{BIC}}
\newtheorem{theorem}{Theorem}[section]
\newtheorem{definition}{Definition}[section]
\newtheorem{exmp}{Example}[section]
\newtheorem{note}{Note}[section]
\newtheorem{open problem}{Open problem}[section]
\newtheorem*{proof sketch}{$Proof\;sketch$}
\newtheorem{lemma}{Lemma}[section]
\newtheorem{proposition}{Proposition}[section]
\newtheorem{conj}{Conjecture}[section]
\begin{document}
\unbtitlepage
\setcounter{secnumdepth}{3} \setcounter{tocdepth}{3}
\pagenumbering{roman} \setcounter{page}{1}
\doublespacing
\chapter*{Abstract}
\addcontentsline{toc}{chapter}{Abstract} The statistical rank tests play important roles in univariate non-parametric data analysis. If one attempts to generalize the rank tests to a multivariate case, the problem of defining a multivariate order will occur. It is not clear how to define a multivariate order or statistical rank in a meaningful way. One approach to overcome this problem is to use the notion of data depth which measures the centrality of a point with respect to a given data set. In other words, a data depth can be applied to indicate how deep a point is located with respect to a given data set. Using data depth, a multivariate order can be defined by ordering the data points according to their depth values. Various notions of data depth have been introduced over the last decades. In this thesis, we discuss three depth functions: two well-known depth functions halfspace depth and simplicial depth, and one recently defined depth function named as $\beta$-skeleton depth, $\beta\geq 1$. The $\beta$-skeleton depth is equivalent to the previously defined spherical depth and lens depth when $\beta=1$ and $\beta=2$, respectively. Our main focus in this thesis is to explore the geometric and algorithmic aspects of $\beta$-skeleton depth. For the geometric part, we study the arrangement of circles and lenses as the influence regions of $\beta$-skeleton depth. The combinatorial complexity of such arrangement is computed in this study. For the algorithmic part, we develop some algorithms to solve the problem of computing the planar $\beta$-skeleton depth and halfspace depth. The lower bound for the complexity of computation of planar $\beta$-skeleton depth is also investigated. Finally, we determine some relationships among different depth functions, and propose a method to approximate one depth function using another one. The obtained theoretical results are supported by some experimental results provided in the last chapter of this thesis.

\chapter*{Dedication}
\addcontentsline{toc}{chapter}{Dedication}This thesis is dedicated to the memory of:
\begin{itemize}
\item my beloved sister, Masoumeh.
\item my cousins, Mohsen and Alireza who were my best friends. 
\end{itemize}

\chapter*{Acknowledgements}
\addcontentsline{toc}{chapter}{Acknowledgements}
First of all, I wish to express my profound gratitude and appreciation to my supervisor, Dr. David Bremner, for his guidance, support and patience during my Ph.D studies. In the numberless meetings that we had together, he always took time to discuss our approaches and problems. I am always grateful to David for all that he did for me. I would also like to thank Diane Souvaine in Tufts University, Martin and Erik Demaine in MIT, and Reza Modarres in George Washington University for hosting me as a visiting student. Additionally, I thank people in CCCG2017 and CCCG2018 for useful discussions regarding the approximation of depth functions. Special thanks goes to all of my family members in Iran and Canada for always loving, encouraging, and supporting me. I am extremely fortunate for having such family. Finally, I would like to acknowledge my thesis committee members Dr. Patricia Evans, Dr. Huajie Zhang, Dr. Suprio Ray, Dr. Jeffrey Picka, and my external reviewer Dr. Pat Morin for reading this thesis carefully, and providing me with detailed and useful comments.

\renewcommand{\contentsname}{Table of Contents}
\tableofcontents{}
\addcontentsline{toc}{chapter}{Table of Contents}
\newpage
\listoftables{}
\addcontentsline{toc}{chapter}{List of Tables}
\listoffigures{}
\addcontentsline{toc}{chapter}{List of Figures}
\chapter*{List of Symbols, Nomenclature or
Abbreviations} \addcontentsline{toc}{chapter}{Abbreviations}
\begin{table}[!ht]
\begin{tabular}{|l|l|}
\hline
$D$ & a data depth function\\
$D(q;S)$ &depth value of point $q$ with respect to data set $S$\\
$\Conv$ & convex hull\\
$I$ & indicator function\\
$inf$ & infimum\\
$int$ & interior\\
$H$ & halfspace\\
$\mathbb{H}$ & class of closed halfspaces\\
$\HD$ & halfspace depth \\
$\SD$ & simplicial depth\\
$\SkD_{\beta}$ & $\beta$-skeleton depth\\
$\SphD$ & spherical depth\\
$\LD$ & lens depth\\
$S_{\beta}(x_i,x_j)$ & $\beta$-influence region corresponding to points $x_i$ and $x_j$\\
$\Sph(x_i,x_j)$ & spherical influence region corresponding to points $x_i$ and $x_j$\\
$L(x_i,x_j)$ & lens influence region corresponding to points $x_i$ and $x_j$\\
$CC$& combinatorial complexity\\
$GG$ & Gabriel graph\\
$\RNG$ & relative neighbourhood graph\\
\hline

\end{tabular}
\end{table}

\doublespacing \pagenumbering{arabic} \setcounter{page}{1}
\chapter{Introduction}
\label{ch:intro}
For a univariate data set of $n$ points with unknown distribution, consider the problem of computing a point (location estimator) which best summarizes the data set. One may answer this problem by introducing the mean of data set as the desired point. This answer is acceptable as the points are some minimum relative distance apart. However, in general, it is not the best choice because it is enough to move one of the data points very far away from the rest. This causes the mean to follow such single point, but not the majority of the data points. From this example, it can be deduced that the high level of robustness is an important characteristic that a location estimator should have \cite{basu1998robust}. This characteristic of an estimator can be measured by a factor called the \emph{breakdown point}, which is the portion of data points that can move to infinity before the estimator does the same \cite{lopuhaa1992highly,donoho1983notion}. By this definition, the breakdown point of the mean is $1/n$, whereas the breakdown point of the median is $1/2$. It is proven that the maximum breakdown point for any location estimator is $1/2$ \cite{lopuhaa1991breakdown}. As such, median is an appropriate location estimator for an ordered univariate data set \cite{bassett1991equivariant}.
\\\\If one attempts to generalize the concept of median to higher dimensions, an additional problem will occur. In this case, it is not clear how to define a multivariate order that can be applied to compute the median of data set. One approach is to use the notion of data depth which we study in this thesis.
\section{General Definition of Data Depth}
Generally speaking, a data depth is a measure in non-parametric multivariate data analysis that indicates how deep (central) a point is located with respect to a given data set in $\mathbb{R}^d$ ($d \in \mathbb{Z}^+$). In other words, data depth introduces a partial order relation on $\mathbb{R}^d$ because it assigns a corresponding rank (the depth of point with respect to a given data set) to every point in $\mathbb{R}^d$. As a result, applying a data depth on a data set generates a partial ordered set (\emph{poset})\footnote{A poset is a set together with a partial ordering relation which is reflexive, antisymmetric and transitive.} of the data points. Considering different depth values, each data depth determines a family of \emph{regions}. Each region contains all points in $\mathbb{R}^d$ with the same depth values. Among all regions, the \emph{central} region also known as the \emph{median set} is the one whose depth is maximum. Inside the central region a point in $\mathbb{R}^d$ (not necessarily from the data set) with the largest depth is called the \emph{median} of the data set.
\\\\As discussed above, defining a multivariate order and generalizing the concept of median are not straightforward. Over the last few decades, various notions of data depth have been introduced as powerful tools in non-parametric multivariate data analysis. Most of them have been defined to solve specific problems. A short list of the most important and well-known depth functions is as follows: \emph{halfspace depth}~\cite{hotelling1990stability,small1990survey,tukey1975mathematics}, \emph{simplicial depth}~\cite{liu1990notion}, \emph{Oja depth}~\cite{oja1983descriptive}, \emph{regression depth}~\cite{rousseeuw1999regression}, \emph{majority depth}~\cite{liu1993quality}, \emph{Mahalanobis depth}~\cite{mahalanobis1936generalized}, \emph{projection depth}~\cite{zuo2000general}, \emph{Zonoid depth}~\cite{dyckerhoff1996zonoid}, \emph{spherical depth}~\cite{elmore2006spherical}. These depth functions are different in application, definition, and geometry of their central regions. Each data depth has different properties and requires different time complexity to compute. In 2000, Zuo and Serfling \cite{zuo2000general} provided a general framework for statistical depth functions. In this framework, a data depth is a real valued function that possesses the following properties: \emph{affine invariance}, \emph{maximality at the center}, \emph{monotonicity on rays}, and \emph{vanishing at infinity}. However depending on the particular application, not every depth function needs to fit in this framework. For example, in a medical study that includes patient data such as height and weight perhaps the affine invariance is not an important requirement because the height and weight axes are meaningful. In fact, we need the data depth used in this medical study to be invariant under scaling of the axes. Given this property, the results would be independent from the measuring systems.
\\\\The concept of data depth is widely studied by statisticians and computational geometers. Some~directions that have been considered by researchers include defining new depth functions, developing new algorithms, improving the complexity of computations, computing both exact and approximate depth values, and computing depth functions in lower and higher dimensions. Two surveys by Aloupis \cite{aloupis2006geometric} and Small \cite{small1990survey} can be referred as overviews of data depth from a computational geometer's and a statistician's point of view, respectively. Some research on algorithmic aspects of planar depth functions can be found in~\cite{ aloupis2001computing,aloupis2003algorithms,bremner2008output,chan2004optimal,chen2013algorithms,christmann2006regression,liu2006data,matousek1991computing,rousseeuw1999regression}. 
\\\\The main focus of this thesis is on the geometric and algorithmic concepts of three depth functions: halfspace depth ($\HD$), simplicial depth ($\SD$), and $\beta$-skeleton depth ($\SkD_{\beta}$), briefly reviewed in Section \ref{sec:quick-review}. More detailed definitions and properties of these depth functions are explored in Chapter \ref{ch:different-DD}. The $\beta$-skeleton depth is not as well-known as the other depth functions. However, it is easy to compute even in higher dimensions.
\section{Quick Review of $\HD$, $\SD$, and $\SkD_{\beta}$}
\label{sec:quick-review}
In 1975, Tukey generalized the definition of univariate median and defined the halfspace median as a point in which the halfspace depth is maximized, where the halfspace depth is a multivariate measure of centrality of data points. Halfspace depth is also known as Tukey depth or location depth. In general, the halfspace depth of a query point $q$ with respect to a given data set $S$ is the smallest fraction of data points that are contained in a closed halfspace through $q$ \cite{barnett1976ordering,bremner2008output,shamos1977geometry,tukey1975mathematics}. The halfspace depth function has various properties such as vanishing at infinity, affine invariance, and decreasing along rays. These properties are proved in \cite{donoho1992breakdown}. Many different algorithms for the computation of halfspace depth in lower dimensions have been developed \cite{bremner2008output,bremner2006primal,chan2004optimal,rousseeuw1998computing}. The bivariate and trivariate case of halfspace depth can be computed exactly in $O(n\log n)$ and $O(n^2 \log n)$ time \cite{rousseeuw1996algorithm,struyf1999halfspace}, respectively. However, computing the halfspace depth of a query point with respect to a data set of size $n$ in dimension $d$ is an NP-hard problem if both $n$ and $d$ are part of the input \cite {johnson1978densest}. Due to the hardness of the problem, designing efficient algorithms to compute and approximate the halfspace
depth of a point remains an interesting task in the research area of data depth \cite{afshani2009approximate,aronov2010approximate,chen2013absolute,har2011relative}.
\\\\In 1990, another generalization of univariate median was presented by Liu \cite{liu1990notion}. This generalization is based on this fact that the median of a data set $S\subseteq \mathbb{R}$ is a point in $\mathbb{R}$ which is contained in the maximum number of closed intervals, formed by any pair of data points. Liu replaced the closed intervals by closed simplices\footnote{A simplex in $\mathbb{R}^1$ is a line segment, in $\mathbb{R}^2$ is a triangle, in $\mathbb{R}^3$ is a tetrahedron, etc.} in higher dimensions and presented the definition of \emph{simplicial median}. In some references (e.g. \cite{liu1995control}) open simplices are considered. In this thesis, we only consider closed simplices as originally used by Liu in the definition of simplicial median. For a data set $S\in \mathbb{R}^d$, the simplicial median set is a set of points in $\mathbb{R}^d$ which are contained in the maximum number of closed simplices formed by any $(d+1)$ data points from $S$. The definition of simplicial median provides another measure of centrality of point $q\in \mathbb{R}^d$ with respect to the data set $S \subseteq \mathbb{R}^d $. This measure is known as simplicial depth which is the proportion of all closed simplices obtained from $S$ that contain $q$. The straightforward algorithm to compute the simplicial depth takes $O(n^{d+1})$ time. The simplicial depth is affine invariant \cite{liu1990notion}. It is proved that the breakdown point of the simplicial median is worse than the breakdown point of the halfspace median \cite{chen1995bounds}. For a set of $n$ points in general position\footnote{A data set in $\mathbb{R}^2$ is in general position if no three points are collinear.} in $\mathbb{R}^2$, the depth of the simplicial median multiplied by $n \choose 3$ is $\Theta(n^3)$ \cite{aloupiscomputing}. The bivariate case of simplicial depth can be computed optimally in $\Theta(n\log n)$ time \cite{rousseeuw1996algorithm,aloupis2002lower}.
\\\\In 2006, Elmore, Hettmansperger, and Xuan~\cite{elmore2006spherical} defined another notion of data depth named \emph{spherical depth}. It is defined as the proportion of all closed hyperballs with the diameter $\overline{x_ix_j}$, where $x_i$ and $x_j$ are any pair of points in the given data set $S$. These closed hyperballs are known as influence regions of the spherical depth function. In 2011, Liu and Modarres~\cite{liu2011lens}, modified the definition of influence region, and defined \emph{lens depth}. Each lens depth influence region is defined as the intersection of two closed hyperballs $B(x_i,\Vert x_i,x_j\Vert )$ and $B(x_j,\Vert x_i,x_j\Vert )$. These influence regions of spherical depth and lens depth are the multidimensional generalization of \emph{Gabriel circles} and lunes in the definition of the \emph{Gabriel Graph}~\cite{gabriel1969new} and \emph{Relative Neighbourhood Graph}~\cite{supowit1983relative}, respectively. In 2017, Yang~\cite{yang2017beta}, generalized the definition of influence region, and introduced a familly of depth functions called $\beta$-skeleton depth, indexed by a single parameter $\beta\geq 1$. The influence region of $\beta$-skeleton depth is defined to be the intersection of two closed hyperballs given by $B(c_i,\frac{\beta}{2}\Vert x_i,x_j\Vert )$ and $B(c_j,\frac{\beta}{2}\Vert x_i,x_j\Vert )$, where $c_i$ and $c_j$ are some combinations of $x_i$ and $x_j$. Spherical depth and lens depth can be obtained from $\beta$-skeleton depth by considering $\beta=1$ and $\beta=2$, respectively. The $\beta$-skeleton depth has some nice properties including symmetry about the center, maximality at the centre, vanishing at infinity, and monotonicity. Depending on whether Euclidean distance or Mahalanobis distance is used to construct the influence regions, the $\beta$-skeleton depth can be orthogonally invariant or affinely invariant. All of these properties are explored in~\cite{elmore2006spherical,liu2011lens,yang2014depth,yang2017beta}.
A notable characteristic of the $\beta$-skeleton depth is that its time complexity is $O(dn^2)$ which grows linearly in the dimension $d$.
\section{Overview of this Thesis}
In this chapter we have introduced the notion of data depth, and reviewed the definitions of the halfspace depth, simplicial depth, and $\beta$-skeleton depth. In Chapter \ref{ch:bacground} we recall some concepts in computational geometry and statistics which are used in this thesis. Chapter \ref{ch:different-DD} includes a general framework for depth functions. The properties and the previous results related to the halfspace depth, simplicial depth, and $\beta$-skeleton depth are also explored in this chapter. The main contributions of this thesis are provided in the following chapters.
\\In Chapter \ref{ch:geometric}, we study the $\beta$-skeleton depth from a geometric perspectives. We obtain an exact bound $O(n^4)$ for the combinatorial complexity of the arrangement\footnote{An arrangement of some geometric objects is a subdivision of space formed by a collection of such objects.} of planar $\beta$-influence regions.
\\In chapter \ref{ch:algorithmic}, we present an optimal algorithm to compute the planar spherical depth (i.e. $\beta$-skeleton depth, $\beta=1$) of a query point. This algorithm takes $\Theta(n\log n)$ time. In Theorem \ref{thrm:q-skeleton}, we prove that computing the planar $\beta$-skelton depth of a query point can be reduced to a combination of at most $3n$ range counting problems. This reduction and the results in \cite{agarwal2013range,aronov2008approximating,har2011geometric,shaul2011range}, allow us to develop an algorithm for computing the planar $\beta$-skeleton depth. This algorithm takes $O(n^{(3/2)+\epsilon})$ query time. In the last section of this chapter, we present an $\Theta(n\log n)$ algorithm for computing the planar halfspace depth. Although there exist another algorithm with the same running time for this problem \cite{aloupiscomputing}, our algorithm is much simpler in terms of implementation. We use the specialized halfspace range query that is explored in Algorithm \ref{Alg:sph-pseudocode}.
\\Chapter \ref{ch:lowerbound} includes a proof of the lower bound for computing the planar $\beta$-skeleton depth for cases $\beta=1, 1<\beta<\infty$, and $\beta=\infty$, separately. In each case, we reduce the problem of Element Uniqueness\footnote{Element Uniqueness problem: Given a set $A=\{a_1,...,a_n\}$ of real numbers, is there a pair of indices $i,j$ with $i \neq j$ such that $a_i = a_j$?} which takes $\Omega(n\log n)$ to the problem of computing the planar $\beta$-skeleton depth of a query point.
\\In Chapter \ref{ch:experiments}, we study the relationships among different depth notions in this thesis. These relationships are explored in two different ways. First, we use the geometric properties of the influence regions, and prove that the simplicial depth has an upper bound in terms of a constant factor of $\beta$-skeleton depth (i.e. $\SD\leq 3/2\SphD$). Secondly, for every pair of depth functions, we employ a fitting function to approximate one data depth using another one. For example, considering a certain amount of error, we approximate the planar halfspace depth by a polynomial function of planar $\beta$-skeleton depth, $\beta\to \infty$. In the last Section \ref{sec:exp}, some experimental results are provided to support the the theoretical results in this chapter.
\\Finally, Chapter \ref{ch:conclusion} is the conclusion of this thesis.
\chapter{Background}
\label{ch:bacground}
In this chapter, we present some background for the later chapters of this thesis. The background is provided in two sections: computational geometry review (Section \ref{sec:CG-review}) and non-computational geometry review (Section \ref{sec:nonCG-review}). 
\section{Computational Geometry Review}
\label{sec:CG-review}
Some concepts in computational geometry which are frequently referred to in this thesis are briefly reviewed in this section.
\subsection{Model of Computation}
One of the fundamental steps in any study of algorithms is to specify the adopted model of computation. In particular, it is not possible to determine the complexity of an algorithm without knowing the primitive operations\footnote{The primitive operations are those whose costs are constant for the fixed-length operands.} employed in such algorithm. Choosing an appropriate model of computation is closely related to the nature of given problem. The problems in computational geometry are of different types. However, they may be generally categorized as: \textit{Subset Selection}, \textit{Computation}, and \textit{Decision}. Naturally, for each of subset selection and computation problems, there is a decision problem. As an example, a YES/NO answer is needed for the question: "For a given set $S$, does the the set $S'\subset S$ satisfies a certain property $P$?". It can be discussed that we need to deal with the real numbers to answer some problems in the above categories. Suppose, for example, that in a given set of points with integer coordinates, the distance between a pair is requested. As such, we need a \textit{random-access machine} in which each storage location is capable to hold a single real number. The related model of computation is known as the \textit{real \textbf{RAM}} \cite{aho1974ullman,preparata2012computational}. The primitive operations with unit cost in this model are as follows:
\begin{itemize}
\item arithmetic operations ($+$,$-$,$\times$,$\div$).
\item comparison between two real numbers ($<$,$\leq$,$=$,$\neq$,$\geq$,$>$).
\item indirect addressing of memory
\item $k$-th root, trigonometric functions, exponential functions, and logarithmic functions.
\end{itemize} 
The execution of a decision making algorithm in the real \textbf{RAM} can be described as a sequence of operations of two types: arithmetic and comparisons. Depending on the outcome of the arithmetic, the algorithm has a branching option at each comparison. Hence, the computation which is executed by the real \textbf{RAM} can be considered as a path in a rooted tree. In the literature, this rooted tree is known as the \emph{algebraic decision tree} model of computation \cite{dobkin1979complexity,rabin1972proving,reingold1972optimality,steele1982lower}.    
An algebraic decision tree on a set of variables $\{x_1,...,x_k\}$ is a program with a finite sequence of statements. Each of these statements is obtained based on the result of $f(x_1,...,x_k)\diamond 0$, where $f(x_1,...,x_k)$ is an algebraic function and $\diamond$ denotes any comparison relation. For a set of $n$ real elements, Ben-Or proved that any algebraic computation tree that solves the element uniqueness problem must have the complexity of at least $\Omega(n\log n)$ \cite{ben1983lower}.
\subsection{Arrangements}
Suppose that $\Gamma=\{\gamma_1,...,\gamma_n\}$ is a finite collection of geometric objects in $\mathbb{R}^d$. The arrangement $\mathcal{A}(\Gamma$) is the subdivision of $\mathbb{R}^d$ into connected cells of dimensions $0,1,...,d$ induced by $\Gamma$, where each $k$-dimensional cell is a maximal connected subset of $\mathbb{R}^d$ determined by a fixed number of $\gamma_i\in \Gamma$. The arrangement $\mathcal{A}(\Gamma)$ is planar if $\gamma_i\subset\mathbb{R}^2$. In the planar arrangement $\mathcal{A}(\Gamma)$, a $0$-dimensional cell is a \emph{vertex}, a $1$-dimensional cell is an \emph{edge}, and a $2$-dimensional cell is a \emph{face}. As an illustration, Figure \ref{fig:lin-arrang} represents the arrangement of $5$ given lines in the plane.
\begin{figure}[!ht]
  \centering
    \includegraphics[width=0.6\textwidth]{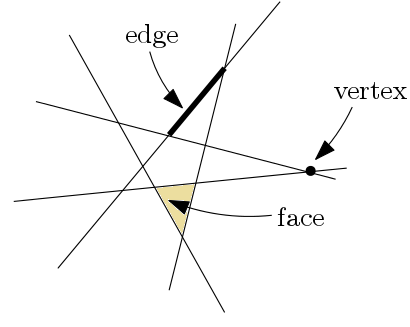}
  \caption{The planar arrangement of $5$ lines }
  \label{fig:lin-arrang}
\end{figure}
In studying arrangements, one of the fundamental questions that needs to be answered is how complex each arrangement can be. Computing the combinatorial complexity ($CC$) of each arrangement helps to answer this question, where the $CC$ of an arrangement $\mathcal{A}(\Gamma)$ is the total number of cells of $\mathcal{A}(\Gamma)$. For example, in Figure \ref{fig:lin-arrang}, $CC=42$ because there are $25$ edges, $10$ vertices, and $7$ faces in the arrangement. Every planar arrangement $\mathcal{A}(\Gamma)$ can be presented by a planar graph $G(V,E)$ such that $v_i\in V$ if and only if face $f_i\in \mathcal{A}(\Gamma)$. Furthermore, $e_{ij}\in E$ if and only if $f_i$ and $f_j$ are two adjacent faces of $\mathcal{A}(\Gamma)$. We recall Euler's Formula \cite{felsner2012geometric,ivanov2009making} which is one of the most useful facts regarding a planar graph.
\begin{theorem}
\label{thrm:Euler}
\emph{Euler's Formula} If $G$ is a planar graph with $C_v$ vertices, $C_e$ edges, and $C_f$ faces,
\begin{equation*}
C_v-C_e+C_f=2.
\end{equation*}
\end{theorem}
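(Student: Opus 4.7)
The plan is to prove the formula by induction on the number of edges $C_e$, treating the planar graph $G$ as a fixed planar embedding so that the notion of ``face'' (the connected components of $\mathbb{R}^2\setminus G$, including the unbounded one) is well-defined. Throughout, I will assume $G$ is connected, which is the setting in which Euler's formula holds and which matches the intended application to arrangements.

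For the base case, take $C_e=0$. A connected planar graph with no edges consists of a single vertex, so $C_v=1$ and the only face is the unbounded face, giving $C_f=1$. Then $C_v-C_e+C_f=1-0+1=2$, as required.

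For the inductive step, assume the identity holds for every connected planar graph with strictly fewer than $C_e$ edges, and consider $G$ with $C_e\geq 1$. I would split into two cases depending on whether $G$ contains a cycle. If $G$ is acyclic, then $G$ is a tree, so it has a leaf $v$ incident to a unique edge $e$; removing $v$ and $e$ yields a connected planar graph $G'$ with $C_v-1$ vertices, $C_e-1$ edges, and the same single unbounded face, so $C_f'=C_f=1$. Applying the induction hypothesis to $G'$ gives $(C_v-1)-(C_e-1)+1=2$, hence $C_v-C_e+C_f=2$. If instead $G$ contains a cycle, pick any edge $e$ lying on a cycle; deleting $e$ leaves $G'$ connected with $C_v$ vertices and $C_e-1$ edges. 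The key geometric fact is that $e$ lies on the common boundary of exactly two distinct faces of $G$, which coalesce into a single face of $G'$, so $C_f'=C_f-1$. The induction hypothesis then yields $C_v-(C_e-1)+(C_f-1)=2$, which again simplifies to the desired identity.

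The main obstacle is the geometric step in the cycle case: justifying rigorously that an edge on a cycle borders two distinct faces, so that its removal decreases the face count by exactly one. This is essentially a consequence of the Jordan curve theorem applied to the cycle containing $e$, which guarantees that the two sides of $e$ lie in different connected components of $\mathbb{R}^2\setminus G$. I would state this carefully (citing the planarity of the embedding) rather than attempt a full topological proof, since a self-contained proof of the Jordan curve theorem is well beyond the scope of a background chapter. The tree case is purely combinatorial and presents no difficulty.
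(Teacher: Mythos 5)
Your proposed induction on the number of edges is the classical textbook proof of Euler's formula and is essentially sound; however, the paper itself offers no proof of this statement at all --- it is recalled purely as background, with the result attributed to the cited references, so there is no internal argument to compare yours against. Two remarks on your write-up. First, you are right to flag connectivity: as stated in the paper the hypothesis is only ``planar graph,'' and the identity $C_v-C_e+C_f=2$ fails for disconnected graphs (a graph with $k$ components satisfies $C_v-C_e+C_f=1+k$), so your explicit restriction to connected $G$ is not merely a convenience but a genuine correction of the statement; in the paper's application the graph encoding an arrangement is indeed connected, so nothing is lost. Second, your identification of the one nontrivial step --- that an edge lying on a cycle borders two distinct faces, so its deletion merges exactly two faces into one --- is exactly where the topological content lives, and deferring to the Jordan curve theorem there is the standard and acceptable resolution at the level of a background chapter. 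The tree case and the bookkeeping are fine as written.
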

Because we study depth functions in this thesis, we define a nonstandard terminology that is frequently referred to in Chapter \ref{ch:geometric}.
\begin{definition}
In a planar arrangement $\mathcal{A}(\Gamma)$, we define a depth region to be the maximal connected union of faces with the same depth value. 
\end{definition}
Further and more detailed results on the arrangements of hyperplanes, spheres, simplices, line segments can be found in \cite{agarwal2000arrangements,berg2008computational,goodman1991discrete}.    
\subsection{Range Query}
Range query is among the central problems in computational geometry. In fact, many problems in computational geometry can be represented as a range query problem. In this thesis, computing the depth value of a query point in some cases is converted to a range query problem (see Chapter \ref{ch:algorithmic}). 
In a typical range query problem, we are given a data set $S$ of $n$ points, and a family $\Lambda$ of \emph{ranges} (i.e. subsets of $\mathbb{R}^d$). $S$ should be preprocessed into a data structure such that for a query range $\tau \in \Lambda$, all points in $S\cap\tau$ can be efficiently reported or counted. In addition to \emph{range counting} and \emph{range reporting}, another type of range query problems is \emph{range emptiness} (i.e. checking if $S\cap\tau=\emptyset$). Due to the nature of related problems in data depth among the three types of range query problems, range reporting is not used in this thesis. Typical example of ranges in range query problems include halfspaces, simplices, rectangle. However, in many applications we may need to deal with ranges bounded by nonlinear functions. In other words, the ranges are some \emph{semialgebraic sets}\footnote{A semialgebraic set is a subset of $\mathbb{R}^d$ obtained from a finite Boolean combination of $d$-variate polynomial equations and inequalities.}. In this case, the problem is known as semialgebraic range query \cite{agarwal1999geometric,agarwal2013range,goodman1991discrete,shaul2011range}. The recent results on some range query problems are reviewed in Chapter \ref{ch:algorithmic}.
\\In some problems such as halfspace and simplex range query, solving  the exact range counting is expensive which means that no exact algorithm with both near linear storage complexity and preprocessing time, and low query time exists \cite{afshani2008geometric}. This issue is a motivation for seeking some approximation techniques that can be applied to approximate a range counting problem. One way to achieve such technique is through the notion of $(\varepsilon,\tau)$-approximation, which is defined in the following. 
\begin{definition}  
\label{def:epsilon-approx}
Given a set $S=\{x_1,..., x_n\}\subset \mathbb{R}^2$, a parameter $\varepsilon>0$, and a semialgebraic range $\tau$ with constant description complexity, we say that algorithm $A$ computes an \textit{$(\varepsilon,\tau)$-approximation} $\eta_\tau$ if \begin{equation*}
(1-\varepsilon) \vert S \cap \tau \vert\leq \eta_\tau \leq (1+\varepsilon) \vert S \cap \tau \vert.
\end{equation*}
\end{definition}
\section{Non Computational Geometry Review}
\label{sec:nonCG-review}
In this section, we provide some background regarding the concepts of partially ordered set (poset), fitting function, and the measures of goodness of statistical models. These concepts are used in Sections \ref{sec:dissimilarity-measures} and \ref{sec:approx-half-space}, where we approximate depth functions using the idea of a fitting function.
\subsection{Poset}
For a finite set $S=\{x_1,...,x_n\}$, it is said that $P=(S,\preceq)$ is a poset if $\preceq$ is a partial order relation on $S$, that is, for all $x_i,x_j,x_t \in S$:
\begin{itemize}
\item[$i)$] $x_i \preceq x_i $.
\item[$ii)$] $x_i \preceq x_j$ and $x_j \preceq x_t$ implies that $x_i \preceq x_t$.
\item[$iii)$] $x_i \preceq x_j$ and $x_j \preceq x_i$ implies that $x_i \equiv_p x_j$, where $\equiv_p$ is the corresponding equivalence relation.
\end{itemize}
Poset $P=(S,\preceq)$ is called a chain if any two elements of $S$ are comparable, i.e., given $x_i ,x_j\in S$, either $x_i \preceq x_j $ or $x_j \preceq x_i $. If there is no comparable pair among the elements of $S$, the corresponding poset is an anti chain. Figure~\ref{fig:poset} illustrates different posets with the same elements.
\begin{figure}[!ht]
  \centering
    \includegraphics[width=0.8\textwidth]{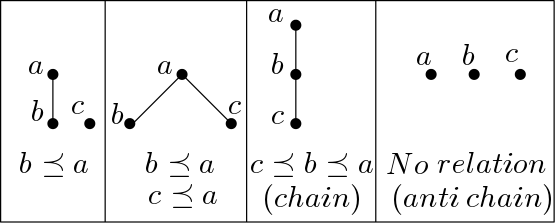}
  \caption{Different posets and relations among their elements}
  \label{fig:poset}
\end{figure}

\subsection{Fitting Function}
One method to represent a set of data is to use the notion of fitting function. This representation technique involves choosing values for the parameters in a function to best describe the data set. Depending on whether the number of parameters is known or unknown in advance, the method is called \emph{parametric fitting} or \emph{nonparametric fitting}, respectively \cite{gershenfeld1999nature}. In \emph {depth function approximation} that we study in this thesis, the number of parameters is known; therefore, we focus on the parametric fitting. Consider a set $\omega=\{(v_i,u_i);1\leq i\leq n\}$, where $V=(v_1,...,v_n)$ and $U=(u_1,...,u_n)$ are two given vectors. The goal is to obtain a function $f$ that best describe $\omega$ (i.e. $f$ captures the trend among the elements of $\omega$). For every element in $\omega$, $u_i=f(v_i)\pm\delta_i$, where $\delta_i\geq 0$ is the error corresponding to the approximation of $u_i$.
\\Obviously, there may be more than one candidate for $f$. The question is how to measure the goodness of fit in order to obtain the best fitting function.
\subsection{Evaluation of Fitting Function}
\label{sec:goodness-of-fitting}
In this section we review some methods and criteria that can be employed to measure how good a fitting function is. Based on the values of these methods, we can select the best fitting function for our data points.
\subsubsection{Coefficient of Determination}
For two vectors $U=(u_1,...,u_n)$ and $V=(v_1,...,v_n)$, suppose that $f$ is a function such that $u_i=f(v_i)\pm \delta_i$. Let $\xi_i=u_i-\overline{U}$, where $\overline{U}$ is the average of all $u_i; 1\leq i\leq n$. The coefficient of determination \cite{shafer2012beginning} of $f$ is indicated by $R^2$ (or $r^2$), and defined by:
\begin{equation}
\label{eq:r-squared}
r^2=\dfrac{\sum\limits_{i=1}^{n}(\xi_i^2-\delta_i^2)}{\sum\limits_{i=1}^{n}\xi_i^2}.
\end{equation} 
To avoid confusion between $\mathbb{R}^2$ and $R^2$, hereafter in this thesis, the coefficient of determination is represented by $r^2$ alone. In statistics and data analysis, $r^2$ is applied as a measure that assesses the ability of a model in predicting the real value of a parameter. From Equation \eqref{eq:r-squared}, it can be deduced that the value of $r^2$ is always within the interval $[0,1]$. Generally, a high value of $r^2$ related to a model fitted to a set of data indicates that the model is a good fit for such data. Obviously, the interpretations of fit depend on the context of analysis. For example, a model with $r^2=0.9$ explains $90\%$ of the variability of the response data around its mean. This percentage may be considered as a high value in a social study. However, it is not a high enough value in some other research areas (e.g.\;biochemistry, chemistry, physics), where the value of $r^2$ could be much closer to $100$ percent. 

\subsubsection{$\AIC$ and $\BIC$}
In statistics, \emph{Akaike Information Criterion ($\AIC$)} \cite{akaike1973maximum} and the \emph{Bayesian information criterion ($\BIC$)} \cite{schwarz1978estimating} are two commonly used criteria to select the best model among a collection of candidates that fit to a given data set. These two criteria provide a standardized way to achieve a balance between the goodness of fit and the simplicity of the model. The model with the lowest value of $\AIC$ (and/or the lowest value of $\BIC$) is preferred. Among a finite number of candidate models $M_k$ fitted to a data set $S$, both AIC and BIC involve choosing the model with the best penalized log-likelihood. The likelihood function corresponding to each model $M_k$ is a conditional probability given by $p(S|P_k)$, where $P_k$ is the parameter vector of $M_k$. In the literature, it is more common to use the log-likelihood $\mathfrak{L}(P_k)=\log p(S|P_k)$ \cite{myung2016model}. Considering the above notations, the $\AIC$ and $\BIC$ values of every model can be computed using the following equations.
\begin{align}
\label{eq:AIC}
\AIC=&-2\mathfrak{L}(P_k)+2\vert P_k \vert\\
\label{eq:BIC}
\BIC=&-2\mathfrak{L}(P_k)+ \vert P_k \vert\log(n),
\end{align}
where $\vert P_k \vert$ represents the number of parameters of $M_k$ and $n$ is the size of data set $S$. From Equations \eqref{eq:AIC} and \eqref{eq:BIC}, it can be seen that $\BIC$ penalizes the model complexity more heavily. In particular, $\BIC$ prefers the models with less parameters as the size of $S$ increases, whereas $\AIC$ does not penalize the model based on the size of $S$.

\chapter{Properties of Data Depth}
\label{ch:different-DD}
In this chapter we review general properties of a depth function. We also discuss different notions of data depth such as halfspace depth, simplicial depth, and $\beta$-skeleton depth that are studied in this thesis.
\section{General Framework}
\label{sec:framwork}
A data depth $D$, $(D:(\mathbb{R}^d,\mathbb{R}^{n\times d})\to\mathbb{R}^+)$ is a real-valued function defined at any arbitrary point $q \in \mathbb{R}^d $ with respect to a given data set $S\subset \mathbb{R}^d$. A typical depth function $D$ satisfies the following conditions which are known as the general framework for data depth, introduced in \cite{zuo2000general},.

\begin{itemize}
\item \textbf{Affine invariance} \cite{donoho1992breakdown,tanton2005encyclopedia}: For invertible matrix $A$ and constant vector $b$, $D$ is affinely invariant if 

\begin{equation*}
D(Aq+b;AS+b)= D(q;S); A \in \mathbb{R}^{d\times d}, b\in \mathbb{R}^d.
\end{equation*}

See Figure~\ref{affine}. If this equation holds for any orthogonal matrix $A$ (i.e.$A^TA=AA^T=I$), $D$ is \emph{orthogonally invariant} which is weaker than affine invariant. Transforming data points is commonly used in processing a data set; therefore, it is important for a data depth to be invariant in some sense.

\begin{figure}[!ht]
  \centering
    \includegraphics[width=0.65\textwidth]{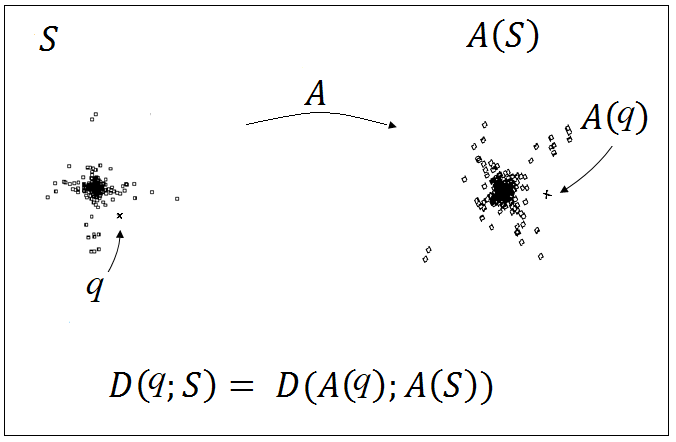}
  \caption{$D$ is invariant to affine transformation $A$}
  \label{affine}
\end{figure}

\item \textbf{Vanishing at infinity} \cite{liu1990notion,zuo2000general}: $D$ vanishes at infinity if for every sequence $\{q_k\}_{k\in\mathbb{N}}$ with $\lim_{k \to \infty}|| q_k||=\infty$, 
\begin{equation*}
\lim_{k \to \infty} D(q_{k};S)= 0.
\end{equation*}
Figure~\ref{vanish} is a visualization of this property. 

\begin{figure}[!ht]
  \centering
    \includegraphics[width=0.65\textwidth]{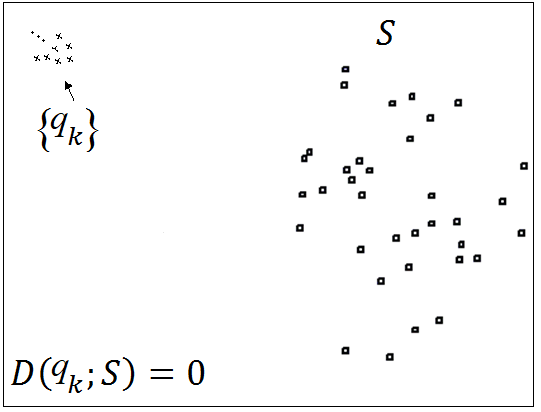}
  \caption{$D$ vanishes at infinity}
  \label{vanish}
\end{figure}

\item \textbf{Monotone on rays} \cite{bassett1991equivariant,lopuhaa1991breakdown,liu1993quality}: For $t\in \mathbb{R}^d$ as a center point (with maximal depth value), $D$ is monotone on the rays if

\begin{equation*}
D(q;S)\leq D(p;S)=D(t+r(q-t);S); r\in [0,1],
\end{equation*}

where the point $p$ is a convex combination of $q$ and $t$. As an illustration, see Figure~\ref{monotone}.

\begin{figure}[!ht]
  \centering
    \includegraphics[width=0.65\textwidth]{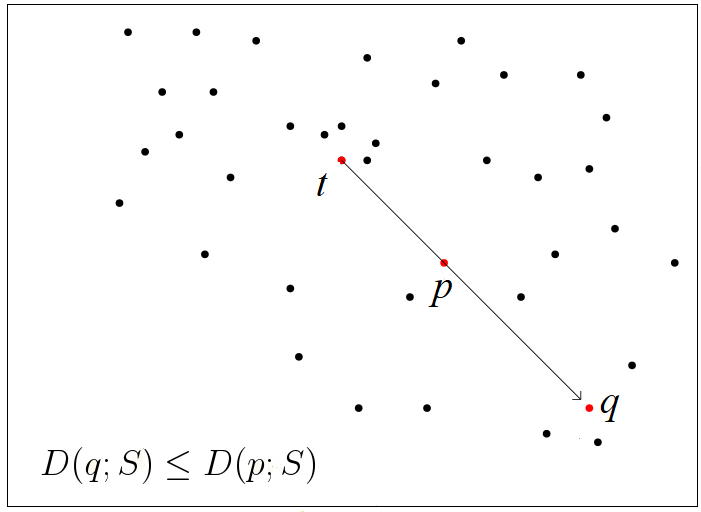}
  \caption{$D$ is monotone on the rays}
  \label{monotone}
\end{figure}

\item \textbf{Upper semi-continuity} \cite{durocher2009projection,mosler2012multivariate}: For $\alpha \in \mathbb{R}^+$, $D$ is upper semi-continuous if $D_\alpha(S)$ is a closed set, where

\begin{equation*}
D_\alpha(S)=\{q\in\mathbb{R}^d|D(q;S) \geq\alpha\}.
\end{equation*}

The outer boundary of $D_\alpha$ is called the \emph{contour} of depth $\alpha$ \cite{donoho1992breakdown}. See Figure~\ref{semicontinue}.

\begin{figure}[!ht]
  \centering
    \includegraphics[width=0.65\textwidth]{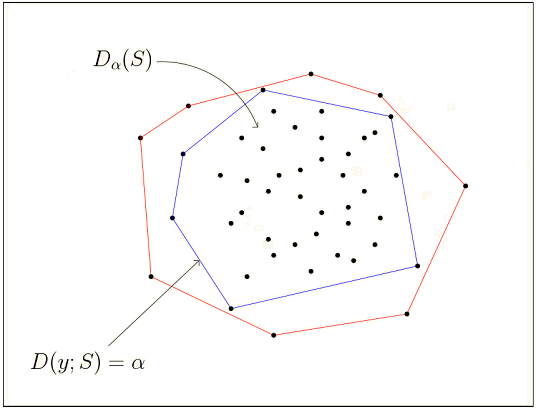}
  \caption{D is upper semi-continuous}
  \label{semicontinue}
\end{figure}
\end{itemize}
In addition to these four properties, some other conditions such as high breakdown point and level-convex are discussed for depth functions~\cite{donoho1992breakdown,donoho1983notion,lopuhaa1991breakdown}.
\begin{itemize}
\item \textbf{High breakdown point} \cite{donoho1983notion,lopuhaa1991breakdown}: The breakdown point of a location estimator $c=c(S)$ (i.e. the center point of $S=\{x_1,...,x_n\}$ in $\mathbb{R}^d$) is a number between zero and one, introducing the proportion of data points that must be moved to infinity before $c$ moves to infinity. In other words, the breakdown point $b(c)$ can be defined by
\begin{equation*}
b(c)= \min _{m}\left\{\frac{m}{n}; \sup_{S'}\Vert c(S)-c(S')\Vert=\infty \right\},
\end{equation*}
where $S'=(S\setminus S_m)\cup Y$, $S_m\subseteq S$, $\vert S_m\vert=m$, and $Y=\{y_1,...,y_m\}$ is an arbitrary set in $\mathbb{R}^d$.

\item \textbf{Level-convex} \cite{donoho1992breakdown}: Data depth $D$ is level-convex if all of its corresponding contours $D_{\alpha}$ are convex.
\end{itemize}

\section{Halfspace Depth}
\label{sec:halfspace}
The halfspace depth of a query point $q\in \mathbb{R}^d$ with respect to a given data set $S=\{x_1,...,x_n\} \subseteq \mathbb{R}^d$ is defined as the minimum portion of points of $S$ contained in any closed halfspace that has $q$ on its boundary. Using the notation of $\HD(q;S)$, the above definition can be presented by~\eqref{eq:halfspacedef}.
\begin{equation}
\label{eq:halfspacedef}
\HD (q;S)= \frac{2}{n}\min \{|S \cap H|: H\in \mathbb{H}\},
\end{equation}
where $2/n$ is the normalization factor\footnote{Instead of the normalization factor $1/n$ which is common in literature, we use $2/n$ in order to let the halfspace depth of $1$ to be achievable.}, $\mathbb{H}$ is the class of all closed halfspaces in $\mathbb{R}^d$ that pass through $q$, and $|S\cap H|$ denotes the number of points within the intersection of $S$ and $H$. As illustrated in Figure~\ref{fig:halfspace}, $\HD(q_1;S)=6/13$ and $\HD(q_2;S)=0$, where $S$ is a given set of points in the plane and $q_1,q_2$ are two query points not in $S$.
\begin{figure}[!ht]
  \centering
    \includegraphics[width=0.6\textwidth]{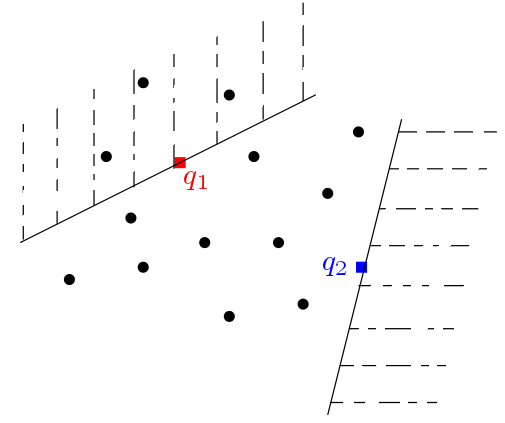}
  \caption{Two examples of halfspace depth in the plane}
  \label{fig:halfspace}
\end{figure}

For a given data set $S$ and a query point $q$, it can be verified that $\HD(q;S)$ is equal to zero if and only if $q$ is outside the convex hull\footnote{The convex hull of a data set $S \subset \mathbb{R}^d$ is the smallest convex set in $\mathbb{R}^d$ that contains $S$.} of $S$ (see Figure~\ref{fig:halfspace-depth0}).
\begin{figure}[!ht]
\centering
\includegraphics[width=0.6\textwidth]{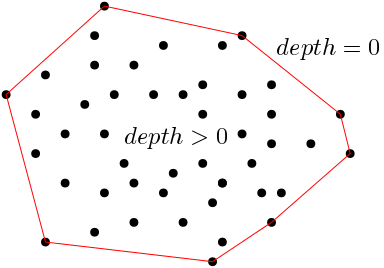}
\caption{halfspace depth}
\label{fig:halfspace-depth0}
\end{figure}
The halfspace depth satisfies all desirable properties of depth functions presented in Section \ref{sec:framwork} \cite{tukey1975mathematics,donoho1992breakdown}. Furthermore, for a data set in general position\footnote{It is said that a data set, in $\mathbb{R}^d$, is in general position if no $(d+1)$ points of the data points lie on a common hyperplane.} in $\mathbb{R}^d$, the breakdown point of the halfspace median is at least $\frac{1}{d+1}$, and at most $1/3$ for $d\geq 2$ \cite{donoho1992breakdown,donoho1982breakdown}. Another nice property of the halfspace depth is that the depth contours are all convex and nested, i.e.\;the contour of unnormalized depth $(k+1)$ is convex, and geometrically surrounded by the contour of unnormalized depth $k$ which is also convex \cite{donoho1992breakdown}.
\\\\Among all data depths, halfspace depth, perhaps, is the most extensively studied data depth. Many algorithms to compute, or approximate, the halfspace median have been developed in recent years \cite{afshani2009approximate, bremner2006primal, chan2004optimal, cuesta2008random, langerman2003optimization, matousek1991computing, rousseeuw1998computing, wilcox2003approximating}.
A summary of these results is as follows.
\begin{itemize}
\item A complicated $O(n^2\log n)$ algorithm to compute the halfspace depth of a point in $\mathbb{R}^2$ is implemented by Rousseeuw and Ruts in \cite{rousseeuw1998constructing}. 
\item An optimal randomized algorithm for computing the halfspace median is developed by Chan \cite{chan2004optimal}. This algorithm requires $O(n\log n)$ expected time for $n$ non-degenerate data points.

\item An $O(n\log^5n)$ algorithm to compute the bivariate halfspace median is presented by Matousek in \cite{matousek1991computing}. The algorithm consists of two main steps: an $O(n\log^4n)$ algorithm to compute any point with depth of at least $k$, and a binary search on $k$ to find the median.
\item By improving the Matousek's algorithm, Langerman provided an algorithm which computes the bivariate halfspace median in $O(n\log^3n)$~\cite{langerman2001algorithms}.
\item  For a set on $n$ non-degenerate points in $\mathbb{R}^d$, the halfspace median can be computed in $O(n \log n+n^{d-1})$ expected time (Theorem $3.2$ in \cite{chan2004optimal}).
\item The halfspace depth of $q\in \mathbb{R}^d$ with respect to $S\subseteq \mathbb{R}^d$ can be computed in $O((d+1)^k LP(n,d-1))$ time, where $k$ is the value of the output and $LP(n,d)$ denotes the running time for solving a linear program with $n$ constraints and $d$ variables (Theorem $5$ in \cite{bremner2008output}).
\item In a worst case scenario in $\mathbb{R}^2$, when the data points all are on the convex hull, it takes at least $O(n^2)$ to find all of the halfspace depth contours \cite{miller2001fast}. 
\item A center point of $S\subseteq \mathbb{R}^d$,  a point whose halfspace depth is at least $[\frac{n}{d+1}]/n$, can be computed in $O(n)$ \cite{burr2011dynamic}. 
\item Computing the halfspace of a query point in $\mathbb{R}^2$ can be done in $\Theta(n\log n)$ time \cite{aloupiscomputing,aloupis2002lower}.
\end{itemize}

\section{Simplicial Depth} 
\label{sec:simplicial}
The simplicial depth of a query point $q\in \mathbb{R}^d$ with respect to $S=\{x_1,...,x_n\}\subset\mathbb{R}^d$ is defined as the total number of the closed simplices formed by data points that contain $q$. This definition can be given by \eqref{eq:simplicial}. 
\begin{equation}
\label{eq:simplicial}
\SD(q; S)= \frac{1}{{n \choose d+1}}\sum \nolimits_{(x_1,...,x_{d+1}) \in {n \choose d+1}} {I(q \in \Conv[x_1,...,x_{d+1}])},
\end{equation}
where $1/{n \choose d+1}$ is the normalization factor, the convex hull $\Conv[x_1,...,x_{d+1}]$ is a closed simplex formed by $d+1$ points of $S$, and $I$ is the indicator function. For $S=\{a,b,c,d,e\}$ in $\mathbb{R}^2$, Figure \ref{fig:simplicial} illustrates that $\SD(q_1;S)=3/10$ and $\SD(q_2;S)=4/10$.
\begin{figure}[!ht]
  \centering
    \includegraphics[width=0.8\textwidth]{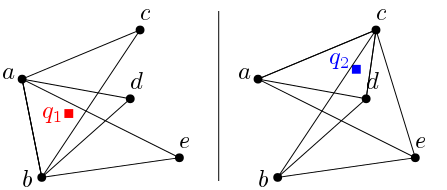}
  \caption{Two examples of simplicial depth in the plane}
  \label{fig:simplicial}
\end{figure}
\\Liu \cite{liu1995control} proved that the simplicial depth satisfies the affine invariance condition. Depending on the distribution of data points, the simplicial depth has completely different characteristics. For a \emph{Lebesgue-continuous} distribution \cite{hazewinkel2013encyclopaedia}, the simplicial depth changes continuously (Theorem $2$ in \cite{liu1990notion}), decreases monotonously on the rays, and has a unique central region \cite{liu1990notion,mosler2013depth}. Furthermore, the contours defined by simplicial depth are nested (Theorem $3$ in \cite{liu1990notion}). However, if the distribution is discrete, these characteristics are not necessary applicable \cite{zuo2000general}. Unlike the contours of halfspace depth which are convex, the contours of the simplicial depth are only starshaped (Section $2.3.3$ in \cite{mosler2013depth}). It has been proven that the breakdown point of the simplicial median is always worse than the breakdown point of halfspace median \cite{chen1995bounds}. The behaviour of simplicial depth contours is not as nice as the behaviour of half-space depth contours \cite{he1997convergence, miller2001fast}. As an example, Figure \ref{fig:simp-contour-not-nested} illustrates that the simplicial depth contours may not be nested. It can be seen that the contour enclosing all points of depth $10/20$ and up is not surrounded by the contour enclosing depth of $8/20$ and up.
\\\\Simplicial depth is widely studied in the literature. Some of the results regarding simplicial depth can be listed as follows.
\begin{itemize}
\item The simplicial depth of a query point in $\mathbb{R}^2$ can be computed using an optimal algorithm which takes $\Theta (n\log n)$ time \cite{aloupiscomputing,aloupis2002lower, langerman2000complexity}.
\item The simplicial depth of a point in $\mathbb{R}^3$ can be computed in $O(n^2)$, and in $\mathbb{R}^4$ the fastest known algorithm needs $O(n^4)$ time \cite{cheng2001algorithms}.
In the higher dimension $d$, no better algorithm is known than the brute force method with $O(n^{d+1})$ time.
\item The maximum simplicial depth in $\mathbb{R}^2$ can be computed in $O(n^4 \log(n))$~\cite{aloupiscomputing}.
\end{itemize}
\begin{figure}[!ht]
  \centering
    \includegraphics[width=0.65\textwidth]{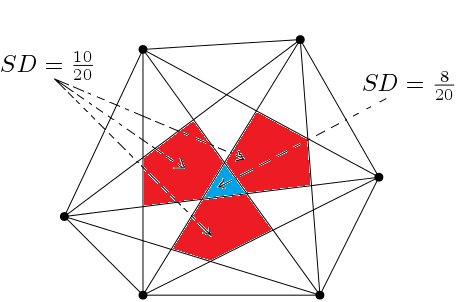}
  \caption{Simplicial depth contours}
  \label{fig:simp-contour-not-nested}
\end{figure}

\section{$\beta$-skeleton Depth}
\label{sec:beta-skeleton}
To introduce the $\beta$-skeleton depth, first, we need to define the $\beta$-influence region.
\begin{definition}
\label{def:beta-influence}
For $1 \leq \beta \leq \infty$, the $\beta$-influence region of vectors $x_i$ and $x_j$ ($S_{\beta}(x_i, x_j)$) is defined as follows:
\begin{equation}
\label{eq:beta-influence}
S_{\beta}(x_i, x_j)= B(c_i,r) \cap B(c_j,r),
\end{equation}
where $r=\frac{\beta}{2}\Vert x_i - x_j\Vert$, $c_i=\frac{\beta}{2}x_i + (1-\frac{\beta}{2})x_j$,  and $c_j=(1-\frac{\beta}{2})x_i + \frac{\beta}{2}x_j$. In the case of $\beta=\infty$, the $\beta$-skeleton influence region is defined as a slab determined by two halfspaces perpendicular to the line segment $\overline{x_ix_t}$ at the end points. Figure~\ref{fig:betasphericallens} shows the $\beta$-skeleton influence regions for different values of $\beta$.
\end{definition}
\begin{note}
For $1 \leq \beta \leq \infty$ in literature, the ball based version of $S_{\beta}(x_i, x_j)$ is also defined. In this case, the $S_{\beta}(x_i, x_j)$ is given by the union of the balls, instead of the intersection of them in equation \ref{eq:beta-influence}. For example, the hatching area in Figure \ref{fig:betasphericallens} denotes the ball based version of the $S_{2}(x_1, x_2)$. Since the definition of the $\beta$-skeleton depth is given based on the lune based $S_{\beta}(x_i, x_j)$ alone \cite{Modarress2015beta}, by $S_{\beta}(x_i, x_j)$ we only mean its lune based version hereafter in this study.  
\end{note}
  
\begin{figure}[!ht]
  \centering
    \includegraphics[width=0.8\textwidth]{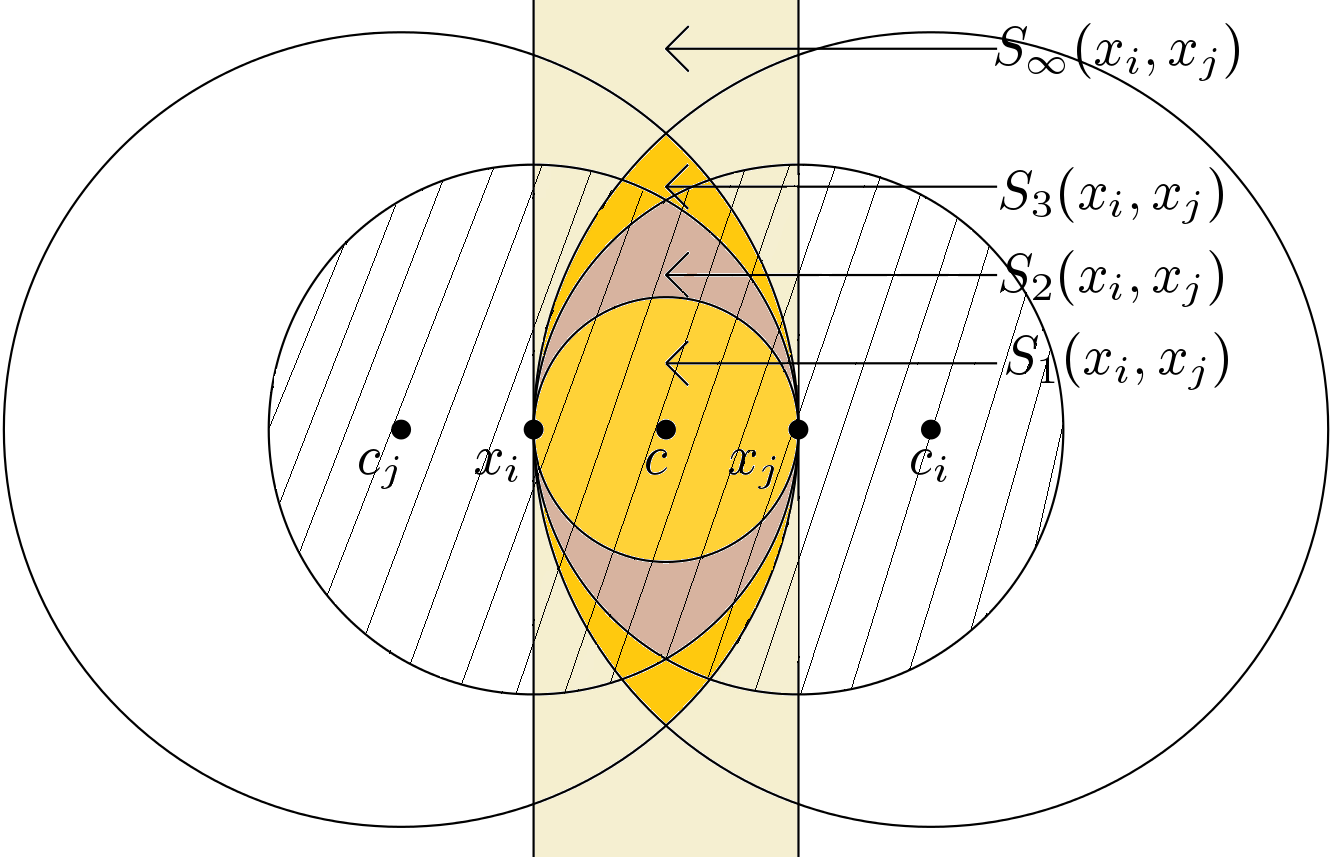}
  \caption{The $\beta$-influence regions defined by $x_i$ and $x_j$ for $\beta$=1, 2, 3, and $\infty$.}
  \label{fig:betasphericallens}
\end{figure}
\begin{definition}
For parameter $1\leq \beta\leq \infty$ and $S=\{x_1, ..., x_n\}\subset \mathbb{R}^d$, the $\beta$-skeleton depth of a query point $q \in \mathbb{R}^d$ with respect to $S$, is defined as the proportion of the $\beta$-influence regions that contain $q$. Using notation $\SkD_{\beta}$ for $beta$-skeleton depth, this definition can be presented by Equation \eqref{eq:beta}.
\begin{equation}
\label{eq:beta}
\SkD_{\beta}(q;S)= \frac{1}{{n \choose 2}}\sum_{1\leq i<j\leq n} {I(q \in S_{\beta}(x_i, x_j)}), 
\end{equation}
where $1/{n \choose 2}$ is the normalization factor.
\end{definition}
It can be verified that $q \in S_{\beta}(x_i, x_j)$ is equivalent to the inequality of $\frac{\beta}{2}\Vert x_i - x_j \Vert \geq \max \{ \Vert q - c_i \Vert , \Vert q - c_j \Vert\}$. The straightforward algorithm for computing the $\beta$-skeleton depth of $q \in \mathbb{R}^d$ takes $O(d n^2)$ time because the inequality should be checked for all $1\leq i,j\leq n$ \cite{yang2014depth, liu2011lens}.
The $\beta$-skeleton depth is a family of statistical depth functions  including the spherical depth when $\beta=1$ \cite{elmore2006spherical}, and the lens depth when $\beta=2$ \cite{liu2011lens}.
\\\\It is proved that the $\beta$-skeleton depth functions satisfy the data depth framework provided by Zuo and Serfling \cite{zuo2000general} because these depth functions are monotonic (Theorem $3$, \cite{yang2014depth}), maximized at the center, and vanishing at infinity (Theorem $5$, \cite{yang2014depth}). The $\beta$-skeleton depth functions are also orthogonally (affinely) invariant if the Euclidean (Mahalanobis) distance is used to construct the influence regions of $\beta$-skeleton depth influence regions (Theorem $1$, \cite{yang2014depth}). The breakdown point of $\beta$-skeleton median is at least $1-(1/\sqrt{2})$ (Theorem $12$, \cite{yang2014depth}). Regarding the geometric properties of $\beta$-skeleton depth, some of the results are as follows. The depth regions with the same depth value are not necessarily connected (see the regions with the depth of $5/6$ in Figures \ref{fig:sph-distinct-local-colored} and \ref{fig:lens-distinct-local-colored}). However, the depth regions are nested (Lemma $3$, \cite{yang2014depth}) which means that the contour with depth of $\alpha$ is geometrically surrounded by the contour with depth of $\gamma$, where $\alpha > \gamma$. For example, in Figure \ref{fig:all-beta-3points}, the contour with the depth of $2/3$ is surrounded by the contour with the depth of $1/3$. Another property is that the only depth regions which are convex are the central regions (see Theorem~\ref{thrm:cenral-region-convex}). We explore some other geometric properties related to the $\beta$-skeleton depth in Chapter \ref{ch:geometric}.
\begin{figure}[!ht]
  \centering
    \includegraphics[width=0.9\textwidth]{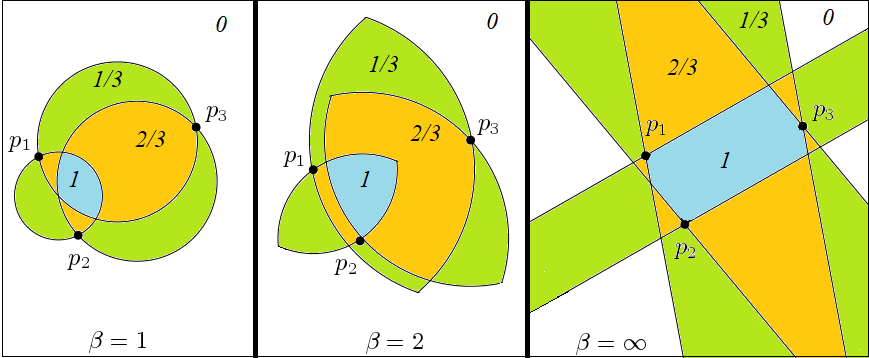}
  \caption{Partitions of the plane by $\beta$-skeleton depth with respect to $\{p_1, p_2, p_3\}$.}
  \label{fig:all-beta-3points}
\end{figure}
\subsection{Spherical Depth and Lens Depth}
\label{sec:sph-len}
As discussed above, the $\beta$-skeleton depth includes the spherical depth when $\beta=1$, and the lens depth when $\beta=2$. From  Equations~\eqref{eq:beta-influence} and~\eqref{eq:beta}, the definitions of spherical depth ($\SphD$) and lens depth ($\LD$) of a query point $q$ with respect to a given data set $S$ in $\mathbb{R}^d$ are as follows:
\begin{equation}
\label{eq:spherical}
\SphD(q;S)=\SkD_1(q;S)= \frac{1}{{n \choose 2}}\sum_{1\leq i<j\leq n} {I(q \in \Sph(x_i, x_j)})
\end{equation}
\begin{equation}
\label{eq:lens}
\LD(q;S)=\SkD_2(q;S)= \frac{1}{{n \choose 2}}\sum_{1\leq i<j\leq n} {I(q \in L(x_i, x_j)}),
\end{equation}
where the influence regions $\Sph(x_i,x_j)$ and $L(x_i,x_j)$ are equivalent to $S_1(x_i,x_j)$ and  $S_2(x_i,x_j)$, respectively. The influence region of the spherical depth is also know as the \emph{Gabriel sphere}.
\chapter{Geometric Results in $\mathbb{R}^2$}
\label{ch:geometric}
In this chapter we discuss the $\beta$-skeleton depth in $\mathbb{R}^2$ from a geometric point of view. The geometric results provide some guidance to the algorithms for computing $\beta$-skeleton depth. As an example, in Section \ref{sec:Combinatorial}, we explore that computing the entire arrangement of $\beta$-influence regions in $\mathbb{R}^2$ is not an efficient approach to compute the planar $\beta$-skeleton depth. Given a set of collinear points in $\mathbb{R}^2$, we compute the combinatorial complexity ($CC$) of the arrangement of $\beta$-influence regions. Some geometric properties of the $\beta$-skeleton depth are also explored in this chapter. 

\section{Combinatorial Complexity}\label{sec:Combinatorial}
For the $\beta$-influence regions obtained from a set of collinear points in $\mathbb{R}^2$, we present exact bounds for the number of edges, faces, and vertices in the corresponding arrangement.

\begin{definition}
The $CC$ of an arrangement in $\mathbb{R}^2$ is equal to the total number of faces, edges, and vertices (intersection points and data points) in the arrangement.
\end{definition}

\begin{exmp}
For a set of four collinear points $p_1, p_2, p_3,$ and $p_4$ in $\mathbb{R}^2$, consider the arrangement of corresponding $\beta$-influence regions ($1\leq \beta <\infty$). The $CC$ of this arrangement is equal to $34$ because the arrangement includes the total number of $12$ faces, $16$ edges, and $6$ vertices. For the case of $\beta=1$, see Figure \ref{fig:4collinearpoints}.  
\end{exmp}

\begin{figure}[!ht]
  \centering
    \includegraphics[width=0.6\textwidth]{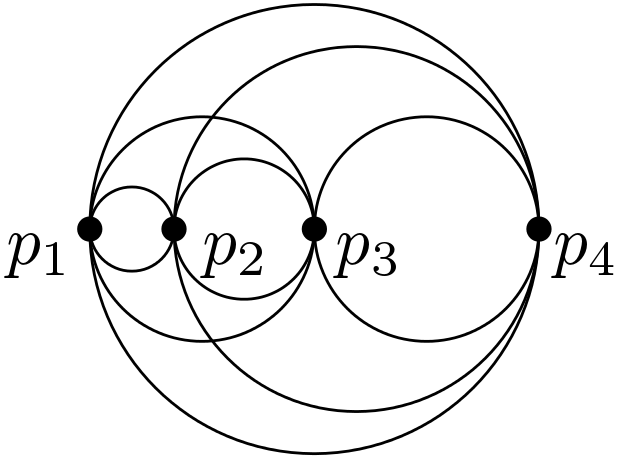}
  \caption{Arrangement of $\beta$-influence regions ($\beta=1$) for collinear points}
  \label{fig:4collinearpoints}
\end{figure}

\begin{theorem}
\label{thrm:Cob-cmplx-collinear-bound}
For the arrangement of all Gabriel circles obtained from $n$ distinct collinear points in $\mathbb{R}^2$,
\begin{equation*}
CC=\frac{1}{4}n^4 - \Theta(n^3).
\end{equation*}
\end{theorem}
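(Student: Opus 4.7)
The plan is to place the $n$ collinear points on the $x$-axis at $x_1<x_2<\cdots<x_n$, so that every Gabriel circle $C_{ij}$ (with $i<j$) becomes the closed disk whose diameter is the segment $[x_i,x_j]$, with centre $((x_i+x_j)/2,0)$ and radius $(x_j-x_i)/2$. First I would characterize which pairs of circles contribute crossings. A direct computation comparing the distance between centres with the sum and with the absolute difference of the radii shows that $C_{ij}$ and $C_{kl}$ meet in two transversal points (necessarily off the $x$-axis) exactly when their diameter intervals interlock, i.e., after relabelling $i<k<j<l$. Disjoint diameter intervals give disjoint circles and nested diameter intervals give one circle strictly inside the other, so neither case contributes transversal vertices.

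Because each $4$-element subset of $\{1,\ldots,n\}$ admits exactly one interlocking pairing, the number of crossing pairs is $\binom{n}{4}$, producing $2\binom{n}{4}$ transversal intersection points. The $n$ data points are themselves vertices of the arrangement: the $n-1$ circles through $x_j$ all share the vertical tangent line at $x_j$ and are therefore mutually tangent there. This gives $C_v=2\binom{n}{4}+n$. For the edges I would count, on each circle $C_{ij}$ with $s=j-i$, the $2(s-1)(n-s-1)$ transversal crossings together with the two tangent endpoints $x_i,x_j$, yielding $2(s-1)(n-s-1)+2$ arcs. Summing over all $\binom{n}{2}$ circles and using the identity $\sum_{i<j}(j-i-1)(n-j+i-1)=2\binom{n}{4}$ gives a closed form for $C_e$. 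Since the adjacent-pair circles $C_{1,2},C_{2,3},\ldots,C_{n-1,n}$ chain the whole arrangement into one connected planar graph, Theorem \ref{thrm:Euler} gives $C_f=2+C_e-C_v$. Adding $C_v+C_e+C_f$ and expanding in $n$ then isolates the dominant $\Theta(n^4)$ term, with the subtracted remainder being $\Theta(n^3)$.

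The principal obstacle is the highly degenerate vertex at each data point, where $n-1$ pairwise-tangent circles meet along a common tangent line. I would have to verify that the arc/vertex/face bookkeeping that underlies Euler's formula still applies at a vertex of degree $2(n-1)$ with so many tangencies, and, crucially, that for $n$ collinear points in general position no transversal crossing is shared by three or more circles and no transversal crossing coincides with a data point. If any such coincidence occurred it would merge distinct vertices, collapse arcs, and change the leading constant. Establishing this genericity (e.g., by writing the $x$-coordinate of a crossing as a rational function of four of the $x_t$ and noting that equality with a fifth crossing or with some $x_m$ cuts out a measure-zero condition on the point configuration) is what allows the counts of the previous paragraph to be taken as equalities rather than upper bounds, and thus to deliver an \emph{exact} asymptotic statement of the form $\tfrac{1}{4}n^4-\Theta(n^3)$.
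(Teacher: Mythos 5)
Your route is genuinely different from the paper's. The paper builds the arrangement incrementally from left to right, tabulates the number of faces, edges, and vertices created by each newly added Gabriel circle, and solves the resulting recurrences by telescoping; you instead count globally: interlocking diameter intervals give $2{n \choose 4}$ transversal crossings and hence $C_v=2{n \choose 4}+n$, arcs per circle give $C_e$, and Euler's formula gives $C_f$. Your crossing criterion (two disks with diameters on a line cross transversally exactly when the diameter intervals interlock) is correct, and your $C_v$ and your $C_f=2+C_e-C_v$ agree with the paper's closed forms. Your edge count does not agree, and here you are right and the paper is wrong. Summing $2(s-1)(n-s-1)+2$ over all circles gives $C_e=4{n \choose 4}+n(n-1)$, which for $n=4$ equals $16$ --- exactly the number of edges in the paper's own worked example for four collinear points ($12$ faces, $16$ edges, $6$ vertices, $CC=34$) --- whereas the paper's formula $C_e=\frac{1}{12}(n^4-6n^3+23n^2-18n)$ gives $14$ and its final formula gives $CC=32$. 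The source of the discrepancy is that the paper's recurrence $e_k(n)$ counts only the arcs into which the \emph{new} circle is cut and omits the fact that each of the $2(k-1)(n-k-1)$ new crossing points also splits an existing edge in two; the missing contribution is precisely $2{n \choose 4}$. Carrying your computation to the end gives $CC=2C_e+2=8{n \choose 4}+2n^2-2n+2=\frac{1}{3}n^4-\Theta(n^3)$, so the leading constant in the statement should be $\frac{1}{3}$ rather than $\frac{1}{4}$; your approach, besides being a clean alternative, exposes this error.

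Two of the caveats you raise deserve comment. The degenerate tangencies at the data points are unavoidable but harmless: Euler's formula requires only a connected plane graph, and your observation that the adjacent-pair circles chain the arrangement together settles connectivity, while a degree count ($2C_e=4\cdot 2{n \choose 4}+2n(n-1)$) independently confirms your edge total. The genericity of the transversal crossings, however, is a real hypothesis and not automatic: for special collinear configurations three or more Gabriel circles can pass through a common off-axis point (e.g.\ the circles with diameters $[-1,1]$, $[-2,\frac{1}{2}]$, and $[-4,\frac{1}{4}]$ all pass through $(0,\pm 1)$), which merges vertices and lowers all three counts. The paper's proof silently assumes no such concurrences occur; if the theorem is to hold for \emph{all} distinct collinear point sets, either this assumption must be added or one must show that concurrences cannot affect the leading term.
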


\begin{proof}
We construct the arrangement of Gabriel circles incrementally, and define some strategies to count the number of faces, edges, and vertices in the arrangement. Starting with the two leftmost points in the data set, we have one Gabriel circle to consider. In this step, there are only two faces (inside the circle and outside the circle), two edges, and two vertices. Henceforward we add the data points one by one from left to right. We count the number of created faces, edges, and vertices after adding each Gabriel circle obtained from the new point and any previously added data points. We write the numbers of new faces, edges, and vertices in rows corresponding to faces, edges, and vertices, respectively. The new cells are obtained from the intersection of recently added circle and the cells in the previous arrangement. Finally, it is enough to sum up all obtained corresponding numbers in order to get the total number of all faces, edges, and vertices. These strategies are represented in the triangular forms in Tables \ref{tbl:triangle-faces}, \ref{tbl:triangle-edges}, and \ref{tbl:triangle-vertices}. These representations help to obtain a general formula for every element of the tables. Figure \ref{fig:collinear-conf} illustrates how to obtain the numbers for $n=3$ from the previous step (see the rows $n=2$ and $n=3$ in Tables \ref{tbl:triangle-faces}, \ref{tbl:triangle-edges}, and \ref{tbl:triangle-vertices}).
Note that we respectively define $f_k(n)$, $e_k(n)$, and $v_k(n)$ to be the number of recently created faces, edges, and vertices after including the new Gabriel circle $S_1(n^{th} point,\;(n-k)^{th} point)$ in the previously updated arrangement.

\begin{figure}[!ht]
  \centering
    \includegraphics[width=0.8\textwidth]{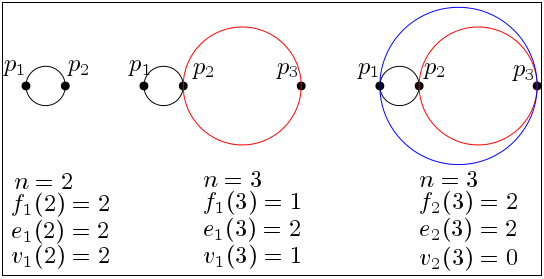}
  \caption{Visualization of rows $n=2$ and $n=3$ of Table \ref{tbl:triangle-faces}, Table \ref{tbl:triangle-edges}, and Table \ref{tbl:triangle-vertices}}
  \label{fig:collinear-conf}
\end{figure}

\begin{table}[!ht]
\centering
\begin{tabular}{|l||cccccccccccccc|}
\hline
$|S|$&\multicolumn{14}{|c|}{Faces added per data point in previous arrangement}\\
\hline
\hline
$2$&&&&&&&&2&&&&&&\\
$3$&&&&&&&1&&2&&&&&\\
$4$&&&&&&1&&4&&2&&&&\\
$5$&&&&&1&&6&&6&&2&&&\\
$6$&&&&1&&8&&10&&8&&2&&\\
$7$&&&1&&10&&14&&14&&10&&2&\\
$8$&&1&&12&&18&&20&&18&&12&&2\\
$n$&$f_{1}(n)$&&&...&&&&$f_{k}(n)$&&&...&&&$f_{n-1}(n)$\\
\hline
\end{tabular}
\caption{The number of faces in the incremental construction of the arrangement of Gabriel circles obtained from collinear data points.}
\label{tbl:triangle-faces}
\end{table}

\begin{table}[!ht]
\centering
\begin{tabular}{|l||cccccccccccccc|}
\hline
$|S|$&\multicolumn{14}{|c|}{Vertices added per data point in previous arrangement}\\
\hline
\hline
$2$&&&&&&&&2&&&&&&\\
$3$&&&&&&&2&&2&&&&&\\
$4$&&&&&&2&&4&&2&&&&\\
$5$&&&&&2&&6&&6&&2&&&\\
$6$&&&&2&&8&&10&&8&&2&&\\
$7$&&&2&&10&&14&&14&&10&&2&\\
$8$&&2&&12&&18&&20&&18&&12&&2\\
$n$&$e_{1}(n)$&&&...&&&&$e_{k}(n)$&&&...&&&$e_{n-1}(n)$\\
\hline
\end{tabular}
\caption{The number of edges in the incremental construction of the arrangement of Gabriel circles obtained from collinear data points.}
\label{tbl:triangle-edges}
\end{table}

\begin{table}[!ht]
\centering
\begin{tabular}{|l||cccccccccccccc|}
\hline
$|S|$&\multicolumn{14}{|c|}{Edges added per data point in previous arrangement}\\
\hline
\hline
$2$&&&&&&&&2&&&&&&\\
$3$&&&&&&&1&&0&&&&&\\
$4$&&&&&&1&&2&&0&&&&\\
$5$&&&&&1&&4&&4&&0&&&\\
$6$&&&&1&&6&&8&&6&&0&&\\
$7$&&&1&&8&&12&&12&&8&&0&\\
$8$&&1&&10&&16&&18&&16&&10&&0\\
$n$&$v_{1}(n)$&&&...&&&&$v_{k}(n)$&&&...&&&$v_{n-1}(n)$\\
\hline
\end{tabular}
\caption{The number of vertices in the incremental construction of the arrangement of Gabriel circles obtained from collinear data points.}
\label{tbl:triangle-vertices}
\end{table}
We define a function $\delta^r_{ij}$ which helps us to formulate $f_{k}(n)$, $e_{k}(n)$, and $v_{k}(n)$ in Tables \ref{tbl:triangle-faces}, \ref{tbl:triangle-edges}, and \ref{tbl:triangle-vertices}, respectively.
\[
\delta^r_{ij}=
\begin{cases}
r &\text{if $i=j$}\\
    i &\text{if $i<j$}\\
    0 &\text{otherwise}
\end{cases}
\]
The elements $f_{k}(n)$, $e_{k}(n)$, and $v_{k}(n)$ can be presented by following equations.
\begin{equation}
\label{eq:Fk}
f_{k}(n)=
\begin{cases}
f_{k}(n-1)+2\delta^1_{(k-1)(n-2)} &\text{if $2\leq k\leq n-1$ and $n\geq 3$}\\
    2 &\text{if $k=1$ and $n=2$}\\    
    1 &\text{if $k=1$ and $n\geq 3$}\\
    0 & \text{otherwise}
\end{cases}
\end{equation}
\begin{equation}
\label{eq:Ek}
e_{k}(n)=
\begin{cases}
e_{k}(n-1)+2\delta^1_{(k-1)(n-2)} &\text{if $2\leq k\leq n-1$ and $n\geq 3$}\\
    2 &\text{if $k=1$ and $n=2$}\\    
    2 &\text{if $k=1$ and $n\geq 3$}\\
    0 & \text{otherwise}
\end{cases}
\end{equation}
\begin{equation}
\label{eq:Vk}
v_{k}(n)=
\begin{cases}
v_{k}(n-1)+2\delta^0_{(k-1)(n-2)} &\text{if $2\leq k\leq n-1$ and $n\geq 3$}\\
    2 &\text{if $k=1$ and $n=2$}\\    
    1 &\text{if $k=1$ and $n\geq 3$}\\
    0 & \text{otherwise}
\end{cases}
\end{equation}
We employ the Telescoping Substitution to solve the recurrences in Equations \eqref{eq:Fk}, \eqref{eq:Ek}, and \eqref{eq:Vk} as follows. For $2\leq k \leq n-1$ and $n\geq 3$,
\begin{align*}
f_k(n)&=f_{k}(n-1)+2\delta^1_{(k-1)(n-2)}\\&=f_{k}(n-2)+2\delta^1_{(k-1)(n-3)}+2\delta^1_{(k-1)(n-2)}\\&=f_{k}(n-3)+2\delta^1_{(k-1)(n-4)}+2\delta^1_{(k-1)(n-3)}+2\delta^1_{(k-1)(n-2)}\\&\vdots\\&=f_k(k)+2\left(\delta^1_{(k-1)(k-1)}+\delta^1_{(k-1)(k)}+...+\delta^1_{(k-1)(n-3)}+\delta^1_{(k-1)(n-2)}\right)\\&=0+2\left(1+\sum_{i=1}^{n-k-1}\delta^1_{(k-1)(k+i-1)}\right)=2\left(1+(k-1)(n-k-1)\right).
\end{align*} 
Therefore,
\begin{equation}
\label{eq:Fk'}
f_{k}(n)=
\begin{cases}
2(1+(k-1)(n-k-1)) &\text{if $2\leq k\leq n-1$ and $n\geq 3$}\\
    2 &\text{if $k=1$ and $n=2$}\\    
    1 &\text{if $k=1$ and $n\geq 3$}\\
    0 & \text{otherwise}.
\end{cases}
\end{equation}
Similarly, we can solve $e_k(n)$ and $v_k(n)$ and obtain the following relations.
\begin{equation}
\label{eq:Ek'}
e_{k}(n)=
\begin{cases}
2(1+(k-1)(n-k-1)) &\text{if $2\leq k\leq n-1$ and $n\geq 3$}\\
    2 &\text{if $k=1$ and $n=2$}\\    
    2 &\text{if $k=1$ and $n\geq 3$}\\
    0 & \text{otherwise}
\end{cases}
\end{equation}
\begin{equation}
\label{eq:Vk'}
v_{k}(n)=
\begin{cases}
2(k-1)(n-k-1) &\text{if $2\leq k\leq n-1$ and $n\geq 3$}\\
    2 &\text{if $k=1$ and $n=2$}\\    
    1 &\text{if $k=1$ and $n\geq 3$}\\
    0 & \text{otherwise}
\end{cases}\hspace{1cm}
\end{equation}
To compute the $CC$ of the arrangement, we need to calculate $C_f$, $C_e$, and $C_v$ representing the total numbers of faces, edges, and vertices, respectively. These values can be similarly computed using Equations \eqref{eq:Fk'}, \eqref{eq:Ek'}, and \eqref{eq:Vk'}, respectively. We provide the computation of $C_f$ in the following. However, to avoid repetitions, we omit the computations of $C_e$ and $C_v$, and present only their final values.

\begin{align*}
C_f &=\sum \limits_{k,n}f_{k}(n)= 2+ \sum\limits_{m=3}^n (1+ \sum\limits_{k=2}^{m-1} f_{k}(m))
\\&=2+ \sum\limits_{m=3}^n (1+ \sum\limits_{k=2}^{m-1} 2((k-1)(m-k-1)+1)))
\\&=2+(n-2)+2\sum\limits_{m=3}^n \sum\limits_{k=2}^{m-1}(k-1)(m-k-1)+2\sum\limits_{m=3}^n \sum \limits_{k=2}^{m-1}1
\\&=n+\frac{2}{6}\sum\limits_{m=3}^n(m^3-6m^2+11m-6)+2\sum\limits_{m=3}^n(m-2)
\\&=n+(\frac{2}{6})(\frac{1}{4})(n^4-6n^3+11n^2-6n)+2\frac{(n-2)(n-1)}{2}
\\&=\frac{1}{12}\left(n^4-6n^3+23n^2-30n+24\right).
\end{align*}
\begin{equation*}
C_e=\sum \limits_{k,n}e_{k}(n)= 2+ \sum\limits_{m=3}^n (2+ \sum\limits_{k=2}^{m-1} e_{k}(m))=\frac{1}{12}(n^4-6n^3+23n^2-18n).
\end{equation*}
\begin{equation*}
C_v =\sum \limits_{k,n}v_{k}(n)=2+ \sum\limits_{m=3}^n (1+ \sum\limits_{k=2}^{m-1}v_{k}(m))=\frac{1}{12}(n^4-6n^3+11n^2+6n).
\end{equation*} 
The proof is complete because
\begin{equation*}
\label{eq:comb.cplx.colliear}
CC=C_f+C_e+C_v=\frac{1}{4}(n^4-6n^3+19n^2-14n+8)=\frac{1}{4}n^4-\Theta(n^3).
\end{equation*}
\end{proof}
\begin{note}
Recalling that the Gabriel circles and $\beta$-influence regions are equivalent when $\beta=1$, one can generalize Lemma \ref{thrm:Cob-cmplx-collinear-bound} by considering $\beta$-influence regions ($1\leq\beta<\infty$) instead of Gabriel circles. This generalization can be made because the corresponding combinatorics does not change for $\beta$-influence regions if $1\leq\beta<\infty$. However, the case of $\beta=\infty$ does not follow this generalization. In this case, $CC=\Theta(n)$ because $C_f=n$ (including one face outside all edges and $n-1$ faces among the edges), $C_e=n$ (including $n$ distinct parallel lines passing through data points), and $C_v=n+2$ (including $n$ data points and $2$ intersection points at infinity). We recall that for $n$ distinct collinear points in $\mathbb{R}^2$, the $\beta$-influence regions form some parallel slabs if $\beta=\infty$.
\end{note}
\begin{lemma}
\label{lm:influence-cut-4points}
Every distinct pair of $\beta$-influence regions cut each other in at most $4$ points. 
\end{lemma}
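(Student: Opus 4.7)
The plan is to decompose the boundary of each $\beta$-influence region into its constituent circular pieces and count pairwise crossings. By Definition~\ref{def:beta-influence}, each $S_\beta(x_i,x_j)=B(c_i,r)\cap B(c_j,r)$ is convex and its boundary is the union of two circular arcs of the common radius $r$ meeting at two corner points, degenerating to a single circle when $\beta=1$ and to two parallel lines when $\beta=\infty$.

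First I would dispose of the two degenerate values of $\beta$. When $\beta=1$ each boundary is a single circle and two distinct circles cross in at most $2\le 4$ points. When $\beta=\infty$ each boundary is a pair of parallel lines and two such pairs cross in at most $2\cdot 2=4$ points. Both cases fall within the claimed bound.

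For $1<\beta<\infty$ I would proceed as follows. Write $\partial L_1=A_1\cup A_2$ and $\partial L_2=A_3\cup A_4$, where each arc $A_k$ lies on a bounding circle $C_k$ of the corresponding lens. Any point of $\partial L_1\cap\partial L_2$ must lie in some $C_i\cap C_j$ with $i\in\{1,2\}$ and $j\in\{3,4\}$, and since two distinct circles meet in at most two points the crude bound is $4\cdot 2=8$. The crux is to sharpen this to $4$ by showing that for each of the four circle pairs $(C_i,C_j)$, at most one of the two candidate points actually lies on both arcs $A_i$ and $A_j$ simultaneously, i.e., sits inside both of the remaining disks that define the two lenses. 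The two candidates are reflections of each other across the line through the centres of $C_i$ and $C_j$; the reflective symmetry of each $\beta$-lens about the perpendicular bisector of its two generating centres then constrains how the remaining centres can sit relative to this line, and a case analysis on which side of the line each remaining centre falls shows that both reflected points cannot pass the two containment tests at once.

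The main obstacle is precisely this final case analysis. In every concrete pair of $\beta$-lenses it is easy to verify that at most one of the two candidate points per circle pair survives the containment tests, but converting this observation into a uniform argument covering all configurations requires exploiting the equal-radius, reflectively symmetric structure of each $\beta$-lens in an essential way; mere convexity is not enough, since two convex regions each bounded by two circular arcs could in principle cross more than four times.
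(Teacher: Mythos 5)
There is a genuine gap, and it is exactly the step you flag as the ``main obstacle'': the reduction from $8$ candidate points to $4$ is never carried out, and the intermediate claim you propose to establish --- that for each of the four circle pairs $(C_i,C_j)$ at most one of the two points of $C_i\cap C_j$ survives both containment tests --- is in fact false, so no case analysis on the positions of the remaining centres can close the argument. Concretely, for $\beta=1.1$ take $L_1=S_{\beta}((-1,0),(1,0))=B((-0.1,0),1.1)\cap B((0.1,0),1.1)$, and let $L_2$ be the $\beta$-influence region of two points on the $x$-axis at distance $0.7/0.55$, positioned so that its two defining disks are $B((0.5,0),0.7)$ and $B((0.5+0.7/5.5,\,0),0.7)$ (the ratio of centre distance to radius, $0.2/1.1$, is the one forced by $\beta=1.1$). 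The circles $\partial B((-0.1,0),1.1)$ and $\partial B((0.5,0),0.7)$ meet at the two points $(0.8,\pm\sqrt{0.4})$; both points lie at distance $\sqrt{0.89}<1.1$ from $(0.1,0)$ and at distance less than $0.7$ from the second centre of $L_2$, so both lie on the relevant bounding arc of $L_1$ \emph{and} on the relevant bounding arc of $L_2$. A single circle pair can therefore contribute two genuine points of $\partial L_1\cap\partial L_2$.

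The paper closes the count with a different grouping of the eight candidates, and that is the idea your plan is missing: rather than bounding each arc--arc pair by one point, fix one arc $A$ of $\partial L_1$ and bound $\lvert A\cap\partial L_2\rvert$ by two. Because $A$ is at most a semicircle (this is where $\beta\geq 1$ enters), once $A$ has entered and left the convex region $L_2$ it cannot bend back toward it, so each of the two arcs of $\partial L_1$ meets $\partial L_2$ in at most two points, giving $4$ in total. Your treatment of $\beta=1$ and $\beta=\infty$ is fine and consistent with the paper, but for $1<\beta<\infty$ you would need to replace the per-circle-pair claim with this per-arc claim (or some other correct sharpening) before the proof goes through.
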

\begin{proof}
Every $\beta$-influence region can be represented by a pair of circular arcs. For $\beta=1$, these circular arcs are some half circles, whereas for $\beta>1$, they are smaller than half circles. We prove the lemma by considering two cases: $\beta=1$ and $\beta>1$. For the case $\beta=1$, the proof is trivial because two distinct $1$-influence regions (i.e. circles) cut each other in at most two points. For the case $\beta>1$, suppose that $S_{\beta}(a,b)$ and $S_{\beta}(c,d)$ are two arbitrary and distinct $\beta$-influence regions. Figure \ref{fig:influence-cut-4points} is an illustration of the $\beta$-influence regions ($\beta=3/2$) and their corresponding circular arcs. Suppose that $S_{\beta}(a,b)$, as a pair of circular arcs $EaF$ and $FbE$, arbitrarily cuts $S_{\beta}(c,d)$. Each one of $EaF$ and $FbE$ is smaller than its corresponding half circle. This implies that, after cutting the boundary of $S_{\beta}(c,d)$ in at most two points, none of $EaF$ and $FbE$ would turn back towards the boundary of $S_{\beta}(c,d)$. As such, two $\beta$-influence regions $S_{\beta}(a,b)$ and $S_{\beta}(c,d)$ cut each other in at most $4$ points.
\begin{figure}
\centering
\includegraphics[width=0.5\textwidth]{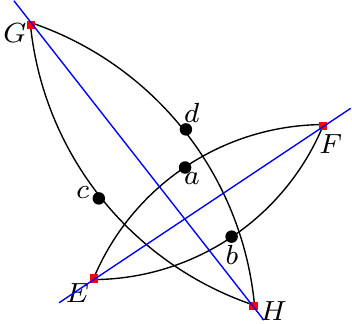}
\caption{Two arbitrarily crossed $\beta$-influence regions, $\beta=3/2$.}
\label{fig:influence-cut-4points}
\end{figure}
   
\end{proof}
\begin{lemma} The trivial upper bound for the $CC$ of the arrangement of all $\beta$-influence regions obtained from $n$ arbitrarily distributed data points in $\mathbb{R}^2$ is $O(n^4)$.
\label{lm:combinatorial-complexity-bound}
\end{lemma}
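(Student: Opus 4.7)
The plan is to count vertices, edges, and faces of the arrangement separately and show each is $O(n^4)$; then $CC = C_v + C_e + C_f = O(n^4)$ follows directly from the definition.

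First I would count the influence regions themselves. Since each $\beta$-influence region is determined by an unordered pair of data points, there are $\binom{n}{2} = O(n^2)$ of them. For the vertices of the arrangement, I would combine this count with Lemma \ref{lm:influence-cut-4points}, which says that any two distinct $\beta$-influence regions cross in at most $4$ points. This gives at most $4 \binom{\binom{n}{2}}{2} = O(n^4)$ intersection points, to which I would add the $n$ original data points, still giving $C_v = O(n^4)$.

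Next I would bound $C_e$ by a direct summation. The boundary of a single $\beta$-influence region is a closed curve (either a circle when $\beta = 1$ or two circular arcs when $\beta > 1$). By Lemma \ref{lm:influence-cut-4points}, this boundary is crossed by each of the other $O(n^2)$ influence regions in at most $4$ points, so it carries $O(n^2)$ vertices and is therefore subdivided into $O(n^2)$ edges. Summing over all $O(n^2)$ influence regions gives $C_e = O(n^4)$. Finally, for $C_f$, I would invoke Euler's formula (Theorem \ref{thrm:Euler}) applied to the planar graph obtained from the arrangement (after the standard trick of adding a vertex at infinity to connect unbounded edges, which only changes the counts by a constant). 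Then $C_f = 2 - C_v + C_e = O(n^4)$, and adding the three bounds yields $CC = O(n^4)$.

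The main obstacle, which is really only a bookkeeping one rather than a conceptual one, is ensuring the use of Euler's formula is justified: the arrangement graph must be connected and finite, so I would explicitly note the compactification step (or, alternatively, bound $C_f$ directly by observing that each new influence region added incrementally can create at most $O(n^2)$ new faces, since it is cut into $O(n^2)$ arcs by the existing arrangement and each arc creates at most one new face). Either route avoids technicalities while preserving the $O(n^4)$ bound and keeps the argument a genuinely \emph{trivial} upper bound, matching the statement of the lemma.
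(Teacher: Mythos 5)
Your proposal is correct and takes essentially the same approach as the paper: both bound $C_v$ by $4{{n\choose 2}\choose 2}+n=O(n^4)$ via Lemma \ref{lm:influence-cut-4points} and then obtain $C_f$ from Euler's formula (Theorem \ref{thrm:Euler}). The only minor divergence is in bounding $C_e$ --- the paper invokes the planar-graph inequality $C_e\leq 3C_v-6$, whereas you count edges directly along each of the $O(n^2)$ boundary curves --- but both routes give $C_e=O(n^4)$ and the argument is otherwise identical.
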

\begin{proof}
From Lemma~\ref{lm:influence-cut-4points}, the boundaries of every two $\beta$-influence regions intersect each other in at most $4$ points. It means that we have at most $4{{n\choose 2} \choose 2}$ intersection points in the arrangement of the $n\choose 2$ $\beta$-influence regions. Since every arrangement is a representation of a planar graph, we can use Euler's Formula $C_v-C_e+C_f=2$ (Theorem \ref{thrm:Euler}) in planar graphs to compute the number of faces, edges, and vertices in the arrangement. The number of vertices and the number of edges in the planar graphs are related to each other by the inequality $C_e\leq 3C_v-6$ (see Theorem $9.7$ in \cite{allenby2011count}). Considering the intersection points from the above discussion and the number of data points, $C_v$ and consequently $C_e$, $C_f$, and $CC$ can be computed as follows:
\begin{align*}
C_v&\leq 4{{n\choose 2} \choose 2}+n=\frac{1}{2}(n^4-2n^3-n^2+4n)\\
C_e&\leq 3C_v-6\leq \frac{3}{2}(n^4-2n^3-n^2+4n)-6=\frac{3}{2}(n^4-2n^3-n^2+4n-4)\\
C_f&=2+C_e-C_v\leq 2+(2C_v-6)=2C_v-4\leq n^4-2n^3-n^2+4n-4\\
CC&=C_f+C_e+C_v\leq (3n^4-6n^3-3n^2-12n-10)\in O(n^4).
\end{align*}
\end{proof}
\begin{note}
Lemma \ref{thrm:Cob-cmplx-collinear-bound} indicates that the trivial upper bound $O(n^4)$ for the $CC$ related to the arrangement of $\beta$-influence regions is achievable. 
\end{note}
\begin{conj}
\label{conj:minimum_intersection}
The collinear configuration of $n$ planar points minimizes the number of intersections (and consequently, edges and faces) among the corresponding $\beta$-influence regions.
\end{conj}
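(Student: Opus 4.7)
The plan is to reduce the global minimization claim to a pairwise claim about quadruples of data points, and then attack the pairwise claim via a symmetry argument combined with a continuous-deformation argument. Let $V(C)$ denote the total number of arrangement vertices of the $\beta$-influence regions of a planar configuration $C$ of $n$ points. Then $V(C)$ equals $n$ (the data points themselves) plus the sum, over unordered pairs of influence regions, of their boundary crossing counts, and each such pairwise count depends only on the four (or fewer) defining points. Hence it suffices to prove the following pairwise statement: for every four labeled points $x_i, x_j, x_k, x_l$, the number of crossings of $\partial S_{\beta}(x_i, x_j)$ with $\partial S_{\beta}(x_k, x_l)$ is minimized when these four points lie on a common line in the same combinatorial order. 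Given the pairwise claim, summing over all ${{n\choose 2}\choose 2}$ pairs bounds $V$ from below by its collinear value, and the inequalities of Euler's formula together with $C_e \le 3C_v - 6$ used in Lemma \ref{lm:combinatorial-complexity-bound} transport this lower bound to $C_e$ and $C_f$, yielding the statement about intersections, edges, and faces simultaneously.

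The first step inside the pairwise argument is a symmetry reduction. Each region $S_{\beta}(x, y)$ is invariant under reflection across the line through $x$ and $y$, so in the collinear case both regions share a common axis of symmetry $\ell$. Crossings of their boundaries that lie off $\ell$ then come in mirror-symmetric pairs, while crossings on $\ell$ reduce to a one-dimensional overlap question about the two diameter intervals. A short case analysis on how the intervals $\overline{x_i x_j}$ and $\overline{x_k x_l}$ relate (disjoint, sharing an endpoint, properly overlapping, or nested), combined with the explicit form of the centers and radii in Definition \ref{def:beta-influence}, should show that the collinear crossing count is always $0$ or $2$ and matches the counts obtained in Theorem \ref{thrm:Cob-cmplx-collinear-bound}; in particular it never attains the generic maximum of $4$ allowed by Lemma \ref{lm:influence-cut-4points}.

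The hard step is to show that no planar placement of the four points gives a strictly smaller crossing count than the corresponding collinear placement. My approach is to fix a target configuration and deform it continuously back to the collinear one along a one-parameter family $(x_i(t), x_j(t), x_k(t), x_l(t))$, $t\in[0,1]$, tracking the crossing count as a function of $t$. Generically the count changes only at tangency events, at each of which an even number of crossings are created or destroyed. The plan is to isolate a monotone quantity, such as a topological intersection index or a winding-style count between the two boundaries, whose value at $t=0$ equals the collinear crossing count and whose value at $t=1$ is bounded above by the target crossing count. The main obstacle is verifying that this invariant is genuinely monotone across every possible tangency type: internal tangencies of the two bounding arcs that together form a single region, external tangencies between arcs of different regions, and the degeneracies that arise when three defining points become momentarily collinear. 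Because the boundary structure of $S_{\beta}$ differs qualitatively across the three regimes $\beta = 1$, $1 < \beta < \infty$, and $\beta = \infty$, it is likely that this verification must be carried out separately for each regime, echoing the case split already used for the lower-bound proofs in Chapter \ref{ch:lowerbound}; the $\beta=\infty$ case in particular is exceptional since the influence regions are unbounded slabs and the note preceding the conjecture already shows a qualitatively different behavior on collinear inputs.
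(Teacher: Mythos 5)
The statement you are trying to prove is stated in the paper as Conjecture~\ref{conj:minimum_intersection} and is left \emph{open}: the paper offers no proof, only the remark that proving it would make the $O(n^4)$ upper bound of Lemma~\ref{lm:combinatorial-complexity-bound} tight. So there is no paper proof to compare against, and the question is whether your plan would actually close the conjecture. It would not, because your very first reduction step is invalid. The pairwise claim --- that for every four labeled points the number of crossings of $\partial S_{\beta}(x_i,x_j)$ with $\partial S_{\beta}(x_k,x_l)$ is minimized by the collinear placement in the corresponding order --- is false. If the collinear order interleaves the two intervals (say $x_i<x_k<x_j<x_l$ on the line), the two influence regions cross in $2$ points, yet the same four points can be placed non-collinearly so that the two regions are disjoint and contribute $0$ crossings (e.g.\ for $\beta=1$, put $x_i=(0,0)$, $x_j=(2,0)$, $x_k=(1,100)$, $x_l=(3,100)$; the two Gabriel circles have radius $1$ and centers $100$ apart). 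So the collinear configuration does not minimize each pairwise count; at best it minimizes the \emph{sum}, and that can only be true for a global reason --- you cannot simultaneously make all $\binom{\binom{n}{2}}{2}$ pairs nested or disjoint, in particular because regions sharing a defining data point always meet. Any correct proof has to capture that global obstruction, which a pair-by-pair lower bound structurally cannot.

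The second half of the plan does not rescue this. The proposed deformation argument hinges on ``isolating a monotone quantity'' that interpolates between the collinear crossing count and the target one, but no such invariant is identified, and crossing counts are genuinely non-monotone along a generic isotopy (they both increase and decrease at tangency events). The symmetry reduction for the collinear case is fine as far as it goes --- it essentially re-derives the $0$/$1$/$2$ crossing counts already implicit in the tables of Theorem~\ref{thrm:Cob-cmplx-collinear-bound} --- but that is the easy direction. As written, the proposal establishes neither the pairwise claim (which is false) nor a global lower bound, so the conjecture remains open after your argument.
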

Proving Conjecture \ref{conj:minimum_intersection} would imply a lower bound for the $CC$ of the arrangement of $\beta$-influence regions for every arbitrary set of points in $\mathbb{R}^2$. Furthermore, it would also imply that the obtained upper bound in Lemma \ref{lm:combinatorial-complexity-bound} is optimal. Without loss of generality, it can be assumed that $\beta=1$.

\section{Geometric Properties of $\beta$-skeleton Depth}
In this section, some geometric properties of the $\beta$-skeleton depth are investigated. Among all of the $\beta$-influence regions, we only consider the case $\beta=1$ (i.e. Gabriel circles) related to the spherical depth. One can easily generalize the results to the other value of $\beta$.
\begin{figure}[!ht]
  \centering
    \includegraphics[width=0.7\textwidth]{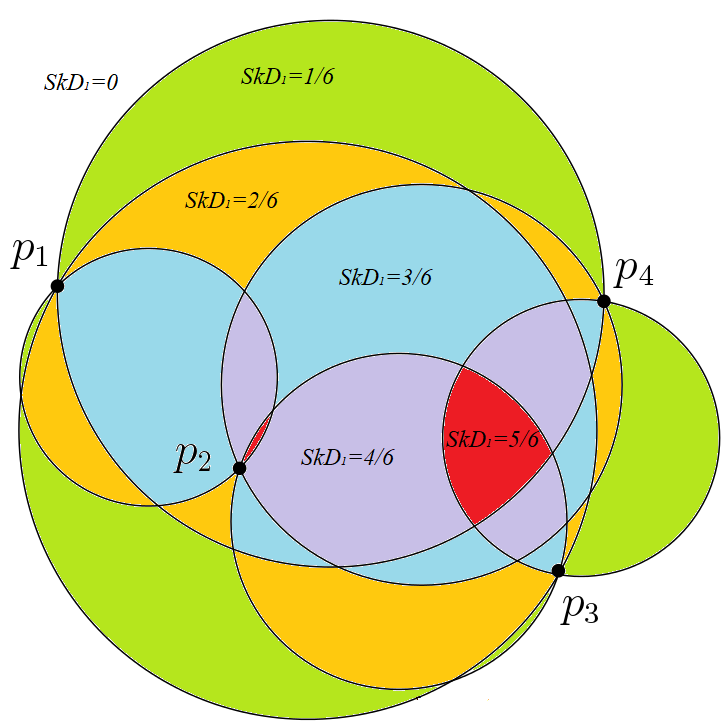}
  \caption{Spherical depth of planar points with respect to $\{p_1, p_2, p_3, p_4\}$}
    \label{fig:sph-distinct-local-colored}
\end{figure}
\begin{figure}[!ht]
\centering
    \includegraphics[width=0.7\textwidth]{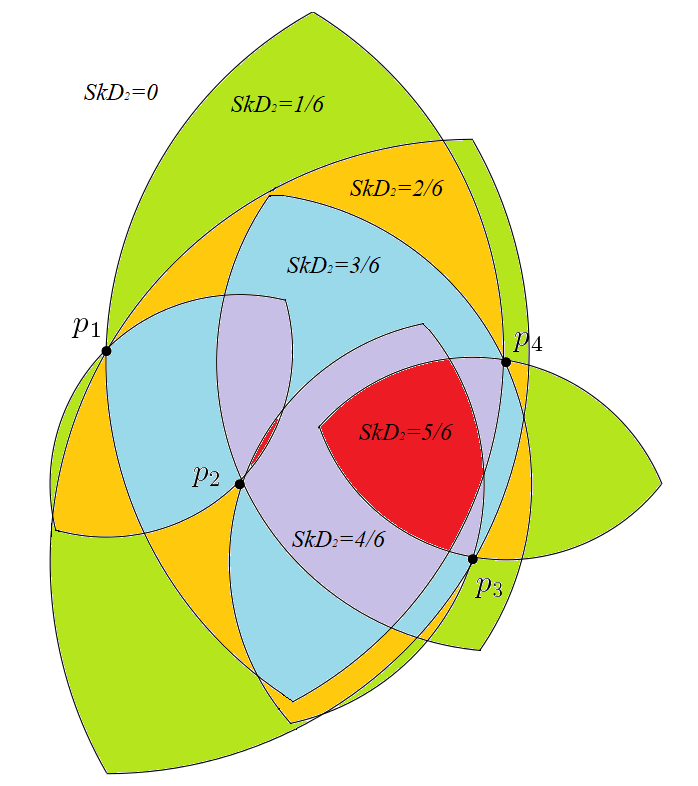}
  \caption{Lens depth of planar points with respect to $\{p_1, p_2, p_3, p_4\}$} 
  \label{fig:lens-distinct-local-colored}  
\end{figure}
\\As we discussed in Section \ref{sec:beta-skeleton}, similar to the property of halfspace depth in Section \ref{sec:halfspace}, the contours of $\beta$-skeleton depth are nested. However, we may have more than one contour per depth value. For example, in Figures \ref{fig:sph-distinct-local-colored} and \ref{fig:lens-distinct-local-colored}, it can be seen that there are two separate contours with the depth of $5/6$. If there exists more than one contour for a depth value, we have some locally deepest regions in the arrangement. Consequently, the central regions may not be connected (see Figures \ref{fig:sph-distinct-local-colored} and \ref{fig:lens-distinct-local-colored}). Another geometric property of $\beta$-skeleton depth is that the only convex depth regions are the locally deepest regions. These last two properties can be deduced from Lemma \ref{thrm:connectivity-in-points} and Theorem \ref{thrm:cenral-region-convex}.
\begin{lemma}
\label{thrm:connectivity-in-points}
For every arbitrary pair of neighboring faces\footnote{Two faces are neighbors if they have a common edge in their boundaries.} $A$ and $B$ in the arrangement of Gabriel circles obtained from a data set $S\subset \mathbb{R}^2$,
\begin{equation}
\label{eq:depth-neighboring-faces}
\forall(a\in A,b\in B); |\SkD_{1S}(a)-\SkD_{1S}(b)|\geq 1,
\end{equation}
where $\SkD_{1S}(a)={n\choose 2}\SkD_1(a;S)$.
\end{lemma}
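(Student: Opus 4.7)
The plan is to track how the unnormalized spherical depth changes as we pass from face $A$ to face $B$ across their common edge. The crucial observation is that, within a single face of the arrangement, membership in every individual Gabriel circle $S_1(x_k,x_l)$ is constant: a face is by definition a maximal connected region disjoint from every circle boundary, so the indicator $\chi_{kl}(p)=I(p\in S_1(x_k,x_l))$ cannot change within the face. Hence $\SkD_{1S}$ is constant on each face, and it suffices to show that the two constant values differ by at least one.

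First I would identify which circles' boundaries contain the shared edge $e=\overline{A}\cap\overline{B}$. Because $e$ is a $1$-dimensional cell of the arrangement, it must lie on $\partial S_1(x_i,x_j)$ for at least one pair $(i,j)$; collect all such pairs into a set $\mathcal{C}_e$. For any pair $(k,l)\notin\mathcal{C}_e$, a sufficiently short path from $a\in A$ to $b\in B$ that crosses $e$ transversally at a relative interior point of $e$ stays disjoint from $\partial S_1(x_k,x_l)$, so $\chi_{kl}(a)=\chi_{kl}(b)$. Thus such pairs contribute nothing to the difference $\SkD_{1S}(a)-\SkD_{1S}(b)$.

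Next, for every pair $(i,j)\in\mathcal{C}_e$, the same short path meets $\partial S_1(x_i,x_j)$ at exactly one transverse crossing, so $\chi_{ij}$ flips. Moreover, since two distinct circles meet in at most two points, any two circles whose boundaries contain the $1$-dimensional arc $e$ must coincide as point sets; consequently all flips across $e$ occur in the same direction, either all $0\to 1$ or all $1\to 0$. Therefore the contributions reinforce rather than cancel, giving
\begin{equation*}
\SkD_{1S}(a)-\SkD_{1S}(b)=\pm\,|\mathcal{C}_e|,
\end{equation*}
and since $|\mathcal{C}_e|\geq 1$ this yields $|\SkD_{1S}(a)-\SkD_{1S}(b)|\geq 1$, as claimed.

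The step I expect to require the most care is the reinforcement argument in the last paragraph, i.e.\ ruling out the scenario in which different Gabriel circles through $e$ flip in opposite directions and cancel. The fact that two distinct circles share at most two points is what saves us: it forces every circle whose boundary contains the arc $e$ to be literally the same circle, so the local ``inside'' and ``outside'' sides of $e$ are the same for all of them, and the $\pm 1$ contributions add with a common sign. Everything else is a routine combination of the definition of the arrangement and the continuity of indicator functions away from their defining boundaries.
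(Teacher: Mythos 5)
Your proof is correct and follows essentially the same route as the paper's: identify the Gabriel circle(s) whose boundary carries the shared edge and observe that the corresponding indicator flips across it while every other indicator is unchanged, so the unnormalized depths of the two faces differ by at least one. The paper's version is a two-sentence argument that tacitly treats the edge as lying on a single circle; your explicit treatment of the case $|\mathcal{C}_e|>1$ (distinct pairs yielding the same circle, hence flips of a common sign that cannot cancel) fills in a detail the paper leaves implicit.
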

\begin{proof}
The proof is immediate from the fact that every edge of the arrangement corresponds to some Gabriel circle $G$. One of the two faces bounded by this edge is entirely contained in $G$, and the other face is entirely outside of $G$.
\end{proof}
\begin{note}
Lemma \ref{thrm:connectivity-in-points} implies that the faces with the same depth can be connected only in discrete points.
\end{note}
\begin{note}
Considering two neighboring regions\footnote{Two regions are neighbors if they have a common border.} instead of two neighboring faces, one can generalize Lemma \ref{thrm:connectivity-in-points}. To obtain such generalization, it is enough to apply Lemma \ref{thrm:connectivity-in-points} on any arbitrary pair of neighboring faces in the neighboring regions.
\end{note} 
\begin{theorem}
\label{thrm:cenral-region-convex}
Consider the arrangement of Gabriel circles obtained from data set $S$. A face (region) in this arrangement is locally deepest if and only if it has no concave edge. In other words, a face is locally deepest if and only if it is a convex face.
\end{theorem}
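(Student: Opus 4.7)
The plan is to reduce both the combinatorial condition (locally deepest) and the geometric condition (convexity of the face) to a single local classification of the edges of $F$. First I would observe that in general position every edge $e$ on the boundary of $F$ is an arc of a unique Gabriel circle $\partial G$; because $F$ is a single face of the arrangement, $F$ lies entirely on one side of $\partial G$, so either $F \subset G$ (the arc curves away from $F$ and the edge is \emph{convex} from $F$'s side) or $F \cap G = \emptyset$ (the arc curves into $F$ and the edge is \emph{concave}). No other case can occur, because $F$ is connected and disjoint from every $\partial G$.

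Next I would invoke Lemma \ref{thrm:connectivity-in-points}. Crossing $e$ toggles membership in precisely the single disk $G$, so the depth of the neighboring face equals $\SkD_{1S}(F)-1$ when $e$ is convex and $\SkD_{1S}(F)+1$ when $e$ is concave. Consequently $F$ is locally deepest if and only if every neighbor has strictly smaller depth, if and only if every edge of $F$ is convex, if and only if $F$ has no concave edge. This settles the first equivalence in the statement.

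For the reformulation ``no concave edge $\Leftrightarrow$ convex face'' I would argue both directions geometrically. For ($\Rightarrow$), let $\mathcal{G}$ be the collection of disks whose boundaries supply the edges of $F$; by hypothesis each $G \in \mathcal{G}$ satisfies $F \subset G$. Given $p,q \in F$, each disk $G \in \mathcal{G}$ is convex and contains both $p$ and $q$, so $\overline{pq} \subset G$; hence $\overline{pq}$ crosses no $\partial G$, and therefore crosses no edge of $F$, giving $\overline{pq} \subset F$. For ($\Leftarrow$), given a concave edge $e \subset \partial G$ with $F \cap G = \emptyset$, pick two points of $F$ close to the two endpoints of $e$ on the two edges of $F$ that flank $e$; the straight segment between them crosses the interior of $G$ and therefore exits $F$, witnessing non-convexity.

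The main obstacle I anticipate is the converse step just above: producing an explicit non-convex witness and verifying that it really leaves $F$ requires a brief local inspection at the endpoints of $e$, and one has to rule out pathological tangencies where a chord grazes another circle. A standard general-position assumption eliminates those tangential and multi-arc degeneracies, and the disjointness $F \cap G = \emptyset$ guarantees that as soon as the chord enters $G$ it has left $F$, so the construction is valid without further bookkeeping.
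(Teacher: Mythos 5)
Your proposal is correct and follows essentially the same route as the paper: both arguments classify each boundary edge by whether the face lies inside or outside the Gabriel circle supplying that edge, and then apply Lemma \ref{thrm:connectivity-in-points} to conclude that a concave edge forces a strictly deeper neighbour while convex edges force strictly shallower ones. Your third paragraph additionally proves that ``no concave edge'' coincides with convexity of the face as a point set, a step the paper treats as a mere restatement (``in other words''), so your version is, if anything, slightly more complete.
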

\begin{proof}
$\Rightarrow$) Suppose that $A$ is a locally deepest face in the arrangement of Gabriel circles obtained from $S$. We prove that $A$ has no concave edge. To obtain a contradiction, assume that $e$ is a concave edge of $A$. Consequently, there exists a neighboring face $B$, in the arrangement, whose boundary contains $e$. Lemma \ref{thrm:connectivity-in-points} and the characteristics of concave edge in the arrangement of Gabriel circles imply that $\SkD_1(B;S)> \SkD_1(A;S)$. This result contradicts the assumption that $A$ is locally deepest.
\\$\Leftarrow$) Suppose that $A$ is a face that has no concave edge in the arrangement of Gabriel circles obtained from $S$. We prove that $A$ is locally deepest. Assume that the boundary of $A$ is composed of the convex edges $e_1,...e_k$, and thus $A$ has neighboring faces $B_1,...,B_m$ ($m\leq k$). Hence $A$ and every $B_i$ ($i=1,...,m$) have at least one common edge $e_j$ ($j=1,...,k$). From Lemma \ref{thrm:connectivity-in-points} and the characteristics of concave edge, it can be seen that 
\[ \forall B_i; \; \SkD_1(A;S) > \SkD_1(B_i;S).\]
This means that $A$ is a locally deepest because it is the deepest face among all of its neighboring faces $B_i$.
\end{proof}
\chapter{Algorithmic Results in $\mathbb{R}^2$}
\label{ch:algorithmic}
The previous best algorithm for computing the $\beta$-skeleton depth of a point $q\in \mathbb{R}^d$ with respect to a data set $S=\{x_1..x_n\}\subseteq \mathbb{R}^d$ is the brute force algorithm \cite{yang2017beta}. This naive algorithm needs to check all of the $n \choose 2$ $\beta$-skeleton influence regions obtained from the data points to figure out how many of them contain $q$. Checking all of such influence regions causes the naive algorithm to take $\Theta(dn^2)$ time. In this chapter, we present an optimal algorithm for computing the planar spherical depth ($\beta=1$) and a subquadratic algorithm to compute the planar $\beta$-skeleton depth when $\beta>1$. In these algorithms, we need to solve some halfspace and some circular range counting problems, where all of the halfspaces have one common point. The circles also have the same characteristic. Furthermore, computing the planar $\beta$-skeleton depth is reduced to a combination of some range counting problems. In the special case $\beta=1$, we investigate a specialized halfspace range query method that leads to a $\Theta(n\log n)$ algorithm (Algorithm \ref{Alg:sph-pseudocode}) for $\beta$-skeleton depth. Finally, we present a simple and optimal algorithm (Algorithm \ref{Alg:halfspace-pseudocode}) that computes the planar halfspace depth in $\Theta(n\log n)$ time. This algorithm is similar to the Aloupis's algorithm in \cite{aloupiscomputing}. After sorting the points by angle, Aloupis employed the counterclockwise sweeping of a specific halfline. However, in our algorithm, we use the specialized halfspace range query that is explored in Algorithm \ref{Alg:sph-pseudocode}.
\section{Optimal Algorithm to Compute Planar Spherical Depth}
\label{sec:sph-alg}
Instead of checking all of the spherical  influence regions, we focus on the geometric aspects of such regions in $\mathbb{R}^2$. The geometric properties of these regions lead us to develop an $\Theta(n\log n)$ algorithm for the computation of planar spherical depth of $q\in \mathbb{R}^2$.

\begin{theorem}
For arbitrary points $a$, $b$, and $t$ in $\mathbb{R}^2$, $t \in \Sph(a,b)$ if and only if $\angle atb \geq \frac{\pi}{2}$.
\label{thrm:point-circle}
\end{theorem}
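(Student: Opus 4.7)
The plan is to first observe that when $\beta=1$ the formula in Definition~\ref{def:beta-influence} degenerates: both centers $c_i$ and $c_j$ coincide with the midpoint $m=\frac{1}{2}(a+b)$, and the radius becomes $r=\frac{1}{2}\Vert a-b\Vert$, so the two balls in the intersection are identical. Hence $\Sph(a,b)$ is simply the closed disk of diameter $\overline{ab}$, and the theorem reduces to the classical Thales-type characterization of this disk.

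From there I would prove both directions simultaneously via a single chain of equivalences. First, rewrite membership as
\begin{equation*}
t\in\Sph(a,b)\;\Longleftrightarrow\;\Vert t-m\Vert^{2}\leq\frac{1}{4}\Vert a-b\Vert^{2}.
\end{equation*}
Applying the parallelogram (median) identity
\begin{equation*}
\Vert a-t\Vert^{2}+\Vert b-t\Vert^{2}=2\Vert t-m\Vert^{2}+\frac{1}{2}\Vert a-b\Vert^{2},
\end{equation*}
this condition becomes $\Vert a-t\Vert^{2}+\Vert b-t\Vert^{2}\leq\Vert a-b\Vert^{2}$.

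Next, assuming $t\notin\{a,b\}$ so that $\angle atb$ is well defined (the endpoint cases are immediate since $a,b$ lie on the bounding circle), I would invoke the law of cosines in triangle $atb$:
\begin{equation*}
\Vert a-b\Vert^{2}=\Vert a-t\Vert^{2}+\Vert b-t\Vert^{2}-2\Vert a-t\Vert\,\Vert b-t\Vert\cos(\angle atb).
\end{equation*}
Substituting this into the previous inequality collapses the statement to $2\Vert a-t\Vert\,\Vert b-t\Vert\cos(\angle atb)\leq 0$, which, since the factors outside the cosine are positive, is equivalent to $\cos(\angle atb)\leq 0$, i.e.\ $\angle atb\geq\pi/2$. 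Chaining the equivalences gives the theorem.

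The main obstacle, if any, is purely cosmetic: one must handle the degenerate positions of $t$ (collinear with $a,b$, or coinciding with an endpoint) so that the angle $\angle atb$ is interpreted consistently. These cases are disposed of by direct inspection, and the substantive content of the proof is the algebraic repackaging of Thales' theorem via the median identity and the law of cosines.
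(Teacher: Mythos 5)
Your proposal is correct, and it takes a genuinely different route from the paper. The paper first disposes of the boundary case via Thales' theorem and then proves each direction separately by contradiction: it extends $\overline{at}$ to an auxiliary crossing point $t'$ on the circle and derives a contradiction from the angle sum in triangle $\bigtriangleup tt'b$ exceeding $\pi$. Your argument instead observes that for $\beta=1$ both centers in Definition~\ref{def:beta-influence} collapse to the midpoint $m$, so $\Sph(a,b)$ is the closed disk of diameter $\overline{ab}$, and then runs a single chain of algebraic equivalences: the parallelogram identity converts $\Vert t-m\Vert^{2}\leq\frac{1}{4}\Vert a-b\Vert^{2}$ into $\Vert a-t\Vert^{2}+\Vert b-t\Vert^{2}\leq\Vert a-b\Vert^{2}$, and the law of cosines converts that into $\cos(\angle atb)\leq 0$ (equivalently, $\langle a-t,\,b-t\rangle\leq 0$). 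I checked the identities and each step is a genuine equivalence, so both directions come for free; your approach avoids the paper's case split (boundary versus interior versus exterior), avoids the contradiction scaffolding and auxiliary points, and generalizes verbatim to $\mathbb{R}^d$, which is relevant since the influence regions are defined for hyperballs. What the paper's synthetic argument buys in exchange is a picture-driven proof that leans only on Thales' theorem and requires no coordinates. Your one residual obligation --- the degenerate positions $t\in\{a,b\}$ where $\angle atb$ is undefined --- is handled adequately by your remark that these points lie on the bounding circle, and the paper is no more careful on this point than you are.
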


\begin{proof}
If $t$ is on the boundary of $\Sph(a,b)$, \emph{Thales' Theorem}\footnote{Thales' Theorem also known as the \emph{Inscribed Angle Theorem}: If $a$, $b$, and $c$ are points on a circle where $\overline{ac}$ is a diameter of the circle, then $\angle abc$ is a right angle.} suffices as the proof in both directions. For the rest of the proof, by $t \in \Sph(a,b)$ we mean  $t\in int\Sph(a,b)$.    
\\$\Rightarrow$) For $t \in \Sph(a,b)$, suppose that $\angle atb < \frac{\pi}{2}$ (proof by contradiction). We continue the line segment $\overline{at}$ to cross the boundary of the $\Sph(a,b)$. Let $t'$ be the crossing point (see the left figure in Figure~\ref{fig:point-circle}). Since $\angle atb < \frac{\pi}{2}$, then, $\angle btt'$  is greater than $\frac{\pi}{2}$. Let $\angle btt'=\frac{\pi}{2}+\epsilon_1; \epsilon_{1}>0 $. From Thales' Theorem, we know that $\angle at'b$ is a right angle. The angle $tbt'= \epsilon_{2}>0$ because $t\in \Sph(a,b)$. Summing up the angles in $\bigtriangleup tt'b$, as computed in~\eqref{eq:point-circle-pi-in}, leads to a contradiction. So, this direction of proof is complete.

\begin{equation}
\angle tt'b + \angle t'bt + \angle btt'\geq \frac{\pi}{2}+\epsilon_{2}+ \frac{\pi}{2}+\epsilon_{1} >\pi
\label{eq:point-circle-pi-in} 
\end{equation}
\\$\Leftarrow$) If $\angle atb = \frac{\pi}{2}+\epsilon_{1}; \epsilon_{1}>0$, we prove that $t \in \Sph(a,b)$. Suppose that $t \notin \Sph(a,b)$ (proof by contradiction). Since $t \notin \Sph(a,b)$, at least one of the line segments $\overline{at}$ and $\overline{bt}$ crosses the boundary of $\Sph(a,b)$. Without loss of generality, assume that $\overline{at}$ is the one that crosses the boundary of $\Sph(a,b)$ at the point $t'$ (see the right figure in Figure~\ref{fig:point-circle}). Considering Thales' Theorem, we know that $\angle at'b=\frac{\pi}{2}$ and consequently, $\angle bt't=\frac{\pi}{2}$. The angle $\angle t'bt=\epsilon_{2}>0$ because $t \notin \Sph(a,b)$. If we sum up the angles in the triangle $\bigtriangleup tt'b$, the same contradiction as in~\eqref{eq:point-circle-pi-in} will be implied.
\end{proof}

\begin{figure*}[!ht]
  \centering
    \includegraphics[width=0.85\textwidth]{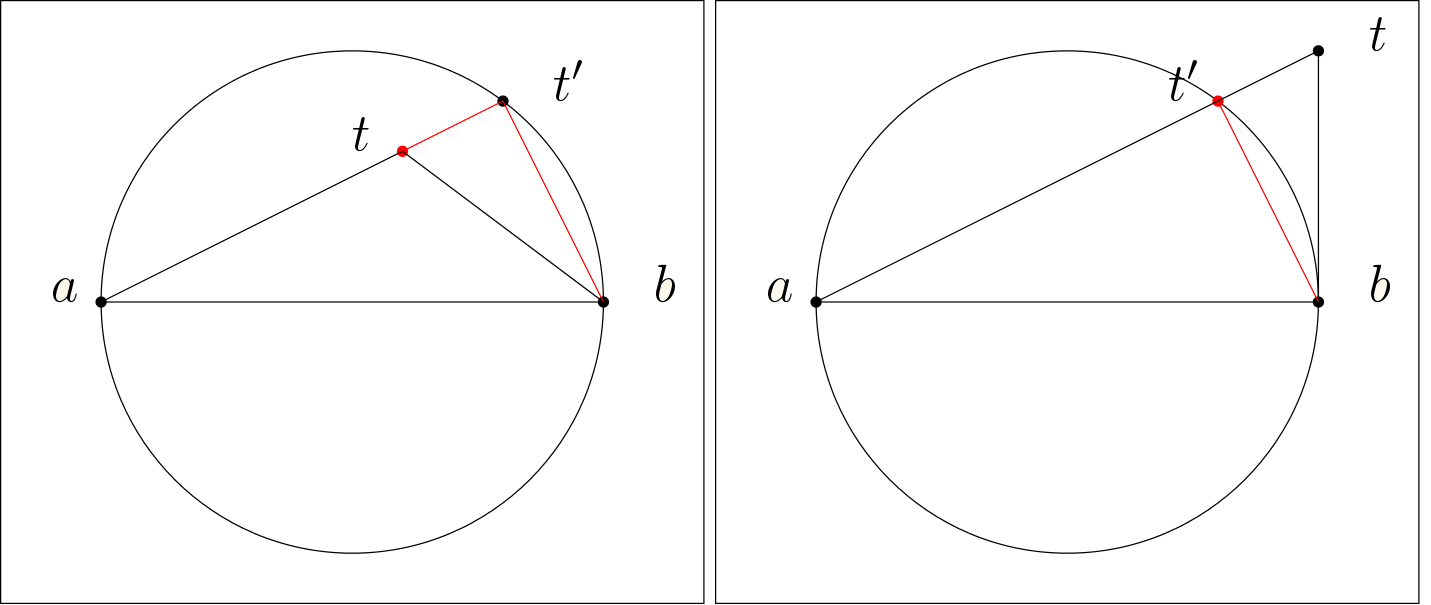}
  \caption{Point $t$ and spherical influence region $Sph(a,b)$}
  \label{fig:point-circle}
\end{figure*}

\paragraph{Algorithm~\ref{Alg:sph-pseudocode}:} Using Theorem~\ref{thrm:point-circle}, we present an algorithm to compute the spherical depth of a query point $q\in \mathbb{R}^2$ with respect to $S=\{x_1,...,x_n\} \subseteq \mathbb{R}^2$.  This algorithm is summarized in the following steps. The pseudocode of this algorithm is provided at the end of this chapter. 

\begin{itemize}

\item \textbf{Translating the points:}
Suppose that $T$ is a translation by $(-q)$. We apply $T$ to translate $q$ and all data points into their new coordinates. Obviously, $T(q)= O$.

\item \textbf{Sorting the translated data points:} In this step we sort the translated data points based on their angles in their polar coordinates. After doing this step, we have $S_T$ which is a sorted array of the translated data points.

\item \textbf{Calculating the spherical depth:} For the $i^{th}$ element in $S_{T}$, we define $O_i$ and $N_i$ as follows: 
\begin{equation}
\label{eq:O_i}
\begin{split}
&O_i=\left\{j: x_j\in S_T \: , \frac{\pi}{2} \leq |\theta(x_i) -\theta(x_j)|\leq \frac{3\pi}{2}\right\} \\& N_i= \{ 1,2,...,n\} \setminus O_i.
\end{split}
\end{equation}
Thus the spherical depth of $q$ with respect to $S$, can be computed by:
\begin{equation}
\label{eq:sph}
\SphD(q;S)=\SphD(O;S_T)= \frac{1}{2{n \choose 2}}\sum_{1\leq i\leq n}|O_i|.
\end{equation}
To present a formula for computing $|O_i|$, we define $f(i)$ and $l(i)$ as follows:
\[
f(i)=
\begin{cases}
\min N_i -1 &\text{if $\frac{\pi}{2}< \theta(x_i) \leq \frac{3\pi}{2}$}\\
    \min O_i & \text{otherwise}
\end{cases}
\]
\[
l(i)=
\begin{cases}
\max N_i +1 &\text{if $\frac{\pi}{2}< \theta(x_i) \leq \frac{3\pi}{2}$}\\
    \max O_i & \text{otherwise.}
\end{cases}
\]
Figure~\ref{fig:S-T-sorted} illustrates $O_i$, $N_i$,  $f(i)$, and $l(i)$ in two different cases. Considering the definitions of $f(i)$ and $l(i)$,
\begin{equation}
\label{eq:length-O_i}
|O_i|=
\begin{cases}
f(i)+(n-l(i)+1) &\text{if $\frac{\pi}{2}< \theta(x_i) \leq \frac{3\pi}{2}$}\\
    l(i)-f(i)+1 & \text{otherwise.}
\end{cases}
\end{equation}
\end{itemize}
This allows us to compute $|O_i|$ using a pair of binary searches. 
\begin{figure*}[!ht]
  \centering
    \includegraphics[width=0.95\textwidth]{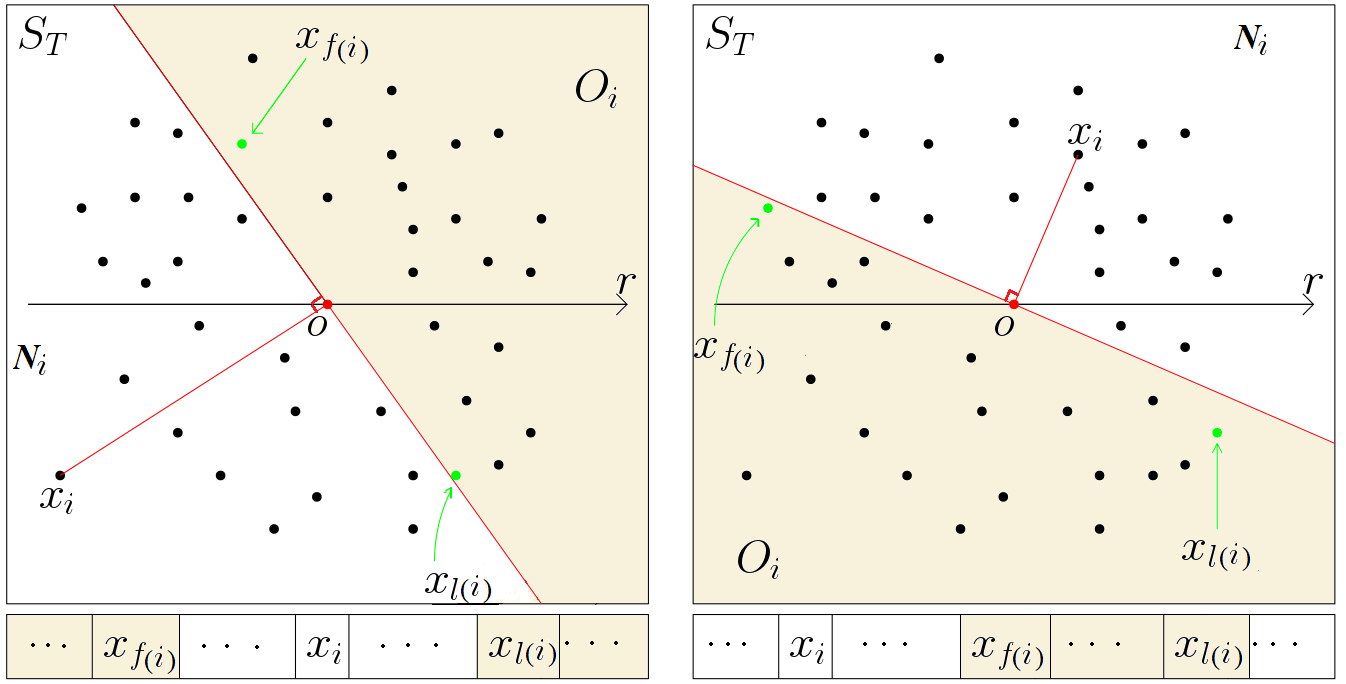}
  \caption{$\theta(x_i)\in (\frac{\pi}{2}, \frac{3\pi}{2}]$ (left figure), and $\theta(x_i)\notin (\frac{\pi}{2}, \frac{3\pi}{2}]$ (right figure)}
  \label{fig:S-T-sorted}
\end{figure*}
\paragraph{Time complexity of Algorithm~\ref{Alg:sph-pseudocode}:} The first procedure in the algorithm takes $\Theta(n)$ time to translate $q$ and all data points into the new coordinate system. The second procedure takes $\Theta(n\log n)$ time. Due to using binary search for every $O_i$, the running time of the last procedure is also $\Theta(n\log n)$. The rest of the algorithm contributes some constant time. In total, the running time of the algorithm is $\Theta(n\log n)$.
\begin{note}
\label{note:Coordinate-system}
\textbf{Coordinate system:} In practice it may be preferable to work in the Cartesian coordinate system. Sorting by angle can be done using some appropriate right-angle tests (determinants). Regarding the other angle comparisons, they can be done by checking the sign of dot products.
\end{note}
\section{Algorithm to Compute Planar $\beta$-skeleton Depth when $\beta> 1$}
\label{sec:beta-alg}
We recall from the definition of $\beta$-influence region that $S_{\beta}(x_i,x_j) ; \beta>1$ forms some lenses, and $S_{\infty}(x_i,x_j)$ forms some slabs for each pair $x_i$ and $x_j$ in $S\subseteq \mathbb{R}^2$. Using some geometric properties of such lenses and slabs, we prove Theorem~\ref{thrm:q-skeleton}. This theorem along with some results regarding the range counting problems in~\cite{agarwal2013range} help us to compute $\SkD_{\beta}(q;S); \beta>1$ in $O(n^{\frac{3}{2}+\epsilon})$ time, where $q$ and $S$ are in $\mathbb{R}^2$.
\begin{definition}
\label{def:line}
For an arbitrary non-zero point $a \in \mathbb{R}^2$  and parameter $\beta \geq 1$, $\ell(p)$ is a line that is perpendicular to $\overrightarrow{a}$ at the point $p=p(a,\beta)={(\beta -1)a}/{\beta}$. This line forms two halfspaces $H_o(p)$ and $H_a(p)$. The closed halfspace that includes the origin is $H_o(p)$ and the other halfspace which is open is $H_a(p)$.
\end{definition}

\begin{definition}
\label{def:ball}
For a disk $B(c,r)$ with the center $c=c(a,\beta)={\beta a}/{2(\beta -1)}$ and radius $r=\Vert c \Vert$, $B_o(c,r)$ is the intersection of $H_o(p)$ and $B(c,r)$,  and $B_a(c,r)$ is the intersection of $H_a(p)$ and $B(c,r)$, where $\beta>1$ and $a$ is an arbitrary non-zero point in $\mathbb{R}^2$. For the case of $\beta=1$, $B(c,r)= \emptyset$.
\end{definition}
Figure~\ref{fig:halfspace-balls} is an illustration of these definitions for different values of parameter $\beta$.

\begin{figure*}[!ht]
  \centering
    \includegraphics[width=1\textwidth]{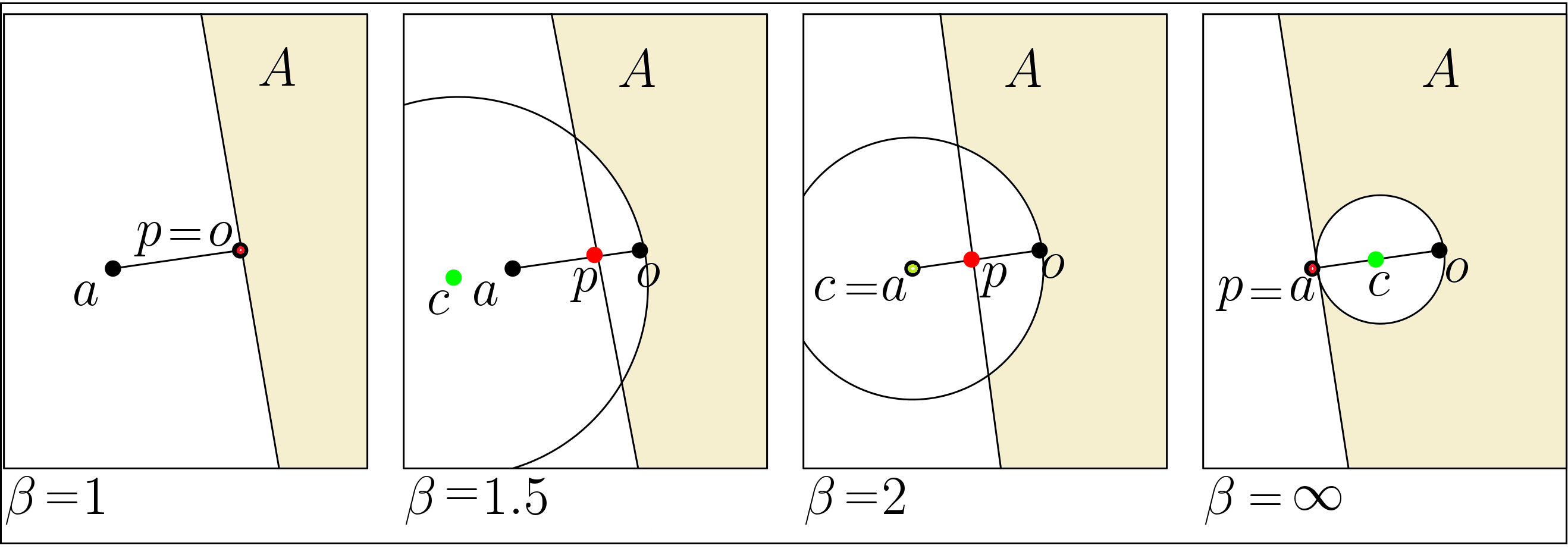}
  \caption{$H_o(p)$ and $B(c,r)$ defined by $a\in \mathbb{R}^2$ for $\beta=1,\;1.5,\;2,\;\text{and}\;\beta=\infty$, where $A=H_o(p)\setminus \{intB_o(c,r)\}$}
\label{fig:halfspace-balls}
\end{figure*}

\begin{theorem}
For non-zero points $a,b \in \mathbb{R}^2$, and parameter $\beta>1$, $b \in H_o(p)\setminus \{int B_o(c,r)\}$ if and only if the origin $O=(0,0)$ is contained in $S_{\beta}(a,b)$, where $c={\beta a}/{(2(\beta -1))}$, $r=\Vert c \Vert$, and $p={(\beta -1)a}/{\beta}$.
\label{thrm:q-skeleton}
\end{theorem}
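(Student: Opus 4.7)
My plan is to unfold the definition of $S_{\beta}(a,b)=B(c_a,r')\cap B(c_b,r')$, where $c_a=\tfrac{\beta}{2}a+(1-\tfrac{\beta}{2})b$, $c_b=(1-\tfrac{\beta}{2})a+\tfrac{\beta}{2}b$, and $r'=\tfrac{\beta}{2}\Vert a-b\Vert$, and translate $O\in S_{\beta}(a,b)$ into two scalar inequalities in $a$ and $b$. Since $O$ is the origin, $O$ lies in both balls iff $\Vert c_a\Vert^{2}\leq r'^{2}$ and $\Vert c_b\Vert^{2}\leq r'^{2}$. After expanding the inner products and using $2t-1=\beta-1$ with $t=\beta/2$, the $\Vert a\Vert^{2}$ terms cancel in the first inequality and the $\Vert b\Vert^{2}$ terms cancel in the second, so the two ball conditions collapse respectively to
\begin{equation*}
(\beta-1)\Vert b\Vert^{2}\geq\beta\langle a,b\rangle\qquad\text{and}\qquad(\beta-1)\Vert a\Vert^{2}\geq\beta\langle a,b\rangle.
\end{equation*}
Every step is reversible (both sides of each squared norm inequality are non-negative), so the conjunction of these two inequalities is equivalent to $O\in S_{\beta}(a,b)$.

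Next I would identify each inequality with one of the geometric constraints that cut out the region in the statement. The second inequality rewrites as $\langle a,b-p\rangle\leq 0$ with $p=\tfrac{\beta-1}{\beta}a$; by Definition \ref{def:line} this is exactly $b\in H_o(p)$, since $\ell(p)$ is the line through $p$ perpendicular to $\overrightarrow{a}$ and $H_o(p)$ is the closed halfspace containing the origin. The first inequality rearranges to $\Vert b\Vert^{2}-2\langle b,c\rangle\geq 0$ with $c=\tfrac{\beta}{2(\beta-1)}a$, i.e.\ $\Vert b-c\Vert^{2}\geq\Vert c\Vert^{2}=r^{2}$, which says $b\notin int\,B(c,r)$. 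Combining the two constraints yields $b\in H_o(p)$ together with $b$ lying outside $int\,B(c,r)$, which, since $B_o(c,r)=H_o(p)\cap B(c,r)$ and $b$ is already constrained to lie in $H_o(p)$, is the same as $b\in H_o(p)\setminus int\,B_o(c,r)$.

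The main obstacle is not conceptual but bookkeeping: one must verify that every squaring and rearrangement above is a reversible equivalence, that the hypothesis $\beta>1$ is used precisely where the denominator $\beta-1$ appears in $c$ and $r$, and that the closed conventions in Definitions \ref{def:line} and \ref{def:ball} are compatible with the closed balls used to define $S_{\beta}(a,b)$. Once the two scalar inequalities above are derived, the biconditional falls out immediately from the chain of equivalences, giving both directions at once rather than requiring a separate geometric argument in each case.
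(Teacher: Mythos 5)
Your proposal is correct and follows essentially the same route as the paper's proof: unfold $O\in S_{\beta}(a,b)$ into the two reversible scalar inequalities $(\beta-1)\Vert b\Vert^{2}\geq\beta\langle a,b\rangle$ and $(\beta-1)\Vert a\Vert^{2}\geq\beta\langle a,b\rangle$, then complete the square to recognize the first as the complement of $int\,B(c,r)$ and the second as membership in $H_o(p)$. The only element of the paper's proof you omit is the preliminary check that $\ell(p)$ actually intersects $B(c,r)$ (so that $B_o(c,r)$ is a well-defined nonempty region), which is a sanity check rather than a logical prerequisite for the chain of equivalences.
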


\begin{proof} 
First, we show that $B_o(c,r)$ is a well-defined set meaning that $\ell(p)$ intersects $B(c,r)$. We compute $d(c,\ell(p))$, the distance of $c$ from $\ell(p)$, and prove that this value is not greater than $r$. It can be verified that $d(c,\ell(p))= d(c,p)$. Let $k={\beta}/{(2(\beta -1))}$; the following calculations complete this part of the proof.
\begin{align*}
d(c,p)&=d(\frac{\beta a}{2(\beta -1)},\frac{(\beta -1)a}{\beta})= d(ka,\frac{1}{2k}a)\\&=(k-\frac{1}{2k})\sqrt{({a_x}^2+{a_y}^2)}=(\frac{2k^2-1}{2k})\Vert a \Vert\\& < \frac{2k^2}{2k}\Vert a \Vert= k\Vert a \Vert =r.
\end{align*}
We recall the definition of $\beta$-influence region given by $S_{\beta}(a,b)=B(c_a,\frac{\beta}{2}\Vert a-b \Vert) \cap B(c_b,\frac{\beta}{2}\Vert a-b \Vert)$, where $c_a=\frac{\beta}{2}a +(1-\frac{\beta}{2})b$ and $c_b=\frac{\beta}{2}b +(1-\frac{\beta}{2})a$. Using this definition, following equivalencies can be derived from $O \in S_{\beta}(a,b)$.
\begin{align*}
O \in S_{\beta}(a,b)&\Leftrightarrow \frac{\beta \Vert a-b \Vert}{2} \geq max\{\Vert c_a \Vert , \Vert c_b \Vert\}\\&\Leftrightarrow \beta \Vert a-b \Vert \geq max\{\Vert \beta(a-b)+2b\Vert , \Vert \beta(b-a)+2a \Vert\}\\&\Leftrightarrow \beta ^2 \Vert a-b \Vert ^2 \geq max\{\Vert \beta(a-b)+2b\Vert ^2 , \Vert \beta(b-a)+2a \Vert ^2\}\\&\Leftrightarrow 0 \geq max\{b^2(1-\beta)+\beta\langle a,b\rangle , a^2(1-\beta)+\beta \langle a,b\rangle\}.
\end{align*}
By solving these inequalities for $(\beta -1)/\beta$ which is equal to $1/2k$, we obtain Equation \eqref{eq:equivalencies}.  
\begin{equation}
\label{eq:equivalencies}
\frac{1}{2k} \geq max \left\{ \frac{\langle a,b\rangle}{\Vert a \Vert ^2}, \frac{\langle a,b\rangle}{\Vert b \Vert ^2} \right\}
\end{equation}
For a fixed point $a$, the inequalities in Equation \eqref{eq:equivalencies} determine one halfspace and one disk given by \eqref{eq:halfspacerange} and \eqref{eq:disk}, respectively.
\begin{equation}
\label{eq:halfspacerange}
\frac{1}{2k} \geq \frac{\langle a,b\rangle}{\Vert a \Vert ^2} \Leftrightarrow\langle a,b\rangle \leq\frac{1}{2k} \Vert a \Vert ^2.
\end{equation}
\begin{equation}
\label{eq:disk}
\begin{split}
\frac{1}{2k} \geq \frac{\langle a,b\rangle}{\Vert b \Vert ^2} &\Leftrightarrow b^2-2k\langle a,b\rangle \geq 0 \\& \Leftrightarrow
b^2-2k\langle a,b\rangle+k^2a^2 \geq k^2a^2 \\& \Leftrightarrow \left( b- ka\right)^2 \geq \left( k \Vert a \Vert\right)^2.
\end{split}
\end{equation}
The proof is complete because for a point $a$, the set of all points $b$ containing in the feasible region defined by Equations \eqref{eq:halfspacerange} and \eqref{eq:disk} is equal to \\$H_o(p)\setminus \{intB_o(c,r)\}$.  
\end{proof}
\begin{proposition}
\label{lm:halfspace-circle-center}
For $\beta=2+\sqrt{2}$, the boundary of the given halfspace in \eqref{eq:halfspacerange} passes through the center of the given disk in \eqref{eq:disk}.
\end{proposition}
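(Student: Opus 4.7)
The plan is a direct algebraic verification using the parameterization $k=\beta/(2(\beta-1))$ introduced in the proof of Theorem~\ref{thrm:q-skeleton}. First I would translate the claim into a single equation. The boundary of the halfspace in~\eqref{eq:halfspacerange} is the line $\langle a,b\rangle=\tfrac{1}{2k}\|a\|^{2}$, while the center of the disk in~\eqref{eq:disk} is the point $c=ka$. So the geometric statement ``the boundary of the halfspace passes through the center of the disk'' is equivalent to
\begin{equation*}
\langle a, ka\rangle=\tfrac{1}{2k}\|a\|^{2},
\end{equation*}
i.e. $k\|a\|^{2}=\tfrac{1}{2k}\|a\|^{2}$, which (since $a\neq 0$) reduces to the single condition $k^{2}=\tfrac{1}{2}$, that is $k=1/\sqrt{2}$ (we take the positive root since $k>0$ for $\beta>1$).

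Next I would solve $k=1/\sqrt{2}$ for $\beta$. Substituting $k=\beta/(2(\beta-1))$ gives $\beta/(2(\beta-1))=1/\sqrt{2}$, hence $\sqrt{2}\,\beta=2(\beta-1)$, so $(\sqrt{2}-2)\beta=-2$ and therefore
\begin{equation*}
\beta=\frac{2}{2-\sqrt{2}}=\frac{2(2+\sqrt{2})}{(2-\sqrt{2})(2+\sqrt{2})}=\frac{2(2+\sqrt{2})}{2}=2+\sqrt{2},
\end{equation*}
after rationalizing the denominator. This is precisely the claimed value of $\beta$, and because each step above is an equivalence the converse also holds.

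There is essentially no obstacle here: the whole argument is a one-line substitution once the correct parameter $k$ is used to describe both the halfspace boundary and the disk center simultaneously. The only point that requires a remark is that we should note $k>0$ for $\beta>1$ so that selecting the positive square root is justified, and that $a\neq 0$ in the hypothesis of Theorem~\ref{thrm:q-skeleton} lets us cancel $\|a\|^{2}$ from both sides.
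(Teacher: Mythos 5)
Your proposal is correct and follows essentially the same route as the paper: substitute the disk center $b=ka$ into the halfspace constraint from \eqref{eq:halfspacerange}, reduce to $2k^{2}=1$, and solve $k=\beta/(2(\beta-1))$ for $\beta=2+\sqrt{2}$. If anything, your version is slightly cleaner, since you correctly impose equality (membership in the boundary line) where the paper's displayed computation carries the inequality $\leq$ through to the end.
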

\begin{proof}
It is enough to substitute $b$ in the given halfspace with $ka$ which is the center of the given disk.
\begin{align*}
\langle a,(ka)\rangle\leq \frac{1}{2k}\Vert a \Vert ^2 &\Rightarrow k\Vert a \Vert^2 \leq \frac{1}{2k}\Vert a \Vert^2 \Rightarrow 2k^2\leq 1\\ &\Rightarrow 2(\frac{\beta}{2(\beta-1)})^2 \leq 1 \Rightarrow \beta =2+\sqrt{2}
\end{align*} 
\end{proof}
\paragraph{Algorithm~\ref{Alg:betaskeleton-pseudocode}:} Using Theorem~\ref{thrm:q-skeleton}, we present an algorithm to compute the $\beta$-skeleton depth of $q\in \mathbb{R}^2$ with respect to $S=\{x_1,...,x_n\} \subseteq \mathbb{R}^2$. This algorithm is summarized in two steps. The pseudocode of this algorithm can be found at the end of this chapter.
\begin{itemize}
\item \textbf{Translating the points:}
This step is exactly the same step as in Algorithm~\ref{Alg:sph-pseudocode}.
\item \textbf{Calculating the $\beta$-skeleton depth:} Suppose that $a=(a_x,a_y)$ is an element in $S'$ (translated $S$). We consider a disk and a line as follows:
\[
B(c,r):\left(x- ka_x \right)^2+ \left(y- ka_y\right)^2 = \left(k\Vert a \Vert\right)^2
\]
\[
\ell(p):a_xx+ a_yy= \frac{1}{2k}\Vert a \Vert,
\]
where $k$, $c$, $r$, and $p$ are defined in Theorem~\ref{thrm:q-skeleton}. From Theorem $1.2$ in~\cite{agarwal2013range}, we can compute $\vert H_o(p)\vert$ with $O(n)$ storage, $O(n\log n)$ expected preprocessing time, and $O(n^{\frac{1}{2}+\epsilon})$ query time, where $\vert H_o(p)\vert$ is the number of all elements of $S'$ that are contained in $H_o(p)$. For the elements of $S'$, $\vert intB_o(c,r) \vert$ which is defined as the number of elements containing in the interior of $B_o(c,r)$ can also be computed with the same storage, expected preprocessing time, and query time. We recall that $B_o(c,r)$ is the intersection of halfspace $H_o(p)$ and disk $B(c,r)$, where $p$, $c$, and $r$ are some functions of $a$.  Finally, $\SkD_\beta(q,S)$ which is equal to $\SkD_{\beta}(O;S')$ can be computed by Equation~\eqref{eq:skeleton-algorithm}.
 \begin{equation}
 \label{eq:skeleton-algorithm}
 \SkD_\beta(q,S)=\frac{1}{2{n \choose 2}}\left(\sum_{p}\vert H_o(p)\vert -\sum_{c} \vert intB_o(c,r) \vert\right)
 \end{equation}
Referring to Definitions~\ref{def:line} and~\ref{def:ball}, $H_o(p)$ and $B_o(c,r)$ can be computed in constant time.   
\end{itemize}
Theorem~\ref{thrm:q-skeleton} and Algorithm~\ref{Alg:betaskeleton-pseudocode} are valid for $\beta>1$. However, the case $\beta=1$ (Algorithm~\ref{Alg:sph-pseudocode} for spherical depth) can also be included in this result. In this case, $\forall c; B(c,r)=\emptyset$, and consequently, $\vert intB_o(c,r)\vert=0$. Therefore, $\SkD_1(q;S)$ which is equal to $\SkD_1(O;S')$ can be computed by:  
\begin{equation}
\label{eq:sphD-skeleton-algorithm}
\SkD_1(q;S)=\frac{1}{2{n \choose 2}}\sum_{p}\vert H_o(p)\vert.
\end{equation} 
\paragraph{Time complexity of Algorithm~\ref{Alg:betaskeleton-pseudocode}:} The translating procedure as is discussed in Algorithm~\ref{Alg:sph-pseudocode}, takes $O(n)$ time. With the $O(n \log n)$ expected preprocessing time, the second procedure takes $O(n^{\frac{3}{2}+\epsilon})$ time. In this procedure, the loop iterates $n$ times, and the range counting algorithms take $O(n^{\frac{1}{2}+\epsilon})$ time. The expected preprocessing time $O(n \log n)$ is required to obtain a data structure for the aforementioned range counting algorithms. The rest of the algorithm takes some constant time per loop iteration, and therefore the total expected running time of the algorithm is $O(n^{\frac{3}{2}+\epsilon})$. 
\section{Algorithm to Compute Planar Halfspace Depth}
\label{sec:tukey-alg}
To compute the halfspace depth of query point $q$ with respect to data set $S$, we need to find the minimum portion of data points separated by a halfspace through $q$. In this section, we develop an optimal algorithm to compute the planar halfspace depth of a query point. Identical to Aloupis' \cite{aloupiscomputing} algorithms, our algorithm takes $\Theta(n\log n)$ time. After sorting data points by angle, Aloupis employed the counterclockwise sweeping of a specific halfline. However, to obtain our algorithm, we reuse most of Algorithm \ref{Alg:sph-pseudocode}, and employ the specialized halfspace range counting that is explored in Section~\ref{sec:sph-alg}.   

\paragraph{Algorithm~\ref{Alg:halfspace-pseudocode}:} Suppose that $S=\{x_1,...,x_n\}\subset \mathbb{R}^2$ and $q\in \mathbb{R}^2$  are given. We summarize the algorithm in the following steps.

\begin{itemize}
\item \textbf{Translating the points:} This step is the same as the first step in Algorithm~\ref{Alg:sph-pseudocode}.

\item \textbf{Computing the halfspaces} In Equation \eqref{eq:halfspacedef}, it is not practical to compute $|S\cap H|$ for all halfspaces $H$ that pass through the query point. Instead of considering all of the halfspaces, we define $\mathbb{H}$ to be a finite set of the desired halfspaces such that we can obtain all possible values of $|S\cap H|$ if $H\in \mathbb{H}$. Computation of $\mathbb{H}$ can be done as follows:
\begin{itemize}
\item[$(i)$] Project all of the nontrivial\footnote{$a\in \mathbb{R}^2$ is nontrivial if $\Vert a\Vert> 0.$} elements of $S'$ (translated $S$) on the unit circle $C(O,1)$.
\item[$(ii)$] Construct $P=\{\pm x'_{ip}\}$, where $x'_{ip}$ is generated in $(i)$.
\item [$(iii)$] Using an $O(n\log n)$ sorting algorithm, sort the elements of $P$ by angle in counterclockwise order.
\item[$(iv)$] Remove the duplicates in the sorted $P$. 
\item[$(v)$] Let $m_i$ be the middle point of each pair of successive elements in the sorted array in $(iii)$. Suppose that $\ell _i$ is the line that passes through the points $O=(0,0)$ and $m_i$. Each line $\ell_i$ forms two halfspaces $h_{i1}$ and $h_{i2}$. As such, $\mathbb{H}$ can be defined by:
\[
\mathbb{H}=\{h_{ij};1\leq i\leq n, j\in\{1,2\}\}.
\]  
\end{itemize}
\item \textbf{Calculating the halfspace depth:} Similar to the computation of $\vert O_i\vert$ in Equation \eqref{eq:length-O_i}, a pair of binary searches can be applied to compute each of $\vert h_{ij}\cap S'\vert$, where $1\leq i\leq n$ and $j\in\{1,2\}$. Therefore, $\HD(q;S)$ which is equal to $\HD(O;S')$ can be computed by:
\begin{equation*}
\HD(q;S)=\frac{2}{n}\left(\min_{ij}\left\{\vert h_{ij}\cap S'\vert; h_{ij}\in \mathbb{H}\right\}+\vert trivials\vert\right).
\end{equation*}
\end{itemize}
\paragraph{Time complexity of Algorithm~\ref{Alg:halfspace-pseudocode}:} Referring to the analysis of Algorithm~\ref{Alg:sph-pseudocode}, the Translating procedure takes $\Theta(n)$ time. Sorting the elements of $P$ causes the second procedure to take $\Theta(n\log n)$ time. The rest of work in the second procedure takes some linear time. Finally, the depth calculation procedure takes $O(n\log n)$ because for every element in $\mathbb{H}$, two binary searches are called. Since $\mathbb{H}$ contains at most $2n$ elements, the outer loop takes $O(n)$. The binary searches take $O(\log n)$. The rest of the algorithm run in some constant time per loop iteration. From the above analysis, the overall running time of this algorithm is $O(n\log n)$.
\section{Pseudocode}
In this section we provide the pseudocode of the presented algorithms in sections \ref{sec:sph-alg}, \ref{sec:beta-alg}, and \ref{sec:tukey-alg}. First, we define some procedures which are used in the depth calculation algorithms.
\begin{algorithm}
\renewcommand{\algorithmicrequire}{\textbf{Input:}}
\renewcommand{\algorithmicensure}{\textbf{Output:}}
\newcommand{\Break}{\State \textbf{break}}
\caption{Translation$(S,q)$}
\label{Alg:Translating.procedure}
\begin{algorithmic}[1]
\Require Data set $S$ and Query point $q$
\Ensure Translated data set $S'$
\item[]
\For{each $x_i \in S$ }
\State $x_i \gets (x_i-q)$ 
\EndFor
\State \Return $S'$ 
\end{algorithmic}
\end{algorithm}
\begin{algorithm}
\renewcommand{\algorithmicrequire}{\textbf{Input:}}
\renewcommand{\algorithmicensure}{\textbf{Output:}}
\newcommand{\Break}{\State \textbf{break}}
\caption{Sorting-by-angle$(S)$}
\label{Alg:Sorting.procedure}
\begin{algorithmic}[1]
\Require Data set $S$
\Ensure Sorted array $S_T$
\item[]
\For{each $x_i=(a_i,b_i)\in S$}
\State $\theta(x_i)\gets \arctan(b_i/a_i)$
\EndFor
\State Using an $O(n\log n)$ sorting algorithm, sort $x_i$ based on $\theta(x_i)$
\State \Return $S_{T}$
\end{algorithmic}
\end{algorithm}
\begin{note}
Instead of using polar coordinates in Algorithm \ref{Alg:Sorting.procedure}, sorting can also be done by applying the suggested method in Note \ref{note:Coordinate-system}.
\end{note}
\begin{algorithm}
\renewcommand{\algorithmicrequire}{\textbf{Input:}}
\renewcommand{\algorithmicensure}{\textbf{Output:}}
\newcommand{\Break}{\State \textbf{break}}
\caption{Trivial-and-Nontrivial$(S)$}
\label{Alg:TrivialNontrivial.procedure}
\begin{algorithmic}[1]
\Require Data set $S$
\Ensure Partition $S$ into two sets $Trival$ and $NonTrivial$
\item[]
\State Initialize $Trivial=\emptyset$, $NonTrivial=\emptyset$
\For{each $x_i\in S$}
\State \textbf{if} \; $(\Vert x_i\Vert=0)$ 
\State $\hspace{0.5cm} Trivial\gets Trivial\cup \{x_i\}$
\State \textbf{else}
\State $\hspace{0.5cm} NonTrivial\gets NonTrivial\cup\{x_i\}$
\EndFor
\State \Return $(Trivial,NonTrivial)$
\end{algorithmic}
\end{algorithm}
\begin{algorithm}
\renewcommand{\algorithmicrequire}{\textbf{Input:}}
\renewcommand{\algorithmicensure}{\textbf{Output:}}
\newcommand{\Break}{\State \textbf{break}}
\caption{Projection$(S)$}
\label{Alg:Projection.procedure}
\begin{algorithmic}[1]
\Require Data set $S$
\Ensure Projection of $S$ on circle $C(O,1)$
\item[] 
\State Initialize $S_p=\emptyset$
\For{each $x_i\in S$}
\State $x_{ip}\gets x_i/\Vert x_i\Vert$
\State $S_p\gets S_p\cup\{\pm x_{ip}\} $
\EndFor
\State \Return $S_p$
\end{algorithmic}
\end{algorithm}
\begin{algorithm}
\renewcommand{\algorithmicrequire}{\textbf{Input:}}
\renewcommand{\algorithmicensure}{\textbf{Output:}}
\newcommand{\Break}{\State \textbf{break}}
\caption{Computing the planar $\beta$-skeleton depth, $\beta=1$ (spherical depth)}
\label{Alg:sph-pseudocode}
\begin{algorithmic}[1]
\Require Data set $S$ and Query point $q$
\Ensure $\SphD(q;S)$
\item[]
\State Initialize $\SphD(q;S)=0$
\State $S'\gets$ Translation$(S,q)$
\State $S_T \gets$ Sorting-by-angle$(S')$
\For{each $x_{i}\in S_{T}$}
\State Initialize $a=0$ and $b=n+1$
\State Using two \emph{binary search calls}, update the values of $a$ and $b$
\State \textbf{if} \; ($0< \theta(x_i) \leq \frac{\pi}{2}$)
\State \hspace{0.5cm} $a=min\{j:\: \theta(x_j)-\theta(x_i)\geq \frac{\pi}{2}\}$
\State \hspace{0.5cm} $b=max\{j:\: \frac{\pi}{2} \leq \theta(x_j)-\theta(x_i)\leq \frac{3\pi}{2}\}$
\State \textbf{else-if} \; ($\frac{\pi}{2}< \theta(x_i) \leq \frac{3\pi}{2}$)
\State \hspace{0.5cm} $a=max\{j:\: \theta(x_i)-\theta(x_j)\geq \frac{\pi}{2}\}$
\State \hspace{0.5cm} $b=min\{j:\: \theta(x_j)-\theta(x_i)\geq \frac{\pi}{2}\}$ 
\State \textbf{else}
\State \hspace{0.5cm} $a=min\{j:\: \theta(x_i)-\theta(x_j)\leq \frac{3\pi}{2}\}$
\State \hspace{0.5cm} $b=max\{j:\: \frac{\pi}{2} \leq \theta(x_i)-\theta(x_j)\leq \frac{3\pi}{2}\}$
\State \textbf{if} \; $(a=0 \;\;\text{and
}\;\;b=n+1)$ 
\State \hspace{0.5cm} $\vert O_i\vert=0$
\State \textbf{else}
\State \hspace{0.5cm} $\vert O_i\vert= \begin{cases}
a+(n-b+1)\; ; \frac{\pi}{2}< \theta(x_i) \leq \frac{3\pi}{2}\\
    b-a+1 \hspace{1.35cm}; \text{otherwise.}
\end{cases}
$
\State $\SphD(q;S)\gets \SphD(q;S)+ \frac{1}{2{n \choose 2}} |O_i|$ 
\EndFor
\State \Return $\SphD(q;S)$ 
\end{algorithmic}
\end{algorithm}
\begin{note}
To avoid unusual notations in Algorithm~\ref{Alg:sph-pseudocode}, we use the variables $a$ and $b$ instead of $f(i)$ and $l(i)$, respectively in the text.
\end{note}
\begin{algorithm}
\renewcommand{\algorithmicrequire}{\textbf{Input:}}
\renewcommand{\algorithmicensure}{\textbf{Output:}}
\newcommand{\Break}{\State \textbf{break}}
\caption{Computing the planar $\beta$-skeleton depth, $\beta>1$}
\label{Alg:betaskeleton-pseudocode}
\begin{algorithmic}[1]
\Require Data set $S$, Query point $q$, Parameter $\beta > 1$
\Ensure $\SkD_{\beta}(q;S)$
\item[]
\State Initialize $\SkD_{\beta}(q;S)=0$
\State $S'\gets$ Translation$(S,q)$
\For{each $a \in S'$}
\State Using two $O(n^{\frac{1}{2}+\epsilon})$ range counting algorithms, compute $\vert H_o(p)\vert$ 
\item[\hspace{1.35cm}and $\vert intB_o(c,r) \vert$]
\item[\hspace{1.35cm}(The computations of $H_o(p)$ and $intB_o(c,r)$ take constant time)] %
\State $\SkD_{\beta}(q;S)\gets \SkD_{\beta}(q;S)+\frac{1}{2{n \choose 2}}\left(\vert H_o(p)\vert -\vert intB^o(c,r)\vert\right)$
\EndFor
\State \Return $\SkD_{\beta}(q;S)$ 
\end{algorithmic}
\end{algorithm}
\begin{algorithm}
\renewcommand{\algorithmicrequire}{\textbf{Input:}}
\renewcommand{\algorithmicensure}{\textbf{Output:}}
\newcommand{\Break}{\State \textbf{break}}
\caption{Computing the planar halfspace depth}\label{Alg:halfspace-pseudocode}
\begin{algorithmic}[1]
\Require Data set $S$, Query point $q$
\Ensure $\HD(q;S)$
\item[]
\State Initialize $\HD_q=0$
\State $S' \gets$ Translation$(S,q)$
\State $(Triv,NTriv)\gets$ Trivial-and-Nontrivial$(S')$
\State $P \gets$ Projection$(NTriv)$ 
\State $P_T \gets$ Sorting$(P)$
\State $A_P \gets$ Removing-Duplicates$(P_T)$
\For{each $a_i$ in $A_P$}
\State $\ell_i \gets$ line passing through $(0,0)$ and $m_i=(a_i+a_{i+1})/2$
\EndFor
\State $\mathbb{H} \gets$ the set of all halfspaces $h_{ij}$ defined by $\ell_i$, where $1\leq i\leq n, j\in\{1,2\}$
\For{each $h_{ij}$ in $\mathbb{H}$}
\State Using a pair of binary searches, compute $|h_{ij}\cap S'|$ (similar to the
\item[\hspace{1.35cm}computation of $|O_i|$ in Algorithm \ref{Alg:sph-pseudocode})]
\State $\HD_q\gets \min\{\HD_q,|h_{ij}\cap S'|\}$
\EndFor
\State $\HD(q;S)\gets \frac{2}{n}\left(\HD_q+|Triv|\right)$
\State \Return $\HD(q;S)$  
\end{algorithmic}
\end{algorithm}
\begin{algorithm}
\renewcommand{\algorithmicrequire}{\textbf{Input:}}
\renewcommand{\algorithmicensure}{\textbf{Output:}}
\newcommand{\Break}{\State \textbf{break}}
\caption{Removing-Duplicates$(A)$}
\label{Alg:dupicatesRemoval.procedure}
\begin{algorithmic}[1]
\Require Sorted array $A$
\Ensure Sorted array $Anc$ with unique elements
\item[]
\State Define $Anc$ with dynamic size 
\State Initialize $Anc[1]=A[1]$, $j=2$
\For{$i=2$ to $\vert A\vert$}
\State \textbf{if}\;$(A[i]\neq A[i-1])$ 
\State $\hspace{0.5cm} Anc[j]\gets A[i]$
\State $\hspace{0.5cm} j\gets j+1$
\EndFor
\State \Return $Anc$
\end{algorithmic}
\end{algorithm}
\chapter{Lower Bounds}
\label{ch:lowerbound}
As discussed in Chapter \ref{ch:different-DD}, computing each of simplicial depth and halfspace depth in $\mathbb{R}^2$ requires $\Omega(n\log n)$ time. In this chapter we prove that computing the planar $\beta$-skeleton depth also requires $\Omega(n \log n)$, $\beta\geq 1$. We reduce the problem of Element Uniqueness to the problem of computing the $\beta$-skeleton depth in three cases $\beta=1$, $1< \beta < \infty$, and $\beta=\infty$. It is known that the question of Element Uniqueness has a lower bound of $\Omega (n\log n)$ in the algebraic decision tree model of computation proposed in~\cite{ben1983lower}.  
\section{Lower Bound for the Planar $\beta$-skeleton Depth, $\beta=1$}
\begin{theorem}
Computing the planar spherical depth ($\beta$-skeleton depth, $\beta=1$) of a query point in the plane takes $\Omega (n\log n)$ time.
\label{thrm:lowebound}
\end{theorem}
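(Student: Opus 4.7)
The plan is to prove the lower bound by reducing the Element Uniqueness problem to the computation of planar spherical depth in $O(n)$ time, so that any $T(n)$-time spherical depth algorithm yields a $T(2n)+O(n)$-time algorithm for Element Uniqueness; Ben-Or's $\Omega(n\log n)$ bound then forces $T(n)=\Omega(n\log n)$.

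For the construction, given an instance $A=\{a_1,\dots,a_n\}\subset\mathbb{R}$, I will first replace each $a_i$ by $a_i'=e^{a_i}$, a unit-cost primitive in the real RAM, so that every $a_i'$ is strictly positive while $a_i=a_j\iff a_i'=a_j'$. I then define
\begin{equation*}
x_i=(a_i',\,1),\qquad y_i=(1/a_i',\,-1),\qquad i=1,\dots,n,
\end{equation*}
take the data set $S=(x_1,\dots,x_n,y_1,\dots,y_n)$ of size $2n$, and set the query point $q=(0,0)$. Building $S$ clearly takes $O(n)$ time.

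To analyse $\SphD(q;S)$, I will invoke Theorem~\ref{thrm:point-circle}: $q\in \Sph(u,v)$ iff $\angle uqv\ge \pi/2$, or equivalently $\langle u-q,\,v-q\rangle\le 0$. With $q$ at the origin this reduces to $\langle u,v\rangle\le 0$, and I will check the three pair types in turn. The self-type pairs give $\langle x_i,x_j\rangle=a_i'a_j'+1>0$ and $\langle y_i,y_j\rangle=1/(a_i'a_j')+1>0$, so no such pair contains $q$. For the cross pairs, $\langle x_i,y_j\rangle=a_i'/a_j'-1$, so $\{x_i,y_j\}$ contributes to the depth precisely when $a_i'\le a_j'$.

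The main step is to count cross-pair contributions. Summing over $(i,j)\in\{1,\dots,n\}^2$ with $a_i'\le a_j'$, the $n$ diagonal terms always contribute, each off-diagonal pair $\{i,j\}$ with $a_i'\ne a_j'$ contributes through exactly one of its two orderings, and each duplicate pair $a_i'=a_j'$ with $i\ne j$ contributes through both orderings. Writing $d$ for the number of unordered duplicate pairs in $A$, this yields
\begin{equation*}
\SphD(q;S)=\frac{n+\binom{n}{2}+d}{\binom{2n}{2}},
\end{equation*}
so $A$ has no duplicate iff $\SphD(q;S)$ equals the distinct-case value $(n+\binom{n}{2})/\binom{2n}{2}$, which can be decided by a single comparison. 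The main subtlety will be the boundary case $a_i'=a_j'$: here the dot product is exactly zero, so $q$ lies on the boundary of the closed disk $\Sph(x_i,y_j)$ and is still counted, which is precisely what gives each duplicate pair its extra contribution and makes the depth value jump detectably.
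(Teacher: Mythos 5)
Your reduction is correct and follows the same high-level strategy as the paper's proof: reduce Element Uniqueness to a single planar spherical depth query at the origin, and use Theorem~\ref{thrm:point-circle} to turn membership in $\Sph(u,v)$ into the sign condition $\langle u,v\rangle\le 0$. Where you differ is in the gadget. The paper lifts each $a_i$ to $(a_i,1)$ and then takes \emph{four} rotated copies (by multiples of $\pi/2$), producing $4n$ points, and computes the depth by explicitly enumerating the sets $O_j$ of Equation~\eqref{eq:O-for-lowebound}; it even remarks (in the note following Theorem~\ref{thrm:lowebound}) that a two-copy construction is possible but makes the depth calculation more complicated. Your two blocks $x_i=(a_i',1)$ and $y_j=(1/a_j',-1)$ use only $2n$ points, kill all same-block pairs outright, and reduce the count to $\bigl|\{(i,j): a_i'\le a_j'\}\bigr| = n+\binom{n}{2}+d$, which is arguably \emph{cleaner} than the paper's enumeration; the boundary case $\langle x_i,y_j\rangle=0$ is handled correctly because the influence regions are closed. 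So the counting argument is a genuine simplification over both of the paper's constructions.

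One caveat you should repair: the substitution $a_i'=e^{a_i}$ takes the reduction outside the algebraic computation tree model, and Ben-Or's $\Omega(n\log n)$ bound for Element Uniqueness is proved precisely in that model, so a non-algebraic preprocessing step weakens the transfer of the lower bound. The fix is trivial and is what the paper does implicitly: replace the exponential by the equality-preserving affine shift $a_i\mapsto a_i-\min_k a_k+1$, computable with $O(n)$ algebraic operations, which makes every $a_i'$ positive while keeping the entire reduction inside the algebraic model. With that change the argument is complete.
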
 
\begin{proof}
We show that finding the spherical depth allows us to answer the question of Element Uniqueness. Suppose that $A=\{a_1,...,a_n\}$, for $n\geq 2$ is a given set of real numbers. We suppose all of the numbers to be positive (negative), otherwise we shift the points onto the positive $x$-axis. For every $a_i \in A$ we construct four points $x_i$, $x_{n+i}$, $x_{2n+i}$, and $x_{3n+i}$ in the polar coordinate system as follows:
\[
x_{(kn+i)}=\left(r_i,\theta_i+\frac{k\pi}{2}\right); \; 0\leq k \leq 3, 
\] 
where $r_i= \sqrt{1+{a_i^2}}$ and $\theta_i=\tan^{-1}(1/a_i)$. Thus we have a set $S$ of $4n$ points $x_{kn+i}$, for $1\leq i \leq n$. See Figure~\ref{fig:lower-bound-S}. The Cartesian coordinates of the points can be computed by:
\[
x_{(kn+i)}=
  \left[ {\begin{array}{cc}
   0 & -1\\
   1 & 0\\
  \end{array} } \right]^k \left( {\begin{array}{cc}
   a_i\\
   1 \\
  \end{array} } \right) ;\; k=0,1,2,3.
\]
We select the query point $q=(0,0)$, and present an equivalent form of Equation~\eqref{eq:O_i} for $O_j$ as follows:
\begin{equation}
\label{eq:O-for-lowebound}
O_j=\left\{x_k\in S\mid \angle x_jqx_k \geq \frac{\pi}{2} \right\}, \: 1\leq j\leq 4n,
\end{equation}
We compute $\SphD(q;S)$ in order to answer the Element Uniqueness problem. Suppose that every $x_j\in S$ is a unique element. In this case, $|O_j|=2n+1$ because, from~\eqref{eq:O-for-lowebound}, it can be figured out that the expanded $O_j$ is as follows:
\[
O_j=\begin{cases}
\{x_{n+1},...,x_{n+j}, x_{2n+1},...,x_{3n}, x_{3n+j},...,x_{4n}\} ;&j\in \{1,...,n\}\\
\{x_{2n+1},...,x_{n+j}, x_{3n+1},...,x_{4n}, x_{j-n},...,x_{n}\} ;&j\in \{n+1,...,2n\}\\
\{x_{3n+1},...,x_{n+j}, x_{1},...,x_{n}, x_{j-n},...,x_{2n}\} ;&j\in \{2n+1,...,3n\}\\
\{x_{1},...,x_{j-3n}, x_{n+1},...,x_{2n}, x_{j-n},...,x_{3n}\} ;&j\in \{3n+1,...,4n\}.
\end{cases}
\]
Let $\SphD_S(q)$ be the unnormalized form of $\SphD(q;S)$. Referring to Theorem~\ref{thrm:point-circle} and Equation~\eqref{eq:sph}, 
\[
\SphD_S(q)={\vert S\vert \choose 2}\SphD(q;S)=\frac{1}{2}\sum_{1\leq j\leq 4n}(2n+1)=4n^2+2n.
\]
Now suppose that there exist some $i\neq j$ such that $x_i= x_j$ in $S$. In this case, from Equation ~\eqref{eq:O-for-lowebound}, it can be seen that: 
\[
|O_{(kn+i)\bmod 4n}|=|O_{(kn+j)\bmod 4n}|=2n+2,  
\]
where $k=0,1,2,3$ (see Figure~\ref{fig:lower-bound-S}). As an example, for $k=0$, $|O_j|=|O_i|=2n+2$ because the expanded form of these two sets is as follows: (without loss of generality, assume $i<j<n$)
\[
O_i=O_j=\{x_{n+1}..x_{n+j},x_{2n+1}..x_{3n},x_{3n+i},x_{3n+j}, x_{3n+j+1}..x_{4n}\}.
\]
Theorem~\ref{thrm:point-circle} and Equation~\eqref{eq:sph} imply that:
\[
\SphD_S(q)\geq \frac{1}{2}(8+\sum_{1\leq j\leq 4n}(2n+1))= 4n^2+2n+4.
\]
Therefore the elements of $A$ are unique if and only if the spherical depth of $(0,0)$ with respect to $S$ is $4n^2+2n$. This implies that the computation of spherical depth requires $\Omega (n\log n)$ time. It is necessary to mention that the only computation in the reduction is the construction of $S$ which takes $O(n)$ time. Finally, we mention that the reduction does not depend on the sorted order of the elements.
\end{proof}
\begin{figure}[!ht]
  \centering
    \includegraphics[width=0.7\textwidth]{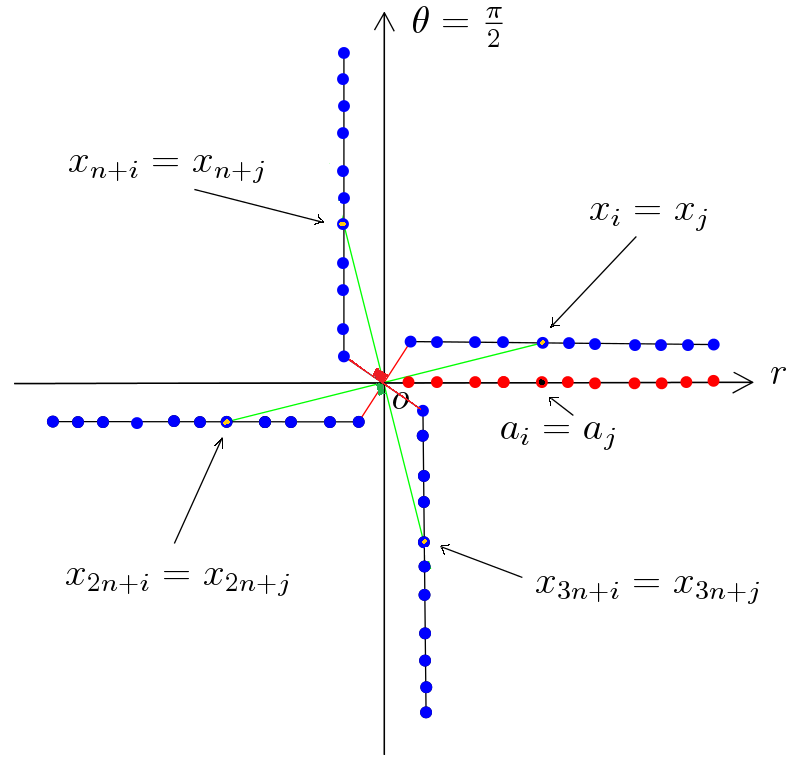}
  \caption{An illustration of $A$ and $S$ for $\beta=1$}
  \label{fig:lower-bound-S}
\end{figure}
\begin{note}
Instead of four copies of the elements of $A$, we could consider two copies of such elements to construct $S$ as used in section \ref{sec:lower-beta-1to-infty}. However, the depth calculation becomes more complicated in this case.
\end{note}
\section{Lower Bound for the Planar $\beta$-skeleton Depth, $1<\beta<\infty$}
\label{sec:lower-beta-1to-infty}
First, we prove the lower bound for the planar lens depth where $\beta=2$ in $\beta$-skeleton depth. Using the same reduction technique, we generalize the result to all values of $1<\beta<\infty$.
\begin{lemma}
For $1\leq i \leq n$ and $b_i\in \mathbb{R}^+$, suppose that $S$ and $L_j$ are two sets of polar coordinates as follows:
\[
S=\{x_i=(b_i,0),x_{n+i}=(b_i,\pi/3)\}
\]
\[ 
L_j=\{x_k\in S\mid O=(0,0)\in L(x_j,x_k)\}\; ,1\leq j\leq 2n.
\]
For a unique element $x_j\in S$, $L_j=\{x_{(n+j)\bmod 2n}\}$.
\label{lm:lens-unique}
\end{lemma}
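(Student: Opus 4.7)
My plan is to first translate the geometric condition $O\in L(x_j,x_k)$ into a purely metric one, then do a short case split according to whether $x_k$ lies on the same ray as $x_j$ or on the other ray. By Definition~\ref{def:beta-influence} applied with $\beta=2$, the lens $L(x_j,x_k)$ is the intersection of the closed balls $B(x_j,\Vert x_j-x_k\Vert)$ and $B(x_k,\Vert x_j-x_k\Vert)$, so $O\in L(x_j,x_k)$ is equivalent to the two inequalities $\Vert x_j\Vert\le\Vert x_j-x_k\Vert$ and $\Vert x_k\Vert\le\Vert x_j-x_k\Vert$, i.e.\ to $\Vert x_j-x_k\Vert\ge\max\{b_j',b_k'\}$, where I write $b_\ell'$ for the radial coordinate of $x_\ell$ (so $b_\ell'=b_\ell$ if $\ell\le n$ and $b_\ell'=b_{\ell-n}$ otherwise).

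Next I split on which ray contains $x_k$. If $x_k$ lies on the same ray as $x_j$, then $\Vert x_j-x_k\Vert=\vert b_j'-b_k'\vert$, which is strictly less than $\max\{b_j',b_k'\}$ because all radii are positive and, by the uniqueness hypothesis on $x_j$, $b_j'\ne b_k'$; so the metric condition fails. If $x_k$ lies on the other ray, the angle at the origin between $x_j$ and $x_k$ is $\pi/3$, and the law of cosines gives
\begin{equation*}
\Vert x_j-x_k\Vert^2 \;=\; (b_j')^2-b_j'b_k'+(b_k')^2.
\end{equation*}
Assuming without loss of generality $b_j'\ge b_k'$, the required inequality $\Vert x_j-x_k\Vert^2\ge (b_j')^2$ simplifies to $b_k'\ge b_j'$, which forces $b_j'=b_k'$; conversely, when $b_j'=b_k'$ the expression equals $(b_j')^2$, so $O$ lies on the common boundary of the two balls and therefore belongs to the closed lens. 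Since $b_j'=b_k'$ together with $x_k$ lying on the opposite ray identifies $x_k$ as $x_{(n+j)\bmod 2n}$, the two cases combine to give $L_j=\{x_{(n+j)\bmod 2n}\}$.

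There is no real obstacle here; the whole argument is driven by the observation that the $\pi/3$ spacing makes the triangle $O\,x_j\,x_{(n+j)\bmod 2n}$ equilateral, which is exactly the threshold angle at which the origin sits on the boundary of the lens. The only point that deserves a line of care is the boundary case $b_j'=b_k'$, where one must note that $L$ is defined via closed balls so that the equality instance is kept as the unique witness.
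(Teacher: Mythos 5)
Your proposal is correct and follows essentially the same route as the paper's proof: both reduce $O\in L(x_j,x_k)$ to $\Vert x_j-x_k\Vert\geq\max\{b_j,b_k\}$, dispose of the same-ray case, and use the law of cosines with the $\pi/3$ angle to show the condition holds exactly when the two radii are equal. Your write-up is in fact slightly more complete, since you explicitly verify the converse direction (that $x_{(n+j)\bmod 2n}$ does lie in $L_j$) and justify the same-ray exclusion rather than calling it obvious.
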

\begin{proof}
Suppose that $x_k\in L_j; k\notin\{j,(n+j)\bmod 2n\}$. We prove that such $x_k$ does not exist. If $\angle x_jOx_k =0$, it is obvious that $O \notin L(x_j,x_k)$ and $x_k$ cannot be an element of $L_j$. For the case $\angle x_jOx_k =\pi/3$, by definition of $L_j$, $O \in L(x_j,x_k)$. From Definition \ref{def:beta-influence} for $\beta=2$,
\begin{align}
O\in L(x_j,x_k)&\Leftrightarrow O\in B(x_j,\Vert x_j-x_k\Vert)\cap B(x_k,\Vert x_j-x_k\Vert)\\&\Leftrightarrow\Vert x_j-O\Vert\leq\Vert x_j-x_k\Vert\; \&\; \Vert x_k-O\Vert\leq\Vert x_j-x_k\Vert\\& \label{eq:dxjxk}\Leftrightarrow \Vert x_j-x_k\Vert\geq \max\{b_j,b_k\}.
\end{align}
From the cosine formula\footnote{Cosine formula: For a triangle $\bigtriangleup abc$,
\[
\Vert ab\Vert ^2=\Vert ac\Vert ^2+\Vert bc\Vert ^2-2\Vert ac\Vert\Vert bc\Vert\cos(\angle bca) .
\]
} in triangle $\bigtriangleup x_iOx_j$, we have

\begin{equation}
\label{eq:cosin}
\Vert x_j-x_k\Vert^2= b^2_j + b^2_k -2 b_j b_k\cos(\pi/3).
\end{equation}
Equations~\eqref{eq:dxjxk} and~\eqref{eq:cosin} imply that:
\begin{align*}
b^2_j + b^2_k - b_j b_k\geq\max\{b_j^2,b_k^2\}&\Leftrightarrow\begin{cases}
b_j^2+b_k^2-b_jb_k\geq b_j^2\\
b_j^2+b_k^2-b_jb_k\geq b_k^2
\end{cases}\\
&\Leftrightarrow\begin{cases}
b_k^2-b_jb_k\geq 0\\
b_j^2-b_jb_k\geq 0
\end{cases}
\Leftrightarrow \begin{cases}
b_k\geq b_j\\
b_j\geq b_k
\end{cases}\Leftrightarrow b_k=b_j.
\end{align*}
This result contradicts the assumption of $x_k \in L_j; k\notin \{j,(n+j)\bmod 2n\}$
\end{proof}
\begin{theorem}
Computing the lens depth of a query point in the plane takes $\Omega (n\log n)$ time.
\label{thrm:lowebound-L}
\end{theorem}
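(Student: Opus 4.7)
The plan is to mirror the reduction used in Theorem \ref{thrm:lowebound} for spherical depth, but to replace the four-copy construction with the two-copy construction from Lemma \ref{lm:lens-unique}, which already encodes the key fact that $O$ lies on the boundary of $L(x_j,x_k)$ precisely when $\Vert x_j\Vert=\Vert x_k\Vert$ and the angle between them at $O$ equals $\pi/3$. Given an instance $A=\{a_1,\ldots,a_n\}$ of Element Uniqueness, I would first shift all entries onto the positive real axis in $O(n)$ time and then form the $2n$-point data set $S=\{x_i=(a_i,0),\ x_{n+i}=(a_i,\pi/3):1\leq i\leq n\}$ in polar coordinates, placing the query at $q=O=(0,0)$. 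Writing the unnormalized lens depth as $\frac{1}{2}\sum_{j=1}^{2n}|L_j|$ with $L_j$ as in Lemma \ref{lm:lens-unique}, the goal is to show that the value of this sum separates the ``all distinct'' case from the ``contains a duplicate'' case.

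In the all-distinct case Lemma \ref{lm:lens-unique} immediately gives $L_j=\{x_{(n+j)\bmod 2n}\}$ for every $j$, so the unnormalized lens depth is exactly $n$. For the duplicate case I would fix indices $i<j\leq n$ with $a_i=a_j$; then $x_i=x_j$ and $x_{n+i}=x_{n+j}$ as points of the plane, and the two degenerate pairs contribute nothing since $O\neq x_i$. The key step is to show that the cross pairs $(x_i,x_{n+j})$ and $(x_j,x_{n+i})$ each newly contribute: both pairs subtend an angle of $\pi/3$ at $O$ and have equal radii $a_i$, so the cosine-formula computation at the end of the proof of Lemma \ref{lm:lens-unique} now holds with equality for each pair, placing $O$ on the boundary of the corresponding lens. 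Hence the unnormalized lens depth is at least $n+2$.

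Combining the two analyses, computing $\LD(O;S)$ lets us decide whether $A$ contains a duplicate by a single comparison against the threshold $n/\binom{2n}{2}$. Since the construction of $S$ uses only $O(n)$ real-RAM operations and Element Uniqueness requires $\Omega(n\log n)$ time in the algebraic decision tree model \cite{ben1983lower}, this lower bound transfers to the planar lens depth. The main obstacle I anticipate is the bookkeeping in the duplicate case: one must verify not only that the two cross pairs are promoted to contributors, but also that no other pair involving $\{x_i,x_j,x_{n+i},x_{n+j}\}$ gains or loses a contribution relative to the distinct case, which follows from the strict-inequality branch of the argument in Lemma \ref{lm:lens-unique} whenever two radii differ.
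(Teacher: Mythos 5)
Your proposal is correct and follows essentially the same route as the paper's proof: the same two-copy construction $S=\{x_i=(a_i,0),\,x_{n+i}=(a_i,\pi/3)\}$ with query at the origin, the same appeal to Lemma \ref{lm:lens-unique} to get unnormalized depth exactly $n$ in the all-distinct case, and the same identification of the two cross pairs as the sole new contributors (each adding $2$ to $\sum_j |L_j|$) to reach $n+2$ when a duplicate exists. Your explicit remarks that the degenerate pairs contribute nothing and that no other pair changes status are slightly more careful than the paper's presentation, but they do not constitute a different argument.
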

\begin{figure}[!ht]
  \centering
    \includegraphics[width=0.6\textwidth]{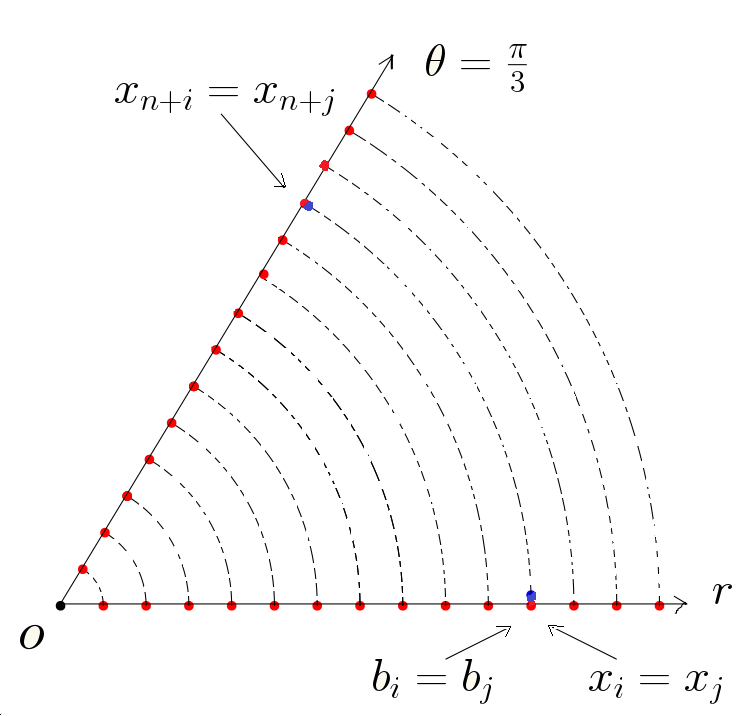}
  \caption{An illustration of $B$ and $S$ for $\beta=2$}
  \label{fig:lower-bound-L}
\end{figure}
\begin{proof}
Suppose that $B=\{b_1,...,b_n\}$, for $n\geq 2$ is a given set of real numbers. Without loss of generality, we let these numbers to be positive (see the proof of Theorem~\ref{thrm:lowebound}). For $1\leq i\leq n$, we construct set $S=\{x_i, x_{n+i}\}$ of $2n$ points in the polar coordinate system such that $x_i=(b_i,0)$ and $x_{n+i}=(b_i, \pi/3)$. See Figure~\ref{fig:lower-bound-L}. We select the query point $q=(0,0)$, and define $L_j$ as follows:
\begin{equation}
\label{eq:L-for-lowebound}
L_j=\left\{x_k\in S\mid q \in L(x_j,x_k) \right\}, \: 1\leq j\leq 2n.
\end{equation}
Using Equation~\eqref{eq:L-for-lowebound}, the unnormalized form of Equation~\eqref{eq:lens} can be presented by:
\begin{equation}
\label{eq:lens-lj}
\LD_S(q)= {\vert S\vert \choose 2}\LD(q;S)=\frac{1}{2}\sum_{1\leq j\leq 2n}\vert L_j\vert.
\end{equation}
We solve the problem of Element Uniqueness by computing $\LD_S(q)$. Suppose that every $x_j \in S$ is a unique element. In this case, Lemma~\ref{lm:lens-unique} implies that $L_j=\{x_{(n+j) \bmod 2n}\}$. From Equation~\eqref{eq:lens-lj}, we have
\[
\LD_S(q)=\frac{1}{2}\sum_{1\leq j\leq 2n} 1 = n.
\]
Now assume that there exists some $i\neq j$ such that $x_i = x_j$ in $S$. In this case, 
\begin{align*}
&L_j=L_i=\{x_{(n+i)\bmod 2n},x_{(n+j)\bmod 2n}\}\\&L_{(n+i)\bmod 2n}=L_{(n+j)\bmod 2n}=\{x_{i},x_{j}\}.
\end{align*} 
As such, for $1\leq t\leq 2n$ and $t\notin \{i,j,(i+n)\bmod 2n,(j+n)\bmod 2n\}$,
\begin{align*}
\LD_S(q)&=\frac{1}{2}\sum_t\vert L_t\vert+\frac{1}{2}(\vert L_i\vert+\vert L_j\vert+\vert L_{(i+n)\bmod 2n}\vert+\vert L_{(j+n)\bmod 2n}\vert)\\&=\frac{1}{2}(2(n-2))+\frac{1}{2}(2+2+2+2)=n+2.
\end{align*}
For the case of having more duplicated elements in $S$,
\begin{equation}
\label{eq:lens-lower}
\LD_S(q)=n+2c,
\end{equation}
where $c$ is the number of duplicates. Therefore the elements of $S$ are unique if and only if $c=0$ in Equation~\eqref{eq:lens-lower}. This implies that the computation of lens depth requires $\Omega (n\log n)$ time. Note that all of the other computations in this reduction take $O(n)$.  
\end{proof}
\begin{lemma}
\label{lm:angle-beta-influence}
For $a,b\in\mathbb{R}^2$, suppose that $u_{\beta}$ is a fixed intersection point between the two disks constructing the $S_{\beta}(a,b)$. $\theta=\angle au_{\beta}b=\cos^{-1}(1-1/\beta)$.
\end{lemma}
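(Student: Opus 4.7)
The proof will be a short Euclidean-geometry computation based on the symmetry of the two defining disks of $S_\beta(a,b)$, followed by the law of cosines.

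First I would record two structural identities from Definition~\ref{def:beta-influence}: summing the two centers gives $c_a + c_b = \frac{\beta}{2}a + (1-\frac{\beta}{2})b + (1-\frac{\beta}{2})a + \frac{\beta}{2}b = a+b$, while subtracting gives $c_a - c_b = (\beta-1)(a-b)$. Hence $c_a$ and $c_b$ are collinear with $a$ and $b$ and are symmetric about the midpoint $M = (a+b)/2$ of the segment $\overline{ab}$. Because the two disks $B(c_a,r)$ and $B(c_b,r)$ have equal radius $r = \tfrac{\beta}{2}\|a-b\|$, the radical axis (hence the line through their intersection points) is the perpendicular bisector of $\overline{c_a c_b}$, which coincides with the perpendicular bisector of $\overline{ab}$. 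In particular $u_\beta$ lies on this perpendicular bisector, so $\|u_\beta - a\| = \|u_\beta - b\|$.

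Next I would choose coordinates with $M$ at the origin and $\overrightarrow{ba}$ along the positive $x$-axis, so that $a=(d/2,0)$, $b=(-d/2,0)$ with $d=\|a-b\|$. The identities above give $c_a = \bigl((\beta-1)d/2,\,0\bigr)$ and $c_b = -c_a$, and $u_\beta$ has the form $(0,h)$. The defining equation $\|u_\beta - c_a\|^2 = r^2$ becomes
\begin{equation*}
\left(\tfrac{(\beta-1)d}{2}\right)^2 + h^2 = \left(\tfrac{\beta d}{2}\right)^2,
\end{equation*}
so $h^2 = (d/2)^2\bigl(\beta^2-(\beta-1)^2\bigr) = (d/2)^2(2\beta-1)$, and therefore
\begin{equation*}
\|u_\beta - a\|^2 = (d/2)^2 + h^2 = (d/2)^2 \cdot 2\beta = \tfrac{\beta d^2}{2}.
\end{equation*}

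Finally I would apply the law of cosines to $\triangle a u_\beta b$: using $\|u_\beta-a\| = \|u_\beta-b\| = d\sqrt{\beta/2}$ and $\|a-b\|=d$ yields $d^2 = \beta d^2 - \beta d^2 \cos\theta$, i.e.\ $\cos\theta = 1 - 1/\beta$, which is the claimed identity $\theta = \cos^{-1}(1-1/\beta)$. There is no real obstacle here; the only subtlety worth flagging is the sign check $|1-\beta/2|$ when computing $\|a-c_a\|$ for $\beta \in [1,2)$ versus $\beta > 2$, which is bypassed entirely by working with $\|u_\beta - c_a\|$ and $\|u_\beta-a\|$ directly rather than with distances from $a$ to $c_a$.
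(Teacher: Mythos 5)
Your proposal is correct and follows essentially the same route as the paper's proof: both compute the height $h$ of $u_\beta$ above the midpoint of $\overline{ab}$ (obtaining $h^2=\frac{2\beta-1}{4}\Vert a-b\Vert^2$), deduce $\Vert u_\beta-a\Vert^2=\frac{\beta}{2}\Vert a-b\Vert^2$, and finish with the law of cosines in $\triangle au_\beta b$; you merely carry out the computation in explicit coordinates where the paper uses the Pythagorean relation at the foot $m$ directly. Your radical-axis justification that $u_\beta$ lies on the perpendicular bisector of $\overline{ab}$ is a small point the paper leaves implicit, but it does not change the argument.
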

\begin{proof}
From Definition \ref{def:beta-influence},
\[S_{\beta}(a,b)=B(c_a,(\beta/2)\Vert a-b\Vert)\cap B(c_b,(\beta/2)\Vert a-b\Vert),\] where $c_a=(\beta/2)a+(1-\beta/2)b$ and $c_b=(\beta/2)b+(1-\beta/2)a$. It can be verified that $\Vert c_a-c_b\Vert=(\beta-1)\Vert a-b\Vert$. Suppose that $m$ is the middle point of $\overline{ab}$ and $h=\Vert u_{\beta}-m\Vert$.  See Figure \ref{fig:beta-influence-theta}. The value of $\theta$ can be computed as follows.
\begin{figure}[!ht]
  \centering
    \includegraphics[width=0.6\textwidth]{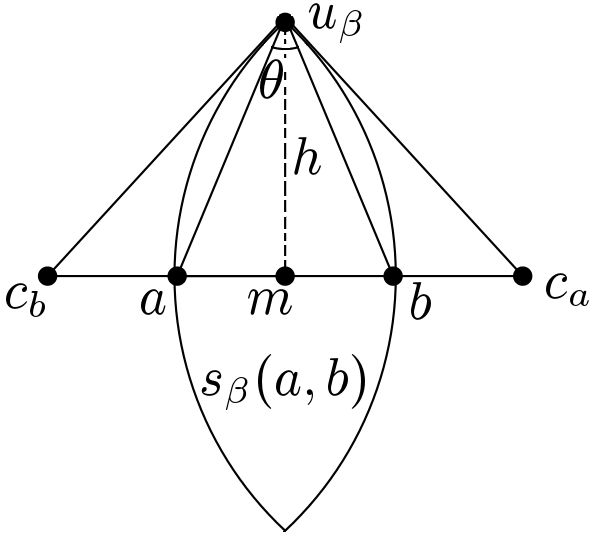}
  \caption{The $\beta$-influence region for $a,b\in \mathbb{R}^2$.}
  \label{fig:beta-influence-theta}
\end{figure}
\begin{align*}
h^2=\Vert u_{\beta}-a\Vert^2-\Vert a-m\Vert^2&=\Vert u_{\beta}-c_b\Vert^2-\Vert c_b-m\Vert^2\\&=(\frac{\beta}{2}\Vert a-b\Vert)^2-(\frac{\beta-1}{2}\Vert a-b\Vert)^2\\&=\frac{2\beta-1}{4}\Vert a-b\Vert^2\\\Rightarrow \Vert u_{\beta}-a\Vert^2&=\frac{2\beta-1}{4}\Vert a-b\Vert^2+\Vert a-m\Vert^2=\frac{\beta}{2}\Vert a-b\Vert^2
\end{align*}
The cosine formula in triangle $\bigtriangleup au_{\beta}b$ implies that
\begin{align*}
\Vert a-b\Vert^2&=\Vert u_{\beta}-a\Vert^2+\Vert u_{\beta}-b\Vert^2-2\Vert u_{\beta}-a\Vert\Vert u_{\beta}-b\Vert\cos(\theta)\\&=2\Vert u_{\beta}-a\Vert^2-2\Vert u_{\beta}-a\Vert^2\cos(\theta)\\&=\beta\Vert a-b\Vert^2-\beta\Vert a-b\Vert^2\cos(\theta)\Rightarrow \theta=\cos^{-1}(1-\frac{1}{\beta}).
\end{align*}
\end{proof}
\begin{theorem}
For $1<\beta<\infty$, computing the $\beta$-skeleton depth of a query point in the plane requires $\Omega(n\log n)$ time.
\label{thrm:lowebound-beta}
\end{theorem}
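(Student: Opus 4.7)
The plan is to mimic the reduction from Element Uniqueness used for lens depth in Theorem \ref{thrm:lowebound-L}, but with the key angle $\pi/3$ replaced by the angle prescribed by Lemma \ref{lm:angle-beta-influence}. Specifically, given positive reals $B=\{b_1,\dots,b_n\}$, set $\theta=\cos^{-1}((\beta-1)/\beta)$, let $x_i=(b_i,0)$ and $x_{n+i}=(b_i,\theta)$ in polar coordinates, take $q=(0,0)$, and define $L_j^\beta=\{x_k\in S\mid q\in S_\beta(x_j,x_k)\}$. As in the $\beta=2$ case, the reduction itself takes $O(n)$ time, so a sub-$\Omega(n\log n)$ algorithm for the planar $\beta$-skeleton depth would contradict Ben-Or's lower bound for Element Uniqueness.

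The heart of the argument is the generalization of Lemma \ref{lm:lens-unique}: if all $x_j$ are distinct, then $L_j^\beta=\{x_{(n+j)\bmod 2n\}}$. I would prove this by a direct distance calculation. Translating the condition $q\in S_\beta(x_j,x_k)$ as in the derivation of Theorem \ref{thrm:q-skeleton}, I would expand $\|q-c_j\|^2\le r^2$ and $\|q-c_k\|^2\le r^2$ with $c_j,c_k$ as in Definition \ref{def:beta-influence} and $r=(\beta/2)\|x_j-x_k\|$. Writing the angle between $\overrightarrow{qx_j}$ and $\overrightarrow{qx_k}$ as $\alpha\in\{0,\theta\}$, the two squared-distance inequalities simplify (after cancellation and using $(\beta/2)^2-(1-\beta/2)^2=\beta-1$) to the pair
\begin{equation*}
(\beta-1)\,b_k \geq \beta\,b_j\cos\alpha, \qquad (\beta-1)\,b_j \geq \beta\,b_k\cos\alpha.
\end{equation*}
For $\alpha=0$ and $1<\beta<\infty$, these demand both $b_k/b_j\ge \beta/(\beta-1)>1$ and the reverse, which is impossible; hence no pair at angle $0$ contributes. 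For $\alpha=\theta$, using $\cos\theta=(\beta-1)/\beta$ the two inequalities collapse to $b_k\ge b_j$ and $b_j\ge b_k$, forcing $b_j=b_k$.

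Once this generalized lemma is in hand, the depth count goes through exactly as in Theorem \ref{thrm:lowebound-L}. In the unique case, each $x_j$ pairs only with its partner $x_{(n+j)\bmod 2n}$, giving the unnormalized depth $\SkD_\beta^S(q)=\tfrac12\sum_j|L_j^\beta|=n$. Any duplicate $b_i=b_j$ ($i\neq j$) creates the four extra qualifying pairs $(x_i,x_{n+j})$, $(x_j,x_{n+i})$, $(x_{n+i},x_j)$, $(x_{n+j},x_i)$, raising the value by exactly $2$; with $c$ total duplications the depth becomes $n+2c$. Thus the elements of $B$ are unique iff $\SkD_\beta^S(q)=n$, and the reduction is complete.

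The main obstacle is checking the two algebraic simplifications above for general $1<\beta<\infty$, including the case $1<\beta<2$ where $1-\beta/2$ changes sign and one must be careful that the monotone manipulations (dividing by $b_j,b_k>0$ and by $\beta-1>0$) preserve directions. A secondary subtlety is handling the hypothesis that all $b_i$ are positive: just as in the proof of Theorem \ref{thrm:lowebound}, a shift of the input onto the positive real axis at $O(n)$ cost is legitimate because the Element Uniqueness question is translation-invariant. With these pieces the bound $\Omega(n\log n)$ follows for every $\beta\in(1,\infty)$, completing the missing middle case between the already-established lower bounds for $\beta=1$ and (to come) $\beta=\infty$.
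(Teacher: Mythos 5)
Your proposal is correct and follows essentially the same route as the paper: the same construction $x_i=(b_i,0)$, $x_{n+i}=(b_i,\theta)$ with $\theta=\cos^{-1}(1-1/\beta)=\cos^{-1}((\beta-1)/\beta)$ from Lemma \ref{lm:angle-beta-influence}, the same query point, and the same count $n$ versus $n+2c$. The only difference is that where the paper's proof sketch appeals to Lemma \ref{lm:angle-beta-influence} and a figure, you carry out the membership algebra explicitly (and correctly --- the identity $(\beta/2)^2-(1-\beta/2)^2=\beta-1$ holds for all $\beta>1$ regardless of the sign of $1-\beta/2$, so the worry you flag about $1<\beta<2$ evaporates), which if anything makes your version more complete than the paper's.
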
 
\begin{proof sketch}
It is enough to generalize the reduction technique in Theorem \ref{thrm:lowebound-L}. As Lemma \ref{lm:angle-beta-influence} suggests, we need to choose $\theta=cos^{-1}(1-1/\beta)$ to construct $S=\{x_i=(b_i,0),x_{n+i}=(b_i,\theta)\}$, where $1\leq i\leq n$ and $b_i\in B$ defined in the proof of Theorem \ref{thrm:lowebound-L}. Figure \ref{fig:lower-bound-beta-other} illustrates that for every unique element $x_i\in S$, there exists only one element in $S$ such that the corresponding $\beta$-influence region contains $O$. As can be seen in this figure, $O$ is not contained in the $\beta$-influence region $S_{\beta}(x_i,y)$. Similar to the proof of Theorem \ref{thrm:lowebound-L}, it can be deduced that $\SkD_{\beta}(O;S)=n$ if every element in $S$ is unique. However, $\SkD_{\beta}(O;S)=n+2c$ if there exist $c$ duplicates among the elements of $S$. Note that we use the real \textit{RAM} model of computation in order to calculate $\theta$, where we need the square root of a real number to be computed in constant time.
\begin{figure}[!ht]
  \centering
    \includegraphics[width=0.6\textwidth]{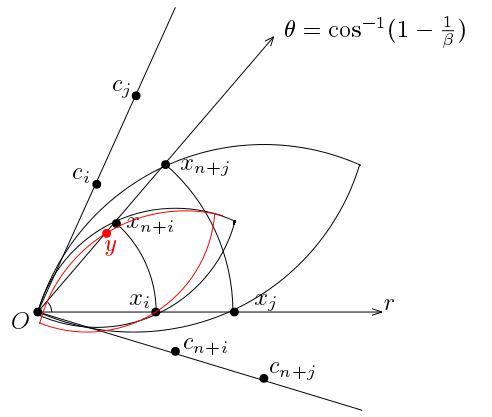}
  \caption{The $\beta$-influence regions for some elements of $S$ when $1<\beta<\infty.$}
  \label{fig:lower-bound-beta-other}
\end{figure}
\end{proof sketch}
\section{Lower Bound for the Planar $\beta$-skeleton Depth, $\beta=\infty$}
Suppose that $B$ is a set of positive real numbers as introduced in the proof of Theorem \ref{thrm:lowebound-L}. From the proof of Theorem \ref{thrm:lowebound-beta}, the rotation angle $\theta=cos^{-1}(1-1/\beta)$ is equal to $0$ if $\beta=\infty$. It means that there is not a proper rotation angle to make the second copy of the data points. However, it is enough to shift up the points by some constant (e.g. $\max\{b_i\}$), and construct $S=\{x_i=(b_i,0),x_{n+i}=(b_i,\max\{b_i\})\}$ (see Figure \ref{fig:lower-bound-inf}).
\begin{figure}[!ht]
  \centering
    \includegraphics[width=0.6\textwidth]{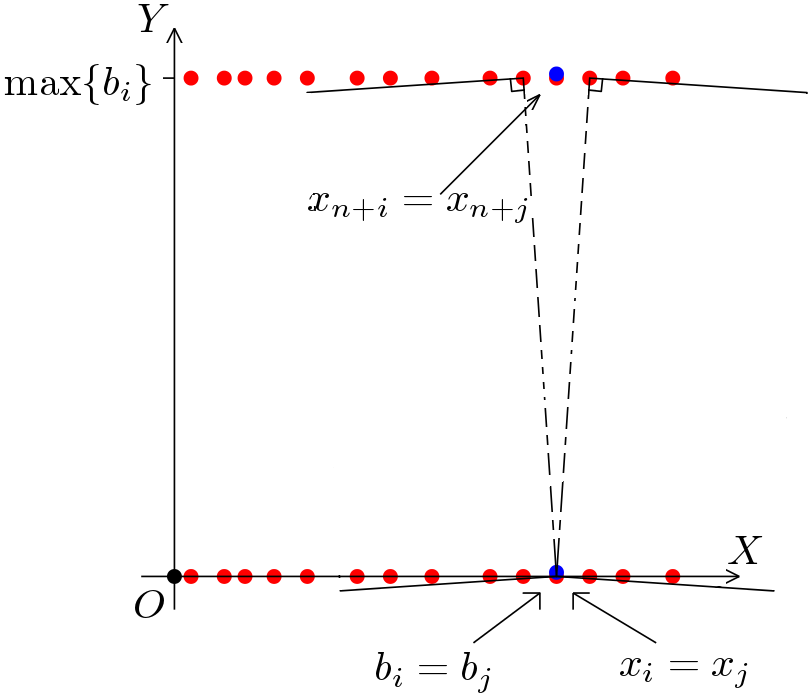}
  \caption{An illustration of $B$ and $S$ when $\beta=\infty$}
  \label{fig:lower-bound-inf}
\end{figure}

We select $q=(0,0)$, and define 
\begin{equation}
\label{eq:S-inf-for-lower-inf}
S_{\infty j}=\left\{x_k\in S\mid q \in S_{\infty}(x_j,x_k) \right\}, \: 1\leq j\leq 2n.
\end{equation}
From Definition \ref{def:beta-influence}, it can be verified that
\begin{equation*}
q\in S_{\infty}(x_j,x_k)\Leftrightarrow \max\{\angle qx_kx_j,\angle qx_jx_k\}\leq \frac{\pi}{2}.
\end{equation*} 
Suppose that every $x_i\in S; 1\leq i\leq n$ is a unique element. In this case,
\begin{equation}
\label{eq:s-inf-xi-unique}
S_{\infty i}=\{x_{n+1},x_{n+2},...,x_{n+i}\},\;S_{\infty(n+i)}=\{x_i,x_{i+1},...,x_n\}.
\end{equation}
From Equation \eqref{eq:S-inf-for-lower-inf}, the unnormalized form of Equation \eqref{eq:beta} for $\beta=\infty$ can be presented by:
\begin{equation}
\label{eq:SkD-inf-unnormalized}
\SkD_{\infty S}(q)={\vert S\vert\choose 2}\SkD_{\infty}(q;S)=\frac{1}{2}\sum_{1\leq j\leq 2n}\vert S_{\infty j}\vert.
\end{equation}
Equations \eqref{eq:s-inf-xi-unique} and \eqref{eq:SkD-inf-unnormalized} imply that
\begin{align*}
\label{eq:SkD-inf-unique-elements}
\SkD_{\infty S}(q)&=\frac{1}{2}\sum_{1\leq i\leq n}(\vert S_{\infty i}\vert+\vert S_{\infty (n+i)}\vert)\\&=\frac{1}{2}\sum_{1\leq i\leq n}(i+(n-i+1))=\frac{n(n+1)}{2}={n+1\choose 2}.
\end{align*}
Now assume that for some $i\neq k$, $x_i = x_k$ in $S$. Without loss of generality, suppose that $1\leq i<k\leq n$. In this case, 
\begin{equation}
\label{eq:s-inf-xi-non-unique1}
S_{\infty i}=S_{\infty k}=\{x_{n+1},x_{n+2},...,x_{n+i},x_{n+k}\}
\end{equation}
\begin{equation}
\label{eq:s-inf-xi-non-unique2}
S_{\infty (n+i)}=S_{\infty (n+k)}=\{x_k,x_i,x_{i+1},...,x_n\}.
\end{equation} 
As such, for $1\leq t\leq n$ and $t\notin \{i,k\}$, Equations \eqref{eq:s-inf-xi-unique}, \eqref{eq:SkD-inf-unnormalized}, \eqref{eq:s-inf-xi-non-unique1}, and \eqref{eq:s-inf-xi-non-unique2} can be used to compute $\SkD_{\infty S}(q)$ as follows.

\begin{align*}
&\SkD_{\infty S}(q)=\frac{1}{2}\sum_{1\leq j\leq 2n}\vert S_{\infty j}\vert\\&=\frac{1}{2}\left(\sum_t(\vert S_{\infty t}\vert+S_{\infty (n+t)}\vert)+\vert S_{\infty i}\vert+\vert S_{\infty(n+i)}\vert+\vert S_{\infty k}\vert+\vert S_{\infty(n+k)}\vert\right)\\&=\frac{1}{2}\left(\sum_t(t+n-t+1)+(i+1)+(n-i+2)+(k+1)+(n-k+2)\right)\\&=\frac{1}{2}\left(\sum_t(n+1)+2(n+3)\right)=\frac{1}{2}\left((n-2)(n+1)+2(n+1)+4\right)\\&=\frac{1}{2}(n(n+1)+4)={n+1\choose 2}+2
\end{align*}
For the general case of having $c$ duplicates among the elements of $S$,
\begin{equation}
\label{eq:SkD-inf-lower}
\SkD_{\infty S}(q)={n+1\choose 2}+2c.
\end{equation}
Therefore the elements of $S$ are unique if and only if $c=0$ in Equation~\eqref{eq:SkD-inf-lower}. The above results imply that the computation of $\beta$-skeleton depth, where $\beta=\infty$, requires $\Omega (n\log n)$ time.
\chapter{Relationships and Experiments}
\label{ch:experiments}
In this chapter we study the relationships among different depth functions such as $\beta$-skeleton depth, halfspace depth, and simplicial depth in two different ways. First, we focus on the geometric properties of the influence regions. Second, the idea of fitting function is applied to approximate one data depth using another one. Our main motivation to study the relationships among different depth functions is derived from the complexity of computations, especially in higher dimensions. For example, computing the $\beta$-skeleton depth using brute force algorithm is much easier and relatively faster than computing most of the other depth functions such as halfspace depth and simplicial depth. Unlike halfspace depth and simplicial depth, the time complexity of $\beta$-skeleton depth grows linearly in the dimension. Recall that the time complexity of $\beta$-skeleton depth using a brute force algorithm in dimension $d$ is $O(dn^2)$. Whereas, the best known algorithm for computing the simplicial depth in the higher dimension $d$ is brute force which takes $O(n^{d+1})$ time. Computing the halfspace depth is an NP-hard problem when the dimension $d$ is a part of input. See Sections \ref{sec:halfspace}, \ref{sec:simplicial}, and \ref{sec:beta-skeleton}.
\section{Geometric Relationships}
\label{sec:geo-relation}
Some geometric properties related to the $\beta$-influence regions and simplices are explored in this section. These properties help to bound each one of $\beta$-skeleton depth and simplicial depth in terms of  the other one.  
\subsection{Convergence of $\beta$-skeleton Depth}
\begin{lemma}
\label{lm:beta-influence-containment}
For $\beta'>\beta\geq 1$ and $a,b\in \mathbb{R}^2$, $S_{\beta}(a,b)\subseteq S_{\beta'}(a,b)$, where the $\beta$-influence region $S_{\beta}(a,b)$ is the intersection of two disks $B(C_{ab\beta},R_{ab\beta})$ and $B(C_{ba\beta},R_{ba\beta})$, $C_{ab\beta}=(\beta/2)(a-b)+b$, and  $R_{ab\beta}=(\beta/2)d(a,b)$.
\end{lemma}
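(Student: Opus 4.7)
The plan is to reduce the claimed inclusion of intersections to a pair of disk-by-disk inclusions, namely $B(C_{ab\beta},R_{ab\beta}) \subseteq B(C_{ab\beta'},R_{ab\beta'})$ and $B(C_{ba\beta},R_{ba\beta}) \subseteq B(C_{ba\beta'},R_{ba\beta'})$ for $\beta < \beta'$. Once these are in hand, intersecting the two containments gives $S_\beta(a,b) \subseteq S_{\beta'}(a,b)$ immediately, so the only real content is the monotonicity of each one-parameter family of disks.

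I would first observe the geometric picture to motivate the computation: since $C_{ab\beta} = b + (\beta/2)(a-b)$, the centers of the family $\{B(C_{ab\beta},R_{ab\beta})\}_{\beta \geq 1}$ all lie on the ray from $b$ through $a$, and $R_{ab\beta} = (\beta/2)\Vert a-b\Vert = \Vert C_{ab\beta}-b\Vert$, so $b$ is on the boundary of every such disk. The disks therefore share a common tangent line at $b$ (perpendicular to $\overline{ab}$) and lie on the same side of it, which forces them to be nested with the larger radius containing the smaller. I would rather make this rigorous algebraically than argue via tangent circles. Expanding $\Vert p - C_{ab\beta}\Vert^2 \leq R_{ab\beta}^2$ and cancelling the $(\beta^2/4)\Vert a-b\Vert^2$ terms from both sides, membership in $B(C_{ab\beta},R_{ab\beta})$ is equivalent to
\begin{equation*}
\beta\,(p-b)\cdot(a-b) \;\geq\; \Vert p-b\Vert^{2}.
\end{equation*}

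For fixed $p$, $a$, $b$, this condition is monotone non-decreasing in $\beta$: if $p = b$ both sides vanish; if $p \neq b$ and $(p-b)\cdot(a-b) \leq 0$ the left-hand side is non-positive while the right is strictly positive, so the inequality fails for every $\beta \geq 1$; and if $(p-b)\cdot(a-b) > 0$ the inequality becomes $\beta \geq \Vert p-b\Vert^{2}/((p-b)\cdot(a-b))$, which is obviously preserved under enlarging $\beta$. Hence $B(C_{ab\beta},R_{ab\beta}) \subseteq B(C_{ab\beta'},R_{ab\beta'})$ whenever $\beta < \beta'$. Swapping the roles of $a$ and $b$ gives the analogous inclusion for the second family, and the result follows by taking intersections. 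I do not expect any obstacle beyond verifying this cancellation and the trichotomy on the sign of $(p-b)\cdot(a-b)$; the argument does not use planarity and in fact carries over verbatim to $\mathbb{R}^d$.
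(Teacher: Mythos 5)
Your proof is correct and follows essentially the same route as the paper's: both reduce the claim to the two disk-by-disk inclusions $B(C_{ab\beta},R_{ab\beta})\subseteq B(C_{ab\beta'},R_{ab\beta'})$ and its mirror image, and both rest on the same cancellation of the $(\beta^2/4)\Vert a-b\Vert^2$ term, which turns membership into the condition $\beta\,(p-b)\cdot(a-b)\geq\Vert p-b\Vert^{2}$. Your version is marginally cleaner in that you verify monotonicity in $\beta$ for every point of the disk at once, whereas the paper checks only boundary points of the smaller disk and then tacitly relies on convexity to conclude containment of the whole disk.
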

\begin{proof}
To prove $S_{\beta}(a,b)\subseteq S_{\beta'}(a,b)$ which is equivalent with Equation \eqref{eq:containment-beta-betaprime},
\begin{equation}
\label{eq:containment-beta-betaprime}
B(C_{ab\beta},R_{ab\beta})\cap B(C_{ba\beta},R_{ba\beta}) \subseteq B(C_{ab\beta'},R_{ab\beta'})\cap B(C_{ba\beta'},R_{ba\beta'})
\end{equation} it suffices to prove both inclusion relationships $B(C_{ab\beta},R_{ab\beta})\subseteq B(C_{ab\beta'},R_{ab\beta'})$ and $B(C_{ba\beta},R_{ba\beta})\subseteq B(C_{ba\beta'},R_{ba\beta'})$. We only prove the first one, and the second one can be proved similarly.
Suppose that $\beta<\beta'=\beta+\varepsilon; \varepsilon>0$. It is trivial to check that two disks $B(C_{ab\beta},R_{ab\beta})$ and $B(C_{ab\beta'},R_{ab\beta'})$ meet at $b$. See Figure~\ref{fig:beta-beta'-disks}. Let $t\neq b$ be an extreme point of $B(C_{ab\beta},R_{ab\beta})$. This implies that
\begin{align*}
d(t,C_{ab\beta})&=R_{ab\beta} \Leftrightarrow d(t,\frac{\beta(a-b)}{2}+b)=\frac{\beta(d(a,b))}{2}\\&\Leftrightarrow \left\Vert \frac{\beta(a-b)}{2}+(b-t)\right\Vert^2=\left(\frac{\beta(a-b)}{2}\right)^2\\&
\Leftrightarrow \Vert b-t\Vert^2-\beta(b-a)\cdot (b-t)=0.
\end{align*}
The last equality means that $(b-a)\cdot (b-t)\geq 0$. Hence,
\begin{align*}
&\Vert b-t\Vert^2-\beta(b-a)\cdot (b-t)=0\\
&\Leftrightarrow \Vert b-t\Vert^2-(\beta+\varepsilon)(b-a)\cdot (b-t)<0\\
&\Leftrightarrow \Vert b-t\Vert^2-\beta'(b-a)\cdot (b-t)<0\\
&\Leftrightarrow d(t,C_{ab\beta'})-R_{ab\beta'}<0\\&\Leftrightarrow t\in intB(C_{ab\beta'},R_{ab\beta'}).
\end{align*}
From the above calculations, every extreme point of $B(C_{ab\beta},R_{ab\beta})$ is an interior point of $B(C_{ab\beta'},R_{ab\beta'})$; therefore, $B(C_{ab\beta},R_{ab\beta})\subseteq B(C_{ab\beta'},R_{ab\beta'})$.  
\end{proof}
\begin{figure}[!ht]
  \centering
    \includegraphics[width=0.8\textwidth]{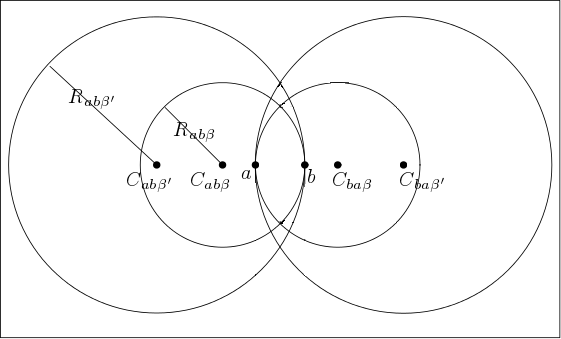}
  \caption{$S_{\beta}(a,b)$ and $S_{\beta'}(a,b)$ .}
  \label{fig:beta-beta'-disks}
\end{figure}
\begin{lemma}
Suppose that $\beta'>\beta\geq 1$. For a query point $q$ and given data set $S$ in $\mathbb{R}^2$, $\SkD_{\beta'}(q;S)\geq \SkD_{\beta}(q;S)$.
\label{lm:beta-betaprime} 
\end{lemma}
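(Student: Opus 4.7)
The plan is to deduce this lemma as a direct consequence of the containment result just established in Lemma \ref{lm:beta-influence-containment}, applied pointwise to each pair of data points.

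First, I would recall the definition of $\beta$-skeleton depth from Equation \eqref{eq:beta}, namely that $\SkD_{\beta}(q;S)$ is a normalized sum over all pairs $1 \leq i < j \leq n$ of the indicator $I(q \in S_{\beta}(x_i,x_j))$, where the same normalization $1/\binom{n}{2}$ is used for every $\beta$. This makes clear that to compare $\SkD_{\beta'}$ with $\SkD_{\beta}$ it is enough to compare the summands term by term.

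Next, I would apply Lemma \ref{lm:beta-influence-containment} to each pair $(x_i, x_j)$ with $a = x_i$ and $b = x_j$: since $\beta' > \beta \geq 1$, we have $S_{\beta}(x_i,x_j) \subseteq S_{\beta'}(x_i,x_j)$. Consequently, whenever $q \in S_{\beta}(x_i,x_j)$ we also have $q \in S_{\beta'}(x_i,x_j)$, which yields the pointwise inequality
\begin{equation*}
I(q \in S_{\beta}(x_i,x_j)) \leq I(q \in S_{\beta'}(x_i,x_j))
\end{equation*}
for every pair of indices $i < j$.

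Finally, summing this inequality over all $\binom{n}{2}$ pairs and multiplying by the common normalization factor $1/\binom{n}{2}$ gives $\SkD_{\beta}(q;S) \leq \SkD_{\beta'}(q;S)$, which is exactly the claim. There is essentially no obstacle here: the main content of the result already lives in the geometric containment Lemma \ref{lm:beta-influence-containment}, and the remaining argument is a purely formal monotonicity of a sum of indicators. The only caveat to keep in mind is the edge case $\beta = \infty$, which is covered by the same reasoning since $S_{\beta}(x_i,x_j)$ monotonically approaches the slab $S_{\infty}(x_i,x_j)$ from within as $\beta$ grows, so the containment still holds.
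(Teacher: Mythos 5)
Your proposal is correct and follows essentially the same route as the paper: apply Lemma \ref{lm:beta-influence-containment} pairwise to get $S_{\beta}(x_i,x_j)\subseteq S_{\beta'}(x_i,x_j)$, deduce the termwise inequality of indicators, and sum over all $\binom{n}{2}$ pairs with the common normalization. The paper's proof is exactly this two-line argument, so there is nothing to add.
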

\begin{proof}
For any arbitrary pair of points $x_i$ and $x_j$ in $S$, Lemma \ref{lm:beta-influence-containment} implies that $S_{\beta}(x_i,x_j)\subseteq S_{\beta'}(x_i,x_j)$. Hence, Equation \eqref{eq:beta-betaprime} is sufficient to complete the proof.
\begin{equation}
\label{eq:beta-betaprime}
\SkD_{\beta'}(q;S)= \sum _{x_i,x_j \in S} {I(q \in S_{\beta'}(x_i, x_j))} \geq \sum _{x_i,x_j \in S} {I(q \in S_{\beta}(x_i, x_j))}= \SkD_{\beta}(q;S)
\end{equation} 
\end{proof}

\begin{definition}
\label{def:generic}
A query point $q$ is generic with respect to a data set $S=~\{x_1, x_2,\dots, x_n\}$ if for all $x_i,x_j\in S$, $q$ does not lie on the boundary of $S_{\infty}(x_i,x_j)$ or, on the line segment $\overline{x_ix_j}$. 
\end{definition}

\begin{lemma}
\label{lm:slab-sub-beta-influence}
Suppose that $S\subset\mathbb{R}^2$ is a given data set and $Q\subset\mathbb{R}^2$ is a set of generic query points. Assuming that $\mathcal{R}\subset\mathbb{R}^2$ is a large enough finite range that contains $S$ and $Q$, for two distinct elements $x_i$ and $x_j$ in $S$, and $q\in Q$,
\begin{equation}
\label{eq:slab-sub-beta-influence}
\exists\beta_{ij}<\infty; q\in S_{\infty}(x_i,x_j)\Rightarrow q\in S_{\beta_{ij}}(x_i,x_j).
\end{equation}
There also exists a $\beta^*<\infty$ that satisfies \eqref{eq:slab-sub-beta-influence} for all $\{x_i,x_j\}\subset S, x_i\neq x_j$. 
\end{lemma}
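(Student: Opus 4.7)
The plan is to give an explicit algebraic witness for each $\beta_{ij}$ by rewriting the two defining ball inequalities of $S_\beta(x_i,x_j)$ in dot-product form, and then to take a finite maximum over the pairs of data points. First I would expand $\Vert q - c_i\Vert^2 \leq r^2$, where $c_i = x_j + (\beta/2)(x_i - x_j)$ and $r = (\beta/2)\Vert x_i - x_j\Vert$; mimicking the manipulation in the proof of Theorem~\ref{thrm:q-skeleton}, this reduces to $\Vert q - x_j\Vert^2 \leq \beta\,(q - x_j)\cdot(x_i - x_j)$. Symmetrically, the second ball inequality collapses to $\Vert q - x_i\Vert^2 \leq \beta\,(q - x_i)\cdot(x_j - x_i)$. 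Thus membership $q \in S_\beta(x_i,x_j)$ is characterized by this pair of scalar inequalities.

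Next I would interpret the generic hypothesis. The slab $S_\infty(x_i,x_j)$ is exactly the set of $q$ satisfying $(q - x_j)\cdot(x_i - x_j) \geq 0$ and $(q - x_i)\cdot(x_j - x_i) \geq 0$, with the two perpendicular lines at $x_j$ and $x_i$ giving equality. Because $q$ is generic it lies on neither of these lines and is distinct from $x_i,x_j$, so both dot products are strictly positive and both squared distances $\Vert q - x_j\Vert^2, \Vert q - x_i\Vert^2$ are strictly positive (the latter being finite trivially). Setting
\begin{equation*}
\beta_{ij} \;=\; \max\left\{\frac{\Vert q - x_j\Vert^2}{(q - x_j)\cdot(x_i - x_j)},\;\frac{\Vert q - x_i\Vert^2}{(q - x_i)\cdot(x_j - x_i)}\right\}
\end{equation*}
gives a finite positive real, and every $\beta \geq \beta_{ij}$ satisfies both inequalities simultaneously, so $q \in S_{\beta_{ij}}(x_i,x_j)$. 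Lemma~\ref{lm:beta-influence-containment} then guarantees that the containment persists for every larger value of $\beta$.

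For the uniform statement I would simply note that $S$ is finite, so only ${n \choose 2}$ pairs $\{x_i,x_j\}$ arise, and hence only finitely many thresholds $\beta_{ij}$ (one per pair with $q \in S_\infty(x_i,x_j)$; the remaining pairs make \eqref{eq:slab-sub-beta-influence} vacuous). Setting $\beta^* = \max_{\{x_i,x_j\}\subset S}\beta_{ij}$ gives a single finite value that witnesses the implication for every pair.

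The main obstacle is a conceptual one rather than a technical one: verifying that the generic hypothesis lines up exactly with strict positivity of both numerators and denominators in $\beta_{ij}$, i.e.\ that excluding the slab boundary and the segment $\overline{x_ix_j}$ is enough. The bounded range $\mathcal{R}$ plays no role in the argument for a fixed $q$; it becomes relevant only if one wants a single $\beta^*$ to serve all $q \in Q$ simultaneously, which would require a uniform positive lower bound on the slab dot products. Since genericity is an open condition and not a quantitative one, such a uniform bound does not follow without restricting $Q$ to a finite set (or to points bounded away from each slab boundary inside $\mathcal{R}$), and I would flag this in the write-up.
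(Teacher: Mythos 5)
Your proof is correct, but it takes a genuinely different route from the paper. The paper's proof is geometric: it drops lines from $q$ through $x_i$ and $x_j$, intersects them with the perpendicular bisector line of $\overline{x_ix_j}$, takes the farther intersection point $q'$, and reads off $\beta_{ij} = 1/(1-\cos\theta)$ from the angle $\theta$ at $q'$ via Lemma~\ref{lm:angle-beta-influence} (a second variant in the paper uses a perpendicular through $q$ instead). You instead expand the two ball inequalities $\Vert q-c_i\Vert\leq r$ and $\Vert q-c_j\Vert\leq r$ algebraically, exactly as in the proof of Theorem~\ref{thrm:q-skeleton}, obtaining the equivalent pair $\Vert q-x_j\Vert^2\leq\beta\,(q-x_j)\cdot(x_i-x_j)$ and $\Vert q-x_i\Vert^2\leq\beta\,(q-x_i)\cdot(x_j-x_i)$, and then take $\beta_{ij}$ to be the larger of the two ratios. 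Your computation checks out, and it buys two things the paper's argument does not make explicit: a closed-form threshold in terms of coordinates, and a transparent accounting of why genericity is exactly the right hypothesis (it is precisely what makes both denominators strictly positive and both numerators nonzero). The paper's version buys a reuse of Lemma~\ref{lm:angle-beta-influence} and a picture-level explanation. One small point to tidy in your write-up: since the depth is only defined for $\beta\geq 1$, you should take $\max\{\beta_{ij},1\}$ (your appeal to Lemma~\ref{lm:beta-influence-containment} already covers this, but say it). Your closing remark about the role of $\mathcal{R}$ and the non-uniformity of $\beta^*$ over all generic $q$ is a fair observation; the paper's own proof likewise fixes a single $q$ and takes $\beta^*=\max_{ij}\beta_{ij}$ over pairs only, so you are consistent with what is actually proved there.
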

\begin{figure}[!ht]
  \centering
    \includegraphics[width=0.8\textwidth]{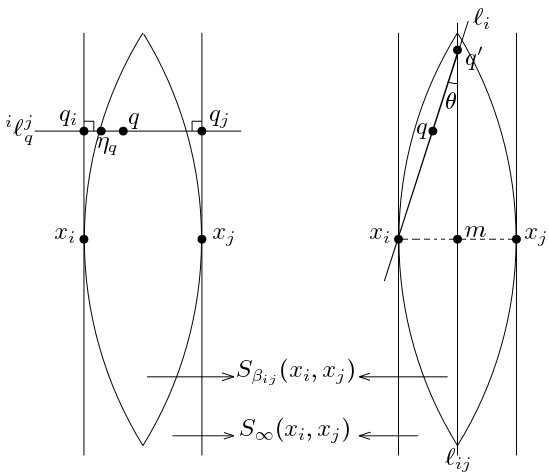}
  \caption{Slab and $\beta_{ij}$-influence region containing point $q$}
  \label{fig:slab-lens-containment-second-proof}
\end{figure}
\begin{proof}
Suppose that $q$ is a generic query point, $m$ is the middle point of $\overline{x_ix_j}$, and $\ell_{ij}$ is a line parallel to the boundaries of $S_{\beta_{ij}}(x_i,x_j)$, through $m$. We consider two lines $\ell{i}$, through $q$ and $x_i$, and $\ell_{j}$, through $q$ and $x_j$. Obviously, each of $\ell_{i}$ and $\ell_{j}$ intersects $\ell_{ij}$. Between these two intersections, let $q'$ be the farthest one. See Figure \ref{fig:slab-lens-containment-second-proof} (right). The desired value of $\beta_{ij}$ in Equation \eqref{eq:slab-sub-beta-influence} can be computed using Lemma \ref{lm:angle-beta-influence} as follows.
\begin{equation}
\label{eq:beta-another-solu-equation}
\theta=\cos^{-1}(1-\frac{1}{\beta_{ij}})\Rightarrow \cos(\theta)=1-\frac{1}{\beta_{ij}}\Rightarrow \beta_{ij}=\frac{1}{1-\cos(\theta)}
\end{equation}
where $\theta$ is the smaller angle at $q'$. Since $q$ is a generic query point, $\cos(\theta)\neq 1$. Equation \eqref{eq:beta-another-solu-equation} provides the desired value for $\beta_{ij}$ in Equation \eqref{eq:slab-sub-beta-influence}. The proof is complete because
\begin{equation}
\label{eq:beta-star}
\forall \{x_i,x_j\}\subset S; q\in S_{\infty}(x_i,x_j) \Rightarrow q\in S_{\beta
^*}(x_i,x_j),\; \beta^*=\max_{ij}\{\beta_{ij}\}.
\end{equation}
\paragraph{Another proof:} Suppose that $q$ is a generic query point, and $q\in S_{\infty}(x_i,x_j)$. Let $^i\ell^j_q$ be a line passing through $q$, and perpendicular to the boundaries of slab $S_{\infty}(x_i,x_j)$ in points $q_i$ and $q_j$. See Figure \ref{fig:slab-lens-containment-second-proof} (left). Every value of $\beta_{ij}$ that meets the following requirements can be considered as the desired $\beta_{ij}$.
\begin{itemize}
\item $\beta_{ij}$ is large enough such that $S_{\beta_{ij}}(x_i,x_j)$ is cut by $^i\ell^j_q$.
\item $\min\{\Vert q-q_i\Vert,\Vert q-q_j\Vert\}\geq \min\{\Vert q_i-\eta_q\Vert,\Vert q_j-\eta_q \Vert\}$, where $\eta_q$ is a fixed intersection point between $^i\ell^j_q$ and $S_{\beta_{ij}}(x_i,x_j)$.
\end{itemize}
The value of $\beta^*$ can be chosen as in Equation~\eqref{eq:beta-star}.
\end{proof}
\begin{theorem}
\label{thrm:convergence-SKD-beta}
For data set $S$ and a generic query point $q$ given in a large enough finite range $\mathcal{R}\subset\mathbb{R}^2$, the $\beta$-skeleton depth functions converge. In other words,
\begin{equation}
\label{eq:convergence-SKD-beta}
\exists\beta^*<\infty; \forall \beta\geq\beta^*,\; \SkD_{\beta}(q;S)= S_{\infty}(q;S). 
\end{equation}
\begin{proof}
It is enough to prove that there exists a $\beta^*<\infty$ such that
\begin{equation}
\label{eq:influnce-beta*-infty-equivalent}
\forall\{x_i,x_j\}\subset S, x_i\neq x_j, q\in S_{\beta^*}(x_i,x_j)\Leftrightarrow q\in S_{\infty}(x_i,x_j).
\end{equation}
$\Rightarrow)$ Lemma \ref{lm:beta-influence-containment} is obviously enough because
\[
\forall\{x_i,x_j\}\subset S,\forall \beta^*<\infty, S_{\beta^*}(x_i,x_j)\subset S_{\infty}(x_i,x_j).
\]
$\Leftarrow)$ To prove this direction, Lemma \ref{lm:slab-sub-beta-influence} suggests that it is enough to choose
\[\beta^*=\max_{ij}\{\beta_{ij}\}.\]
\end{proof}
\end{theorem}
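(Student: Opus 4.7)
The plan is to reduce this global convergence statement to a pair-by-pair equivalence. Since
\[\SkD_\beta(q;S)=\frac{1}{\binom{n}{2}}\sum_{\{x_i,x_j\}\subset S}I(q\in S_\beta(x_i,x_j)),\]
equality of $\SkD_\beta(q;S)$ and $\SkD_\infty(q;S)$ will follow once I exhibit a single finite $\beta^*$ for which $q\in S_{\beta^*}(x_i,x_j)$ if and only if $q\in S_\infty(x_i,x_j)$ for every pair in $S$. By Lemma~\ref{lm:beta-influence-containment}, the lunes $S_\beta(x_i,x_j)$ are nondecreasing in $\beta$, so agreement at $\beta^*$ lifts automatically to every $\beta\ge\beta^*$.

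For the forward direction, I would show that every finite-$\beta$ lune sits inside the infinite slab. Each of the two disks defining $S_\beta(x_i,x_j)$ is tangent to the line through $x_i$ (respectively $x_j$) perpendicular to $\overline{x_ix_j}$ and lies on the side of that line containing the midsegment; intersecting the two disks therefore traps $S_\beta(x_i,x_j)$ between the two perpendicular lines, which is precisely the slab $S_\infty(x_i,x_j)$. This containment is uniform in $\beta$ and requires no choice of $\beta^*$.

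For the reverse direction, I would apply Lemma~\ref{lm:slab-sub-beta-influence} pair by pair. Whenever $q\in S_\infty(x_i,x_j)$, the lemma furnishes a finite $\beta_{ij}$ with $q\in S_{\beta_{ij}}(x_i,x_j)$. Setting
\[\beta^*=\max_{\{x_i,x_j\}\subset S,\; q\in S_\infty(x_i,x_j)}\beta_{ij}\]
— a maximum over at most $\binom{n}{2}$ values — completes the construction. Pairs with $q\notin S_\infty(x_i,x_j)$ need no $\beta_{ij}$, since the forward direction already excludes $q$ from every $S_\beta(x_i,x_j)$.

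The main obstacle is ensuring $\beta^*$ actually stays finite. From Equation~\eqref{eq:beta-another-solu-equation} the candidate $\beta_{ij}=1/(1-\cos\theta)$ blows up precisely when $\cos\theta\to 1$, i.e.\ when $q$ approaches the segment $\overline{x_ix_j}$ or the boundary of the slab $S_\infty(x_i,x_j)$. The genericity hypothesis on $q$ (Definition~\ref{def:generic}) is tailored to exclude exactly these degenerate configurations, and together with the finiteness of $S$ this guarantees that the maximum is attained and finite. With $\beta^*$ so chosen, the indicators in the two sums agree pair by pair and the claimed equality of depth values holds for every $\beta\ge\beta^*$.
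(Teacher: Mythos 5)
Your proposal is correct and follows essentially the same route as the paper's proof: reduce to a pair-by-pair equivalence of indicator values, use the monotone containment of the lunes (Lemma \ref{lm:beta-influence-containment}) for the forward direction, and invoke Lemma \ref{lm:slab-sub-beta-influence} pair by pair, taking $\beta^*$ as the maximum of finitely many $\beta_{ij}$, for the reverse. Your added remarks on why genericity keeps each $\beta_{ij}$ finite are a welcome elaboration but do not change the argument.
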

\subsection{$\beta$-skeleton Depth versus Simplicial Depth}
First, we study the relationship between spherical ($\beta$-skeleton, $\beta=1$) depth and simplicial depth. From Lemma \ref{lm:beta-betaprime}, we can generalize the obtained results for every value of $\beta$.   
\begin{definition}
\label{def:bin-sin}
For a point $q \in \mathbb{R}^2$ and a data set $S=\{x_1,...,x_n\}\subset \mathbb{R}^2$, we define $B_{in}(q;S)$ to be the set of all closed spherical influence regions, out of $n \choose 2$ possible of them, that contain $q$. We also define $S_{in}(q;S)$ to be the set of all closed triangles, out of $n \choose 3$ possible defined by $S$, that contain $q$.
\end{definition}
\begin{lemma}
Suppose that $q$ is a point inside $\Conv(S)$, where $S\subset \mathbb{R}^2$ is a given data set. $q$ is covered by the union of spherical influence regions defined by $S$.
\label{lm:CH-q-GC}
\end{lemma}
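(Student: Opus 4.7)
The plan is to use Carathéodory's theorem in $\mathbb{R}^2$ together with Theorem~\ref{thrm:point-circle} (the angle characterization of $\Sph(x_i,x_j)$). Since any point of $\Conv(S)$ lies in the convex hull of at most three points of $S$, it suffices to exhibit, in each such configuration, a pair $x_i,x_j\in S$ with $\angle x_iqx_j\geq \pi/2$.

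First I would dispose of the degenerate cases. If $q=x_i$ for some data point, pick any other $x_j\in S$; then $q$ sits on the boundary of $\Sph(x_i,x_j)$ and we are done. If $q$ lies on the segment $\overline{x_ix_j}$ determined by two data points (in particular if $q$ lies on an edge of $\Conv(S)$), then $\angle x_iqx_j=\pi\geq \pi/2$, so $q\in\Sph(x_i,x_j)$ by Theorem~\ref{thrm:point-circle}.

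The main case is when $q$ lies in the relative interior of a triangle $\triangle x_ix_jx_k$ with $x_i,x_j,x_k\in S$; the existence of such a triangle is guaranteed by Carathéodory's theorem in the plane. The three angles at $q$ subtended by the sides of the triangle satisfy
\begin{equation*}
\angle x_iqx_j+\angle x_jqx_k+\angle x_kqx_i = 2\pi,
\end{equation*}
so at least one of them is at least $2\pi/3>\pi/2$. Applying Theorem~\ref{thrm:point-circle} to the corresponding pair of data points places $q$ inside the associated Gabriel circle, proving that $q$ belongs to the union of spherical influence regions defined by $S$.

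There is no real obstacle here: the only care needed is the book-keeping that covers all three possibilities (vertex, boundary edge, interior of a Carathéodory triangle), and the invocation of Carathéodory to reduce the arbitrary convex-hull membership to a three-point configuration. The angle-sum argument then finishes the proof immediately via the angle criterion already established in Theorem~\ref{thrm:point-circle}.
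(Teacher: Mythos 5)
Your proof is correct and follows essentially the same route as the paper's: reduce to a triangle via Carath\'eodory's theorem, observe that the three angles subtended at $q$ sum to $2\pi$ so one is at least $2\pi/3>\pi/2$, and conclude via Theorem~\ref{thrm:point-circle}. The only difference is cosmetic --- the paper phrases the angle argument as a contradiction while you argue directly and additionally spell out the degenerate cases, which the paper leaves implicit.
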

\begin{proof}
Let $H=\Conv(S)$. By Caratheodory's theorem \cite{caratheodory1907variabilitatsbereich}, there is at least one triangle, defined by the vertices of $H$, that contains $q$. We prove that the union of the spherical influence regions defined by such triangle contains $q$. See Figure~\ref{fig:Triangleabc}. This statement can be proved by contradiction. Suppose that $q$ is covered by none of $\Sph(a,b)$, $\Sph(a,c)$, and $\Sph(b,c)$. Therefore, Theorem~\ref{thrm:point-circle} implies that none of the angles $\angle aqb$, $\angle aqc$, and $\angle bqc$ is greater than or equal to $\frac{\pi}{2}$ which is a contradiction because at least one of these angles should be at least $\frac{2\pi}{3}$ in order to get $2\pi$ as their sum.
\end{proof}
\begin{lemma} Suppose that $S=\{a,b,c\}$ is a set of points in $\mathbb{R}^2$. For every $q\in \mathbb{R}^2$, if $\vert S_{in}(q;S)\vert =1$, then $\vert B_{in}(q;S)\vert\geq 2$.
\\Another form of Lemma~\ref{lm:triangle-ball} is that if $q \in \bigtriangleup abc$, then $q$ falls inside at least two spherical influence regions out of $\Sph(a,b)$, $\Sph(c,b)$, and $\Sph(a,c)$. The equivalency between these two forms of the lemma is clear. We prove the first one.
\label{lm:triangle-ball}
\end{lemma}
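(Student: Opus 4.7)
The plan is to reduce the claim entirely to Theorem \ref{thrm:point-circle}, which characterizes membership in a Gabriel circle via an angle condition: $q\in\Sph(x,y)$ if and only if $\angle xqy\geq \pi/2$. Once this reduction is in hand, the lemma becomes an elementary angle-sum argument about the three angles $\angle aqb$, $\angle bqc$, and $\angle aqc$ subtended at $q$.

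First I would dispose of the interior case, where $q$ lies in the open triangle. In that case the three rays $\overrightarrow{qa}$, $\overrightarrow{qb}$, $\overrightarrow{qc}$ partition the plane around $q$ into three sectors, so
\begin{equation*}
\angle aqb+\angle bqc+\angle aqc=2\pi,
\end{equation*}
with each angle lying in $(0,\pi)$. The core step is a short pigeonhole-style contradiction: if at most one of these angles were $\geq \pi/2$, then at least two would be strictly less than $\pi/2$, so their sum would be strictly less than $\pi$, while the remaining angle is at most $\pi$, forcing the total to be strictly less than $2\pi$ and contradicting the identity above. Hence at least two of the three angles satisfy the $\pi/2$ threshold, and Theorem \ref{thrm:point-circle} immediately yields at least two spherical influence regions containing $q$, i.e.\ $|B_{in}(q;S)|\geq 2$.

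The only real obstacle is dealing with boundary placements of $q$ inside the \emph{closed} triangle, since the hypothesis $|S_{in}(q;S)|=1$ allows $q$ to lie on an edge or at a vertex. I would handle these by a short case split. If $q$ lies in the relative interior of an edge, say $\overline{ab}$, then $q$ automatically lies in $\Sph(a,b)$ because it sits on the diameter, and $\angle aqb=\pi$ forces $\angle bqc+\angle aqc=\pi$, so at least one of these two angles is $\geq \pi/2$ and supplies a second containment. If $q$ coincides with a vertex, say $q=a$, then by Thales' theorem $q$ lies on the boundary of both $\Sph(a,b)$ and $\Sph(a,c)$, so $|B_{in}(q;S)|\geq 2$ holds trivially. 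I expect the interior argument to be the substantive content of the proof, while the boundary cases amount to a brief remark.
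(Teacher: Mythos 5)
Your proposal is correct and follows essentially the same route as the paper: both reduce the claim to Theorem~\ref{thrm:point-circle} and then apply a pigeonhole argument to the identity $\angle aqb+\angle bqc+\angle aqc=2\pi$ (the paper phrases it as a contradiction starting from $\vert B_{in}(q;S)\vert=1$ via Lemma~\ref{lm:CH-q-GC}, while you argue directly that two of the three angles must reach $\pi/2$). Your explicit handling of the edge and vertex cases is slightly more careful than the paper's, which only remarks on the vertex case, but the substance is the same.
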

\begin{proof}
From Lemma~\ref{lm:CH-q-GC}, $\vert B_{in}(q;S)\vert\geq 1$. Suppose that $\vert B_{in}(q;S)\vert =1$. If $q$ is one of the vertices of $\bigtriangleup abc$, it is clear that $\vert B_{in}(q;S)\vert \geq 2$. Without loss of generality, we suppose that $q$ falls in $int\Sph(a,b)$. For the rest of the proof, we focus on the relationships among the angles $\angle aqb$, $\angle cqa $, and $\angle cqb$ (see Figure~\ref{fig:Triangleabc}). Since $q$ is inside $\bigtriangleup abc$, $\angle aqb \leq \pi$. Consequently, at least one of $\angle cqa$ and $\angle cqb$ is greater than or equal to $\pi/2$. So, Theorem~\ref{thrm:point-circle} implies that $q$ is in at least one of $int\Sph(a,c)$ and $int\Sph(b,c)$.  Hence, $\vert B_{in}(q;S)\vert =1$ contradicts $\vert S_{in}(q;S)\vert =1$ which means that $\vert B_{in}(q;S)\vert \geq 2$. As an illustration, in Figure~\ref{fig:Triangleabc}, for the points in the hatched area $\vert B_{in}(q;S)\vert =3$.
\end{proof}
\begin{figure}[!ht]
  \centering
    \includegraphics[width=0.6\textwidth]{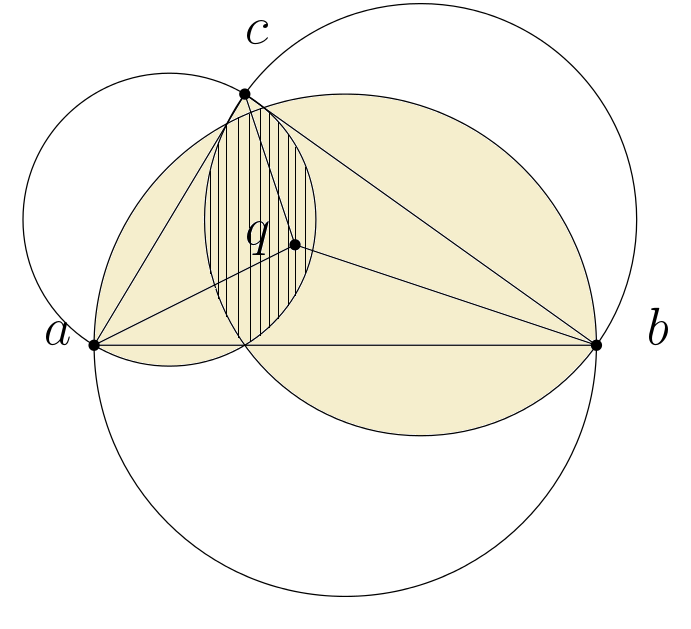}
  \caption{Triangle $abc$ and its corresponding spherical influence regions containing $q$}
  \label{fig:Triangleabc}
\end{figure}
\begin{lemma} For a data set $S=\{x_1,...,x_n\} \subset \mathbb{R}^2$, 
\[
\frac{\vert B_{in}(q;S)\vert }{\vert S_{in}(q;S)\vert}\geq \frac{2}{n-2}.
\]
\label{lm:bin-sin}
\end{lemma}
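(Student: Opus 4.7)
The plan is to prove this by a double counting argument on the set
\[
T = \bigl\{(\triangle, e) : \triangle \in S_{in}(q;S),\ e \text{ is an edge of } \triangle,\ q \in \Sph(e)\bigr\},
\]
where an edge $e$ is identified with the unordered pair of its endpoints. We may assume $|S_{in}(q;S)| > 0$ (else the inequality is trivially interpreted), which in particular places $q$ inside $\Conv(S)$.

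For the lower bound on $|T|$, I would apply Lemma~\ref{lm:triangle-ball} to each individual triangle $\triangle \in S_{in}(q;S)$: since $q$ lies inside $\triangle$, at least two of its three edges have spherical influence regions containing $q$. Hence each such triangle contributes at least $2$ to $|T|$, giving
\[
|T| \geq 2\,|S_{in}(q;S)|.
\]

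For the upper bound, I would fix an arbitrary pair $\{x_i,x_j\} \in B_{in}(q;S)$ and count how many pairs of $T$ can have $e=\{x_i,x_j\}$ as their edge. The segment $\overline{x_ix_j}$ is an edge of exactly $n-2$ triangles formed from $S$ (one per choice of a third vertex from $S\setminus\{x_i,x_j\}$), and only those triangles that contain $q$ contribute to $T$. Thus each pair in $B_{in}(q;S)$ contributes at most $n-2$ to $|T|$, yielding
\[
|T| \leq (n-2)\,|B_{in}(q;S)|.
\]
Combining the two bounds gives $2|S_{in}(q;S)| \leq (n-2)|B_{in}(q;S)|$, which rearranges to the claimed ratio.

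The only nontrivial ingredient is the lower bound, and that is already supplied by Lemma~\ref{lm:triangle-ball}; the upper bound is a pure edge-counting fact. I therefore do not anticipate a real obstacle, apart from being careful to phrase $T$ so that a pair $\{x_i,x_j\}$ is an edge of a triangle $\triangle$ only when it is literally one of the three sides of $\triangle$, so that the ``$n-2$ triangles per edge'' count is exact.
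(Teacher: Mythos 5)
Your proposal is correct and is essentially the paper's own argument: the paper also combines Lemma~\ref{lm:triangle-ball} (each triangle of $S_{in}(q;S)$ is ``counted at least twice'' among the Gabriel circles containing $q$) with the fact that each such circle corresponds to an edge of at most $n-2$ triangles. Your explicit incidence set $T$ is just a cleaner, more rigorous packaging of the same double count.
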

\begin{proof} Suppose that $\Sph(x_i,x_j)\in B_{in}(q;S)$. There exist at most $(n-2)$ triangles in $S_{in}(q;S)$ such that $\overline{x_ix_j}$ is an edge of them. We consider $ \bigtriangleup x_ix_jx_k$ to be one of such triangles (see Figure~\ref{fig:bin-sin} as an illustration). Referring to Lemma~\ref{lm:triangle-ball}, $q$ belongs to at least one of $\Sph(x_i,x_k)$ and $\Sph(x_j,x_k)$. Similarly, there exist at most $(n-2)$ triangles in $S_{in}(q;S)$ such that $x_ix_k$ (respectively $x_jx_k$) is an edge of them. In the process of computing $\vert S_{in}(q;S)\vert$, triangle $\bigtriangleup x_ix_jx_k$ is counted at least two times, once for $\Sph(x_i,x_j)$ and another time for $\Sph(x_i,x_k)$ (or $\Sph(x_j,x_k)$ ). Consequently, for every sphere area in $B_{in}(q;S)$, there exist at most $\frac{(n-2)}{2}$ distinct triangles, triangles with only one common side, in $S_{in}(q;S)$. As a result, Equation~\eqref{eq:bin-sin} can be obtained. 
\begin{equation}
\frac{(n-2)}{2}\vert B_{in}(q;S)\vert \geq \vert S_{in}(q;S)\vert\Rightarrow \frac{\vert B_{in}(q;S)\vert}{\vert S_{in}(q;S)\vert}\geq \frac{2}{(n-2)}
\label{eq:bin-sin}
\end{equation}
\end{proof}
 \begin{figure}[h!]
  \centering
  \includegraphics[width=0.55\textwidth]{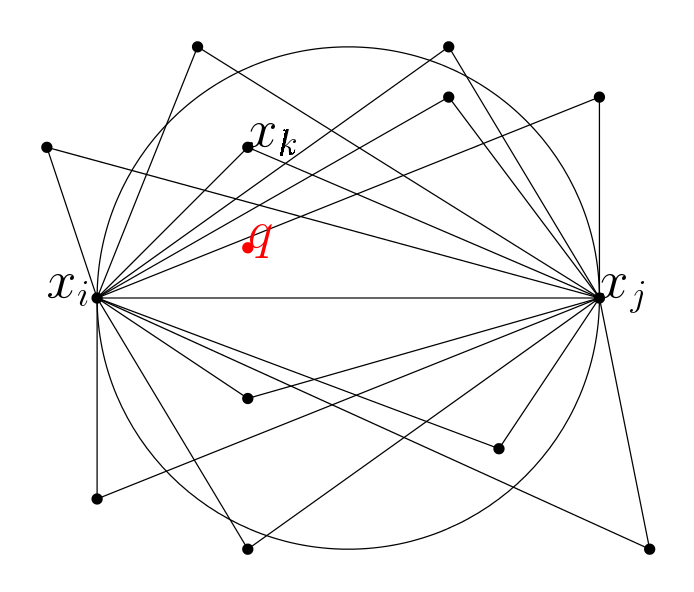}
  \caption{$\Sph(x_i,x_j)$ and all triangles, with edge $x_jx_j$, containing $q$}
  \label{fig:bin-sin}
\end{figure}
\begin{theorem}
For a data set $S=\{x_1,...,x_n\}$ and a query point $q$ in $\mathbb{R}^2$, $\SphD(q;S)\geq \frac{2}{3} \SD(q;S)$.
\label{thrm:Sph-Simp}
\end{theorem}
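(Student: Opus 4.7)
The plan is to derive the inequality directly from Lemma~\ref{lm:bin-sin} by accounting for the normalization factors that distinguish $\SphD$ from $\SD$. First I would rewrite both depths in the unnormalized $B_{in}$/$S_{in}$ form supplied by Definition~\ref{def:bin-sin}, so that $\SphD(q;S)=|B_{in}(q;S)|/\binom{n}{2}$ and $\SD(q;S)=|S_{in}(q;S)|/\binom{n}{3}$.

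Next, I would invoke Lemma~\ref{lm:bin-sin}, whose content is the key geometric inequality $|B_{in}(q;S)|\geq \tfrac{2}{n-2}|S_{in}(q;S)|$. Dividing both sides by $\binom{n}{2}$ gives
\[
\SphD(q;S)\;\geq\;\frac{2}{(n-2)\binom{n}{2}}\,|S_{in}(q;S)|.
\]

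The remaining step is a routine identity on binomial coefficients: since $(n-2)\binom{n}{2}=n(n-1)(n-2)/2$ while $\binom{n}{3}=n(n-1)(n-2)/6$, we have $(n-2)\binom{n}{2}=3\binom{n}{3}$. Substituting this identity yields
\[
\SphD(q;S)\;\geq\;\frac{2}{3}\cdot\frac{|S_{in}(q;S)|}{\binom{n}{3}}\;=\;\frac{2}{3}\,\SD(q;S),
\]
which is the desired bound.

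There is no real obstacle here, since all the geometric content is packaged inside Lemma~\ref{lm:bin-sin} (which in turn rests on Lemmas~\ref{lm:CH-q-GC} and~\ref{lm:triangle-ball} via the angle-sum argument at $q$). The only mildly delicate point I would verify is the degenerate case $|S_{in}(q;S)|=0$: when $q$ lies outside $\Conv(S)$, Lemma~\ref{lm:bin-sin} is vacuous, but the target inequality reduces to $\SphD(q;S)\geq 0$, which holds trivially. I would also note in passing that once this inequality is established, Lemma~\ref{lm:beta-betaprime} immediately extends it to arbitrary $\beta\geq 1$ in the form $\SkD_{\beta}(q;S)\geq \tfrac{2}{3}\SD(q;S)$.
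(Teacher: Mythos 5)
Your proposal is correct and follows essentially the same route as the paper: both reduce the claim to Lemma~\ref{lm:bin-sin} together with the identity $(n-2)\binom{n}{2}=3\binom{n}{3}$, the only cosmetic difference being that the paper phrases the final step as a ratio $\SphD/\SD$ while you cross-multiply. Your remark about the degenerate case $|S_{in}(q;S)|=0$ is a small but genuine improvement, since the paper's ratio formulation silently divides by that quantity.
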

\begin{proof}
From Definition~\ref{def:bin-sin} the definitions of spherical depth and simplicial depth, \eqref{eq:def-sph-by-bin} and \eqref{eq:def-sim-by-sin} can be easily verified.
\begin{equation}
\label{eq:def-sph-by-bin}
\SphD(q;S)=\frac{1}{{n \choose 2}}\vert B_{in}(q;S)\vert
\end{equation}
\begin{equation}
\label{eq:def-sim-by-sin}
\SD(q;S)=\frac{1}{{n \choose 3}}\vert S_{in}(q;S)\vert
\end{equation}
Using these two equations, the ratio of spherical depth and simplicial depth can be calculated as follows: 
\begin{equation}
\label{eq:first-ratio1}
\frac{\SphD(q;S)}{\SD(q;S)}=\frac{{n \choose 3}}{{n \choose 2}} \frac{\vert B_{in}(q;S)\vert}{\vert S_{in}(q;S)\vert}= \frac{(n-2)\vert B_{in}(q;S)\vert}{3\vert S_{in}(q;S)\vert}.
\end{equation}
Equation~\eqref{eq:first-ratio1} and Lemma~\ref{lm:bin-sin} imply that 
\[
\frac{\SphD(q;S)}{\SD(q;S)}\geq \frac{2(n-2)}{3(n-2)}=\frac{2}{3} \Rightarrow \SphD(q;S) \geq \frac{2}{3} \SD(q;S).
\]
\end{proof}
\begin{theorem}
Suppose that $S$ is a given data set consisting of $n$ points in general position in $\mathbb{R}^2$. For $q \in \mathbb{R}^2$ and $\beta \geq 1$, $SkD_{\beta}(q;S)\geq \frac{2}{3} \SD(q;S)$.
\label{thrm:Beta-Simp}
\end{theorem}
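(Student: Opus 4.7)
The plan is to chain together two results already established earlier in the chapter, so the proof should be short and essentially immediate. First I would invoke Theorem~\ref{thrm:Sph-Simp}, which provides the base case $\beta=1$: since $\SphD(q;S)=\SkD_1(q;S)$, we already know that $\SkD_1(q;S)\geq \tfrac{2}{3}\SD(q;S)$. So the $\beta=1$ case is done for free, and all that remains is to bootstrap from $\beta=1$ up to arbitrary $\beta\geq 1$.

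For the bootstrapping step, I would apply Lemma~\ref{lm:beta-betaprime}, which asserts monotonicity of $\SkD_\beta(q;S)$ in the parameter $\beta$: for any $\beta'>\beta\geq 1$, $\SkD_{\beta'}(q;S)\geq \SkD_\beta(q;S)$. Taking $\beta'=\beta$ and comparing to $\beta=1$ gives $\SkD_\beta(q;S)\geq \SkD_1(q;S)=\SphD(q;S)$ for every $\beta\geq 1$. Chaining the two inequalities yields
\begin{equation*}
\SkD_\beta(q;S)\;\geq\;\SkD_1(q;S)\;=\;\SphD(q;S)\;\geq\;\tfrac{2}{3}\SD(q;S),
\end{equation*}
which is exactly the claim.

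There is essentially no obstacle here: the geometric content already lives inside Lemmas~\ref{lm:CH-q-GC}, \ref{lm:triangle-ball}, and~\ref{lm:bin-sin} (used to prove the $\beta=1$ inequality) and inside Lemma~\ref{lm:beta-influence-containment} (used to prove the monotonicity in $\beta$). The only thing a reader might want spelled out is the observation that the general-position hypothesis on $S$ is inherited directly from Theorem~\ref{thrm:Sph-Simp}, since Lemma~\ref{lm:beta-betaprime} imposes no such restriction. So the proof can be written in two lines and need not revisit any geometry; it is purely a corollary-style combination of a previously proved bound with a previously proved monotonicity property.
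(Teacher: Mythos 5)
Your proposal is correct and matches the paper's own proof, which likewise derives the result by combining the monotonicity of $\SkD_{\beta}$ in $\beta$ (Lemma~\ref{lm:beta-betaprime}) with the base case $\SphD(q;S)\geq \tfrac{2}{3}\SD(q;S)$ (Theorem~\ref{thrm:Sph-Simp}). Your write-up is in fact slightly more explicit than the paper's one-line proof, but the approach is identical.
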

\begin{proof}
The proof can be derived from Lemma \ref{lm:beta-betaprime} and Theorem \ref{thrm:Sph-Simp}.  
\end{proof}
\section{Relationships via Dissimilarity Measures}
\label{sec:dissimilarity-measures}
In this section we define two dissimilarity measures between every pair of depth functions, and approximate these depth functions by one another with a certain amount of error. A notable application of this approximation technique is that we can approximate the halfspace depth, which is an NP-hard problem in higher dimension $d$, by the $\beta$-skeleton depth which takes only $O(dn^2)$ time (see Section \ref{sec:approx-half-space}). The dissimilarity measures in this technique are defined based on the concepts of \emph{fitting function} and \emph{Hamming distance}. We train the halfspace depth function by the $\beta$-skeleton depth values obtaining from a given data set. The goodness of approximation can be determined using the dissimilarity measures and the sum of squares of error values.
\subsection{Fitting Function and Dissimilarity Measure}
To determine the dissimilarity between two vectors $U=(u_1 ,..., u_n)$ and $V=(v_1, ..., v_n)$, the idea of fitting functions can be applied. Considering the goodness measures of fitting functions in Section \ref{sec:goodness-of-fitting}, assume that $f$ is the best function fitted to $U$ and $V$. In other words, $u_i=f(v_i)\pm \delta_i$; $\delta_i\geq 0$ and $1\leq i\leq n$. Let $\xi_i=u_i-\overline{U}$, where $\overline{U}$ is the average of $u_i$ ($1\leq i\leq n$). In Equation \eqref{eq:e-distance}, we define $d_E(U,V)$, as a function of $\delta_i$ and $\xi_i$, to measure the dissimilarity between $U$ and $V$.
\begin{equation}
\label{eq:e-distance}
d_E(U,V)=1-r^2,
\end{equation}
where $r^2$ is the coefficient of determination. We recall from Equation \eqref{eq:r-squared} that
\begin{equation*}
r^2=\dfrac{\sum\limits_{i=1}^{n}(\xi_i^2-\delta_i^2)}{\sum\limits_{i=1}^{n}\xi_i^2}.
\end{equation*}  
Since $r^2\in[0,1]$, $d_E(U,V)\in [0,1]$. A smaller value of $d_E(U,V)$ represents more similarity between $U$ and~$V$. 
\subsection{Dissimilarity Measure Between two Posets}
The idea of defining the following distance comes from the proposed structural dissimilarity measure between posets in \cite{fattore2014measuring}. Let $\mathbb{P}=\{P_t=(S,\preceq_t)\vert t\in \mathbb{N}\}$ be a finite set of posets, where $S=\{x_1, ..., x_n\}$ is a given data set. For $P_k\in\mathbb{P}$ we define a matrix $M^k_{n\times n}$ by:
\[
M^k_{ij}= \begin{cases}
1\;\;\;\;\;x_i \preceq_k x_j \\
0\;\;\;\;\; \text{otherwise.}
\end{cases}
\]
We use the notation of $d_c(P_f,P_g)$ to define the dissimilarity between two posets $P_f,P_g\in \mathbb{P}$ as follows:
\begin{equation}
\label{eq:h-distance}
d_c(P_f,P_g)=\dfrac{\sum\limits_{i=1}^{n}\sum\limits_{j=1}^{n} \vert M^f_{ij}-M^g_{ij} \vert}{n^2-n}
\end{equation}
It can be verified that $d_c(P_f,P_g)\in [0,1]$, where the closer value to $1$ means the less similarity between $P_f$ and $P_g$. This measure of similarity is a metric on $\mathbb{P}$ because for all $P_f,P_g,P_h \in \mathbb{P}$,
\begin{itemize}
\item $d_c(P_f,P_g)\geq 0$
\item $d_c(P_f,P_g)=0 \Leftrightarrow P_f=P_g$
\item $d_c(P_f,P_g)=d_c(P_g,P_f)$
\item $d_c(P_f,P_h)\leq d_c(P_f,P_g)+ d_c(P_g,P_h).$
\end{itemize}
Proving these properties is straightforward. We prove the last property which is less trivial.
\begin{align*}
d_c(P_f,P_h)&=\dfrac{\sum\limits_{i=1}^{n}\sum\limits_{j=1}^{n} \vert M^f_{ij}-M^h_{ij} \vert}{n^2-n} = \dfrac{\sum\limits_{i=1}^{n}\sum\limits_{j=1}^{n} \vert (M^f_{ij}-M^g_{ij})+(M^g_{ij}-M^h_{ij})\vert}{n^2-n}
\\&\leq \dfrac{\sum\limits_{i=1}^{n}\sum\limits_{j=1}^{n} (\vert M^f_{ij}-M^g_{ij}\vert +\vert M^g_{ij}-M^h_{ij}\vert)}{n^2-n}
\\&= \dfrac{\sum\limits_{i=1}^{n}\sum\limits_{j=1}^{n} \vert M^f_{ij}-M^g_{ij} \vert}{n^2-n} + \dfrac{\sum\limits_{i=1}^{n}\sum\limits_{j=1}^{n} \vert M^g_{ij}-M^h_{ij} \vert}{n^2-n}\\&=d_c(P_f,P_g)+d_c(P_g,P_h).
\end{align*}
\section{Approximation of Halfspace Depth}
\label{sec:approx-half-space}
We use the proposed method in Section \ref{sec:dissimilarity-measures} to approximate the halfspace depth. Motivated by statistical applications and machine learning techniques, we train the halfspace depth function using the values of $\beta$-skeleton depth. Among all depth functions, the $\beta$-skeleton depth is chosen because it is easy to compute and its time complexity, i.e. $O(dn^2)$, grows linearly with the dimension $d$. 
\subsection{Approximation of Halfspace Depth and Fitting Function}
\label{sec:approx1}
Suppose that $S=\{x_1, ...,x_n\}$ is a given data set. By choosing some subsets of $S$ as training samples, we consider the problem of learning the halfspace depth function using the $\beta$-skeleton depth values. In particular, we use the cross validation and information criterion to obtain the best function $f$ such that $\HD(x_i;S)=f(\SkD_{\beta}(x_i;S))\pm \delta_i$. The function $f$ can be considered as an approximation function for halfspace depth. Finally, $d_E(\HD,\SkD_{\beta})$ as the error of approximation can be computed using Equation \eqref{eq:e-distance}.
\subsection{Approximation of Halfspace Depth and Poset Dissimilarity}
\label{sec:approx2}
In some applications, the structural ranking among the elements of $S$ is more important than the depth value of single points. Let $S=\{x_1, ..., x_n\}$ be a given data set and $D$ be a depth function. Applying $D$ on $x_i$ with respect to $S$ generates a poset. In fact, $P_D=(D(x_i;S),\leq)$ is a chain because for every $x_i, x_j \in S$, the values of $D(x_i;S)$ and $D(x_j;S)$ are comparable. For halfspace depth and $\beta$-skeleton depth, their dissimilarity measure of rankings can be obtained by Equation \eqref{eq:h-distance} as follows:
\[
d_c(\HD,\SkD_{\beta})=\dfrac{\sum\limits_{i=1}^{n}\sum\limits_{j=1}^{n} \vert M^{\HD}_{ij}-M^{\SkD_{\beta}}_{ij} \vert}{n^2-n}.
\] 
The smaller value of $d_c(\HD,\SkD_{\beta})$, the more similarity between $\HD$ and $\SkD_{\beta}$ in ordering the elements of $S$.
\\\\ In sections \ref{sec:approx1} and \ref{sec:approx2}, instead of $\beta$-skeleton, any other depth function can be considered to approximate halfspace depth. Considering any other depth function, we can compute the goodness of approximation using dissimilarity measures $d_E$ and $d_c$.     
\section{Experimental Results}
\label{sec:exp}
In this section we provide some experimental results to support Sections \ref{sec:geo-relation}, \ref{sec:dissimilarity-measures}, and \ref{sec:approx-half-space}. The results are summarized in some tables and graphs presented in Sections \ref{sec:approx1} and \ref{sec:approx2}. To obtain these results, we computed the depth functions and their relationships for three sets $Q_1$, $Q_2$, and $Q_3$ of planar query points with respect to data sets $S_1$, $S_2$, and $S_3$ of planar points, respectively. The cardinalities of $Q_i$ and $S_i$ are as follows: $|Q_1|=100$, $|S_1|=750$, $|Q_2|=1000$, $|S_2|=2500$, $|Q_3|=2500$, $|S_3|=10000$. The elements of $Q_i$ and $S_i$ ($i=1,2,3$) are some randomly generated points (double precision floating point) within the square $\{(x,y)| x,y \in [-10,10]\}$. The following lines of code, in MATLAB, are used to generate the elements of $Q_i$ and $S_i$.

\begin{verbatim}
% To Generate the Sets of Random Data Points and  Query Points:
n = 3; % the number of decimal places
d = 2; % dimension of points
kS = 10000; % number of data points
kQ = 2500; % number of query points
% [l_range,u_range] is the interval where the
% random points are generated within
l_range = -10; u_range = 10;
S = randi([l_range,u_range]*10^n,[kS,d])/10^n;
Q = randi([l_range,u_range]*10^n,[kQ,d])/10^n;

\end{verbatim}
The implementations are done in Java. The source codes and detailed results are publicly available at \url{https://github.com/RasoulShahsavari/Data-Depth-Source-Codes}.
\subsection{Experiments for Geometric Relationships}
\label{sec:exp.for.geo}
To support the obtained relationships in Section \ref{sec:geo-relation}, we compute the spherical depth, lens depth, and the simplicial depth of the points in three random sets $Q_1$, $Q_2$, and $Q_3$ with respect to data sets $S_1$, $S_2$, and $S_3$, respectively. The results of our experiments are summarized in Table~\ref{table:results-random-points}. Every cell in the table represents the corresponding depth of query points in $Q_i$ with respect to data set $S_i$. As can be seen, there are some gaps between obtained experimental bounds for random points and the theoretical bounds in Theorem~\ref{thrm:Beta-Simp} and Lemma~\ref{lm:beta-betaprime}. For example, the experiments suggests $1.2$ as a lower bound for $\LD/SphD$, whereas Theorem \ref{thrm:Beta-Simp} introduces the lower bound $1$. More research on this topic is needed to figure out if the real bounds are closer to the experimental bounds or to the current theoretical bounds.
\begin{table}[!ht]
\begin{center}
\begin{tabular}{|l||l|l||l|l||l|l|}
\hline
 &\multicolumn{2}{l||}{$(Q_1;S_1)$}&\multicolumn{2}{l||}{$(Q_2;S_2)$}&\multicolumn{2}{l|}{$(Q_3;S_3)$}\\
\cline{2-7}
 &Min& Max&Min&Max&Min&Max\\
\hline\hline
$\SD$&0.00&0.25&0.00&0.25&0.00&0.24\\
\hline
$\SphD$&0.01&0.50&0.00&0.50&0.00&0.50\\
\hline
$\LD$&0.05&0.61&0.05&0.61&0.04&0.61\\
\hline
$\frac{\SphD}{\SD}$&2.00&$\infty$&2.00&$\infty$&2.03&$\infty$\\
\hline
$\frac{\LD}{\SD}$&2.43&$\infty$&2.44&$\infty$&2.44&$\infty$\\
\hline
$\frac{\LD}{\SphD}$&1.21&8.11&1.22&23.16&1.22&157.16 \\
\hline
\end{tabular}
\end{center}
\caption{Summary of experiments for the geometric relationships among simplicial depth ($\SD$), spherical depth ($\SphD$), and lens depth ($\LD$)}
\label{table:results-random-points}
\end{table}

\subsection{Experiments for Approximations}
\label{sec:exp.for.approx}
In this section some experimental results are provided to support Theorem~\ref{thrm:convergence-SKD-beta}, and our proposed method of approximation in Section \ref{sec:approx-half-space}. We compute the planar halfspace depth and planar $\beta$-skeleton depth of $q\in Q_2$ with respect to $S_2$ for different values of $\beta$, where $Q_2$ and $S_2$ are introduced in Section \ref{sec:exp}. The results of our experiments are summarized in Tables~\ref{tbl:results-beta-converge}, \ref{tbl:results-halfspce-approx}, and \ref{tbl:results-halfspce-approx-power}. For each row of these three tables, a plot labelled by the corresponding fitting function is provided (Figures \ref{fig:1-inf} - \ref{fig:10000-h-pow}). As can be seen in the last two rows of Table \ref{tbl:results-beta-converge}, considering $\beta^*=1000$, Theorem \ref{thrm:convergence-SKD-beta} is supported by these experiments. From Table \ref{tbl:results-halfspce-approx}, it can be seen that the halfspace depth can be approximated by a quadratic function of the $\beta$-skeleton depth with relatively small values of $d_E(\SkD,\HD)\approxeq 0.0186$ and $d_C(\SkD,\HD)~\approxeq 0.0373$. In particular, $\HD\approxeq3.8883(\SkD_{\beta}-0.3004)^2$ if $\beta\to \infty$. The results in Table~\ref{tbl:results-halfspce-approx-power} indicate that the halfspace depth function can also be approximated via a power model. In particular, $\HD\approxeq(\SkD_{\beta}+0.2255)^{5.8850}$ if $\beta\to \infty$. Referring to the results in Tables \ref{tbl:results-halfspce-approx} and \ref{tbl:results-halfspce-approx-power} and Figures \ref{fig:1-h} - \ref{fig:10000-h-pow}, the power model suggests a better fit both theoretically and visually. For example, the values of $d_E$ in the Table \ref{tbl:results-halfspce-approx-power} are smaller than the corresponding values in Table \ref{tbl:results-halfspce-approx}. Furthermore, from Figures \ref{fig:10000-h} and \ref{fig:10000-h-pow}, it can be seen that the power model captures the curvature of the data better than the quadratic model does. We recall that $d_C$ represents the structural dissimilarity between the exact values of two depth functions; hence, the values of $d_C$ are not affected by the choice of fitting model. In short, our experimental results on approximating the halfspace depth by $\beta$-skeleton depth show that the power model behaves slightly better than the quadratic model. However, depending on the application, the quadratic model might be preferred because of its simplicity.  

\begin{table}[!ht]
\begin{center}
\begin{tabular}{|l|l|l|l|l|}
\hline
$x$&$\SkD_{\infty}=f(x)$&$d_E(x,\SkD_{\infty})$&$d_c(x,\SkD_{\infty})$&Figure\\
\hline\hline
$\SkD_{1}$&$0.9423x+0.2885$&$0.0086$&$0.012$&\ref{fig:1-inf}\\
\hline
$\SkD_{2}$&$0.8580x+0.2304$&$0.0034$&$0.008$&\ref{fig:2-inf}\\
\hline
$\SkD_{3}$&$0.8598x+0.1922$&$0.0016$&$0.005$&\ref{fig:3-inf}\\
\hline
$\SkD_{1000}$&$0.9985x+0.0016$&$0.0000$&$0.000$&\ref{fig:1000-inf}\\
\hline
$\SkD_{10000}$&$1.0000x+0.0000$&$0.0000$&$0.000$&\ref{fig:10000-inf}\\
\hline
\end{tabular}
\end{center}
\caption{Summary of experiments for the convergence of $\beta$-skeleton depth}
\label{tbl:results-beta-converge}
\end{table}

\begin{table}[!ht]
\begin{center}
\begin{tabular}{|l|l|l|l|l|}
\hline
$x$&$\HD=f(x)$&$d_E(x,\HD)$&$d_c(x,\HD)$&Figure\\
\hline\hline
$\SkD_{1}$&$2.9275(x+0.0152)^2$&$0.0127$&$0.0306$&\ref{fig:1-h}\\
\hline
$\SkD_{2}$&$2.5522(x-0.0608)^2$&$0.0141$&$0.0331$&\ref{fig:2-h}\\
\hline
$\SkD_{3}$&$2.6569(x-0.1117)^2$&$0.0156$&$0.0343$&\ref{fig:3-h}\\
\hline
$\SkD_{1000}$&$3.8765(x-0.2992)^2$&$0.0186$&$0.0373$&\ref{fig:1000-h}\\
\hline
$\SkD_{10000}$&$3.8883(x-0.3004)^2$&$0.0186$&$0.0373$&\ref{fig:10000-h}\\
\hline
\end{tabular}
\end{center}
\caption{Summary of experiments for the halfspace depth approximation via a quadratic model}
\label{tbl:results-halfspce-approx}
\end{table}

\begin{table}[!ht]
\begin{center}
\begin{tabular}{|l|l|l|l|l|}
\hline
$x$&$\HD=f(x)$&$d_E(x,\HD)$&$d_c(x,\HD)$&Figure\\
\hline\hline
$\SkD_{1}$&$(x+0.4622)^{4.8174}$&$0.0130$&$0.0306$&\ref{fig:1-h-pow}\\
\hline
$\SkD_{2}$&$(x-0.3465)^{4.4263}$&$0.0120$&$0.0331$&\ref{fig:2-h-pow}\\
\hline
$\SkD_{3}$&$(x+0.3089)^{4.5734}$&$0.0120$&$0.0343$&\ref{fig:3-h-pow}\\
\hline
$\SkD_{1000}$&$(x+0.2259)^{5.8737}$&$0.0122$&$0.0373$&\ref{fig:1000-h-pow}\\
\hline
$\SkD_{10000}$&$(x+0.2255)^{5.8850}$&$0.0123$&$0.0373$&\ref{fig:10000-h-pow}\\
\hline
\end{tabular}
\end{center}
\caption{Summary of experiments for the halfspace depth approximation via a power model}
\label{tbl:results-halfspce-approx-power}
\end{table}

\begin{figure}[!ht]
  \centering
    \includegraphics[width=0.9\textwidth]{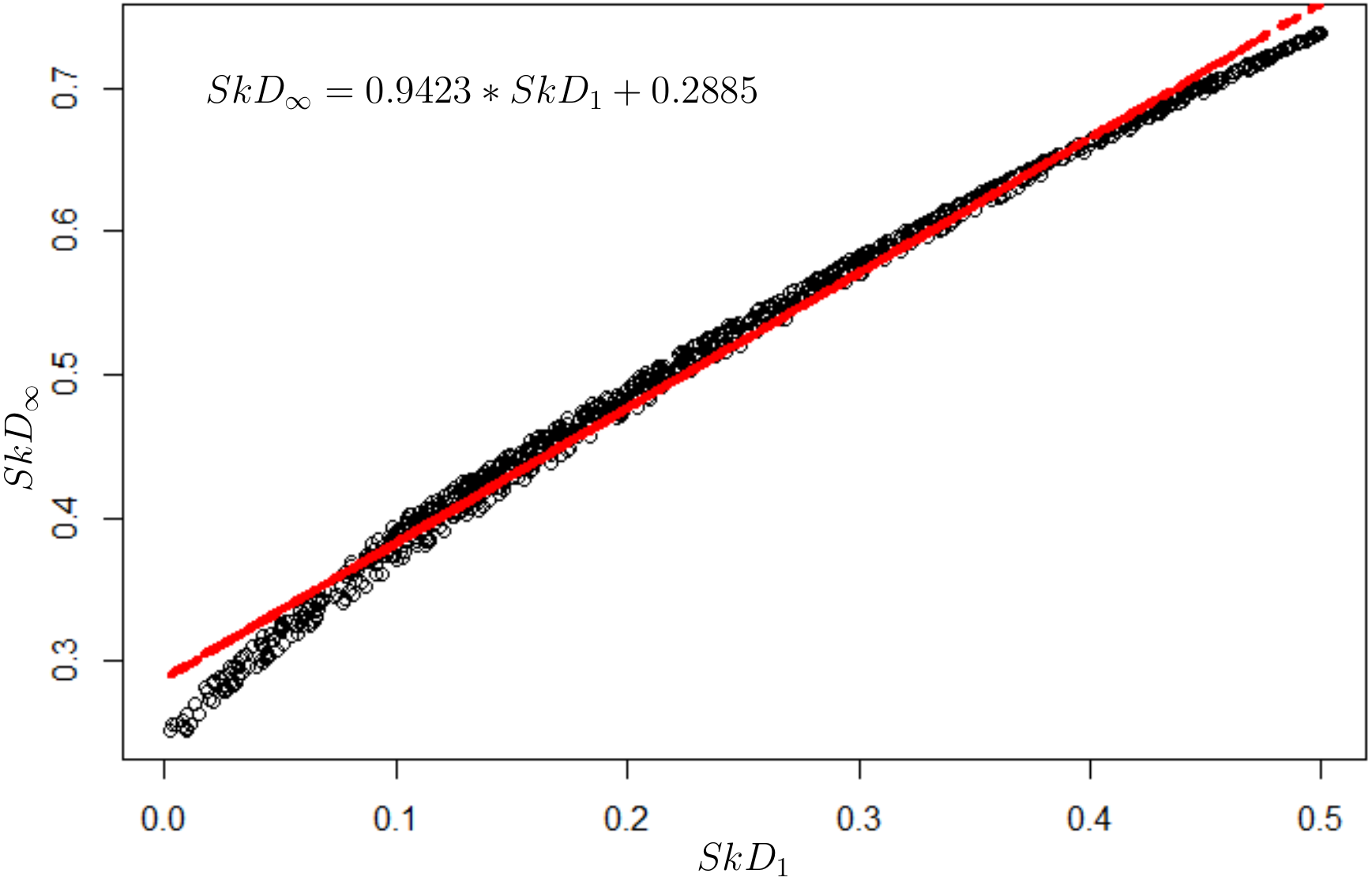}
  \caption{Fitting a linear model to $\SkD_{\infty}$ and $\SkD_1$.}
  \label{fig:1-inf}
\end{figure}

\begin{figure}[!ht]
  \centering
    \includegraphics[width=0.9\textwidth]{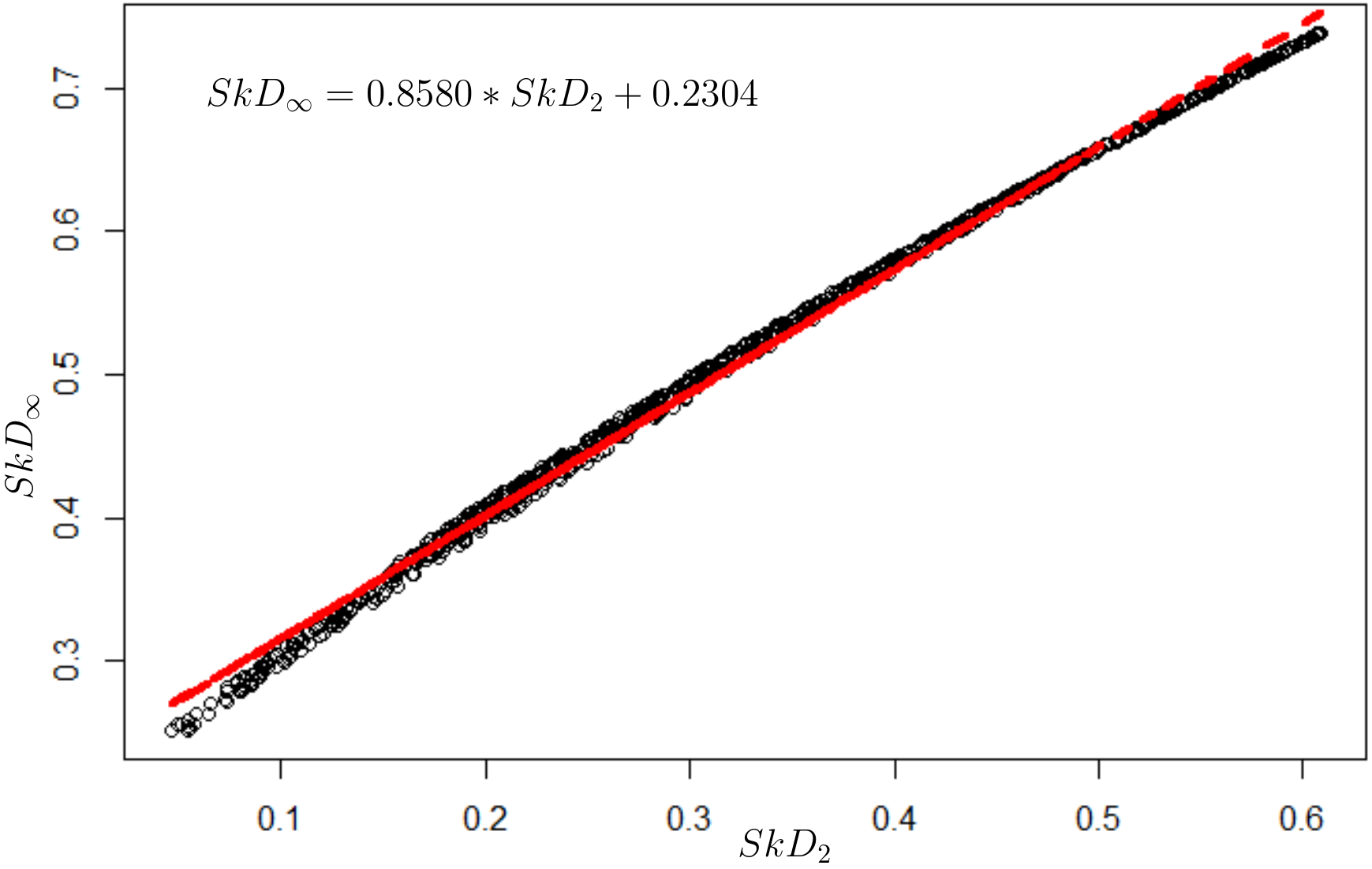}
  \caption{Fitting a linear model to $\SkD_{\infty}$ and $\SkD_2$ .}
  \label{fig:2-inf}
\end{figure}

\begin{figure}[!ht]
  \centering
    \includegraphics[width=0.9\textwidth]{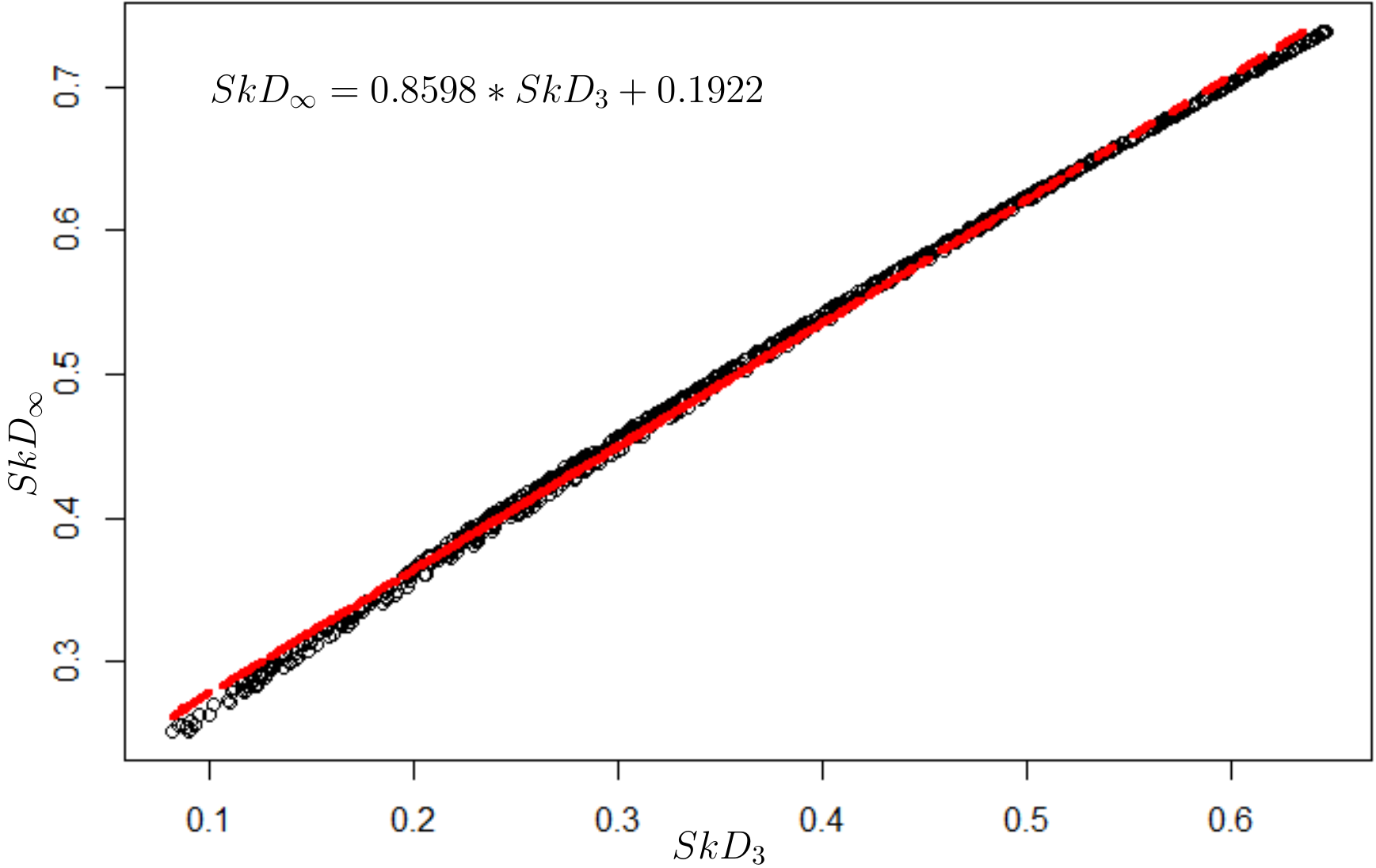}
  \caption{Fitting a linear model to $\SkD_{\infty}$ and $\SkD_3$.}
  \label{fig:3-inf}
\end{figure}

\begin{figure}[!ht]
  \centering
    \includegraphics[width=0.9\textwidth]{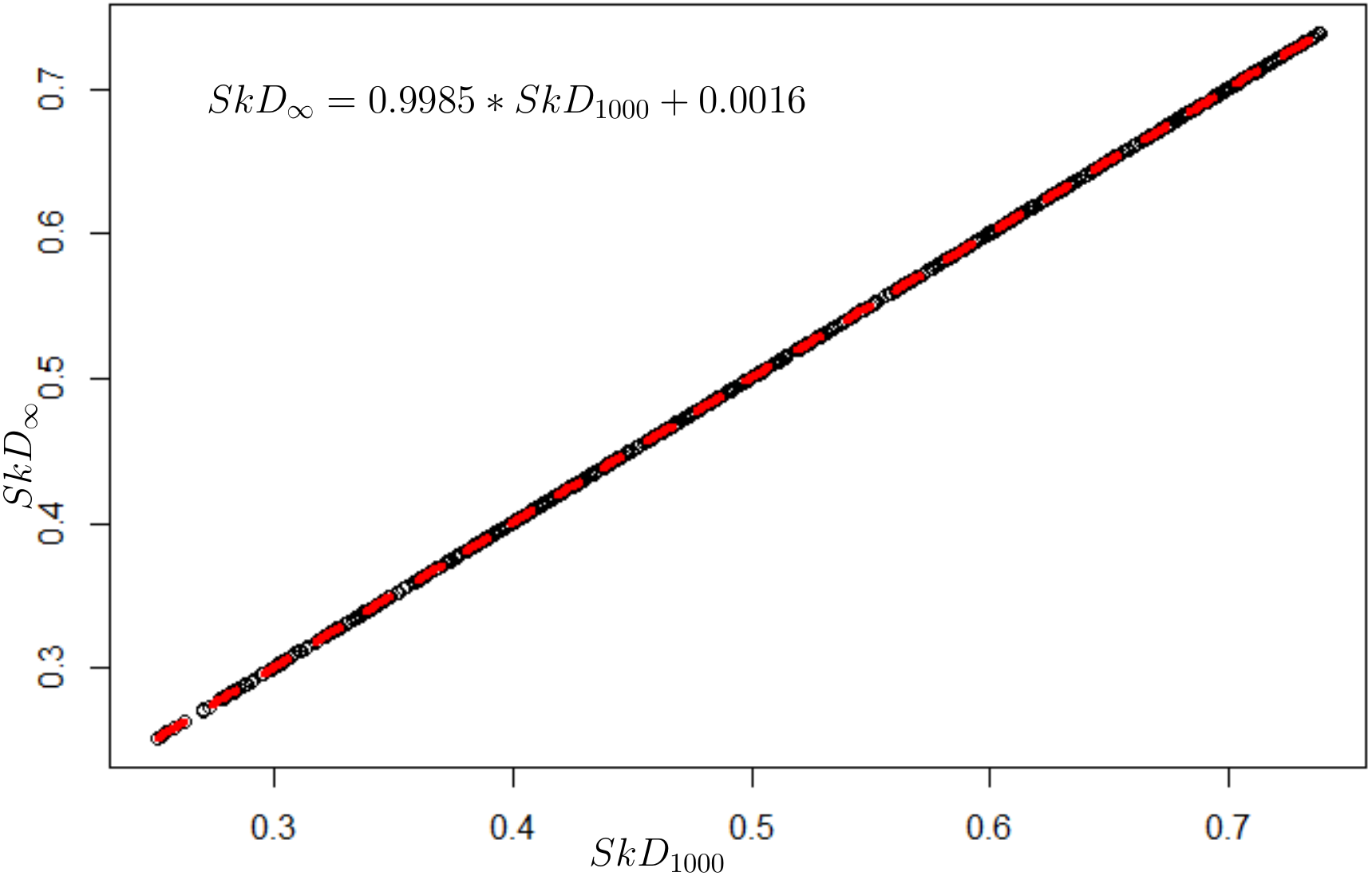}
  \caption{Fitting a linear model to $\SkD_{\infty}$ and $\SkD_{1000}$.}
  \label{fig:1000-inf}
\end{figure}

\begin{figure}[!ht]
  \centering
    \includegraphics[width=0.9\textwidth]{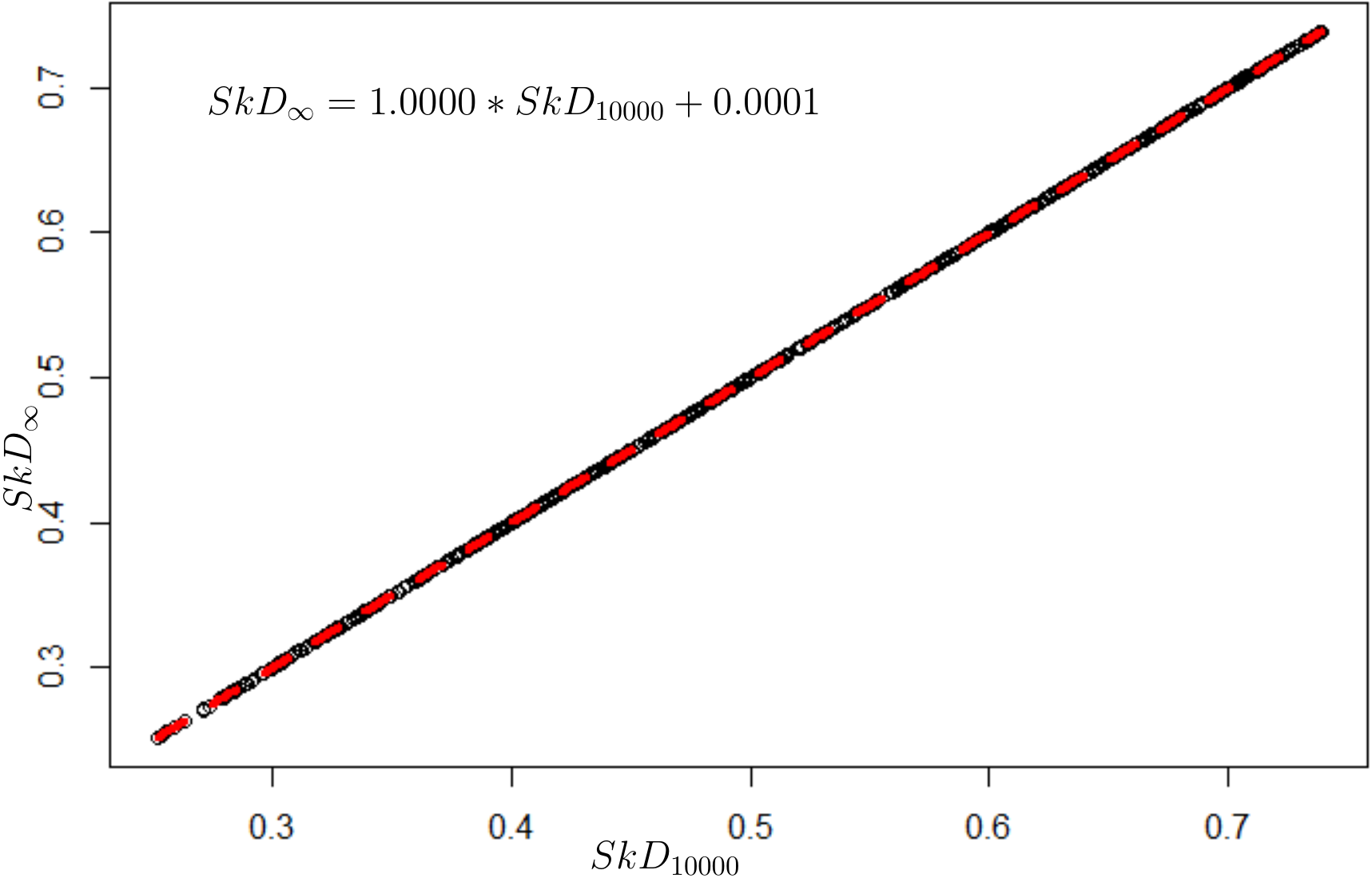}
  \caption{Fitting a linear model to $\SkD_{\infty}$ and $\SkD_{10000}$.}
  \label{fig:10000-inf}
\end{figure}

\begin{figure}[!ht]
  \centering
    \includegraphics[width=0.9\textwidth]{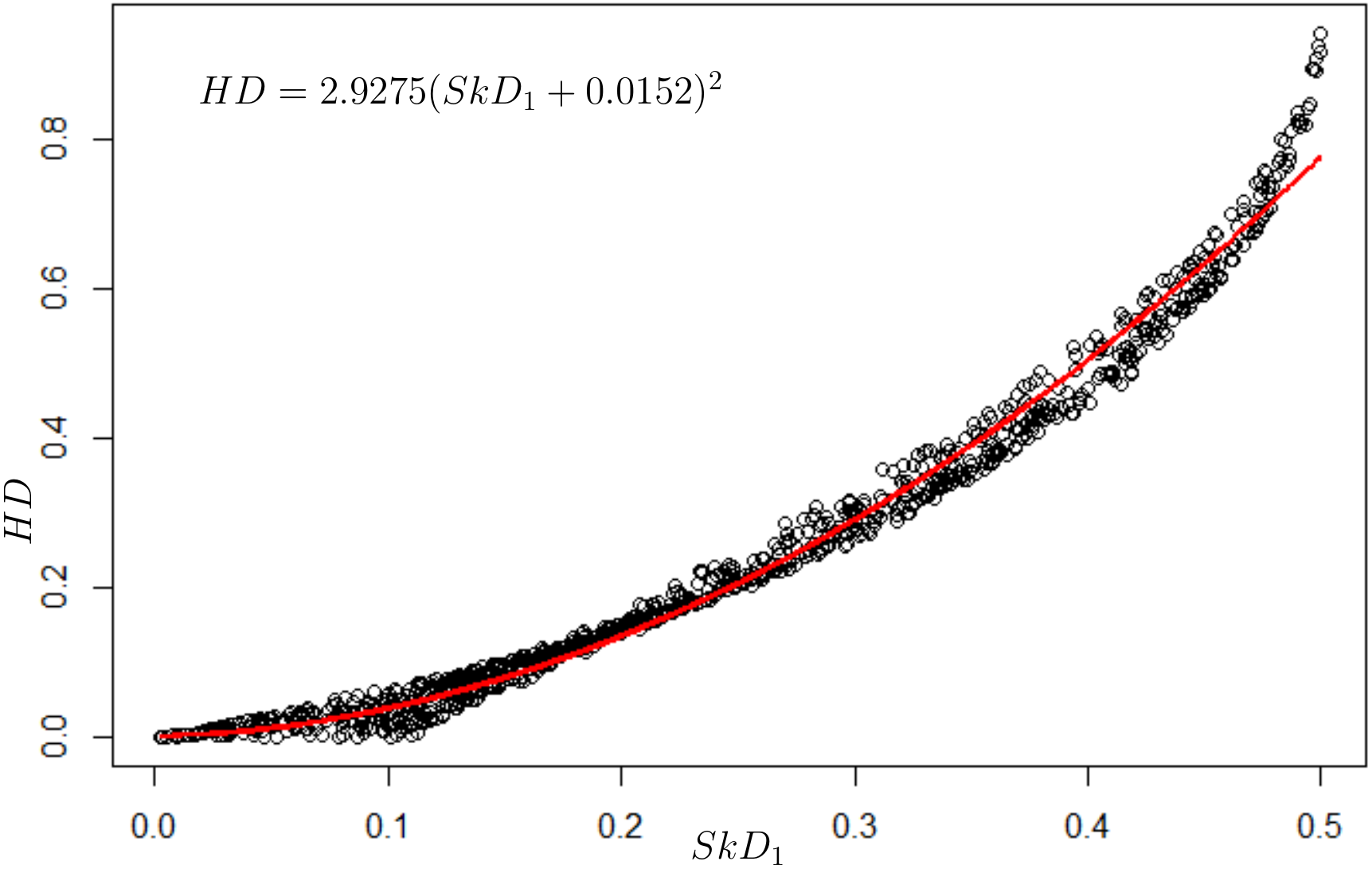}
  \caption{Fitting a quadratic model to $\HD$ and $\SkD_1$.}
  \label{fig:1-h}
\end{figure}

\begin{figure}[!ht]
  \centering
    \includegraphics[width=0.9\textwidth]{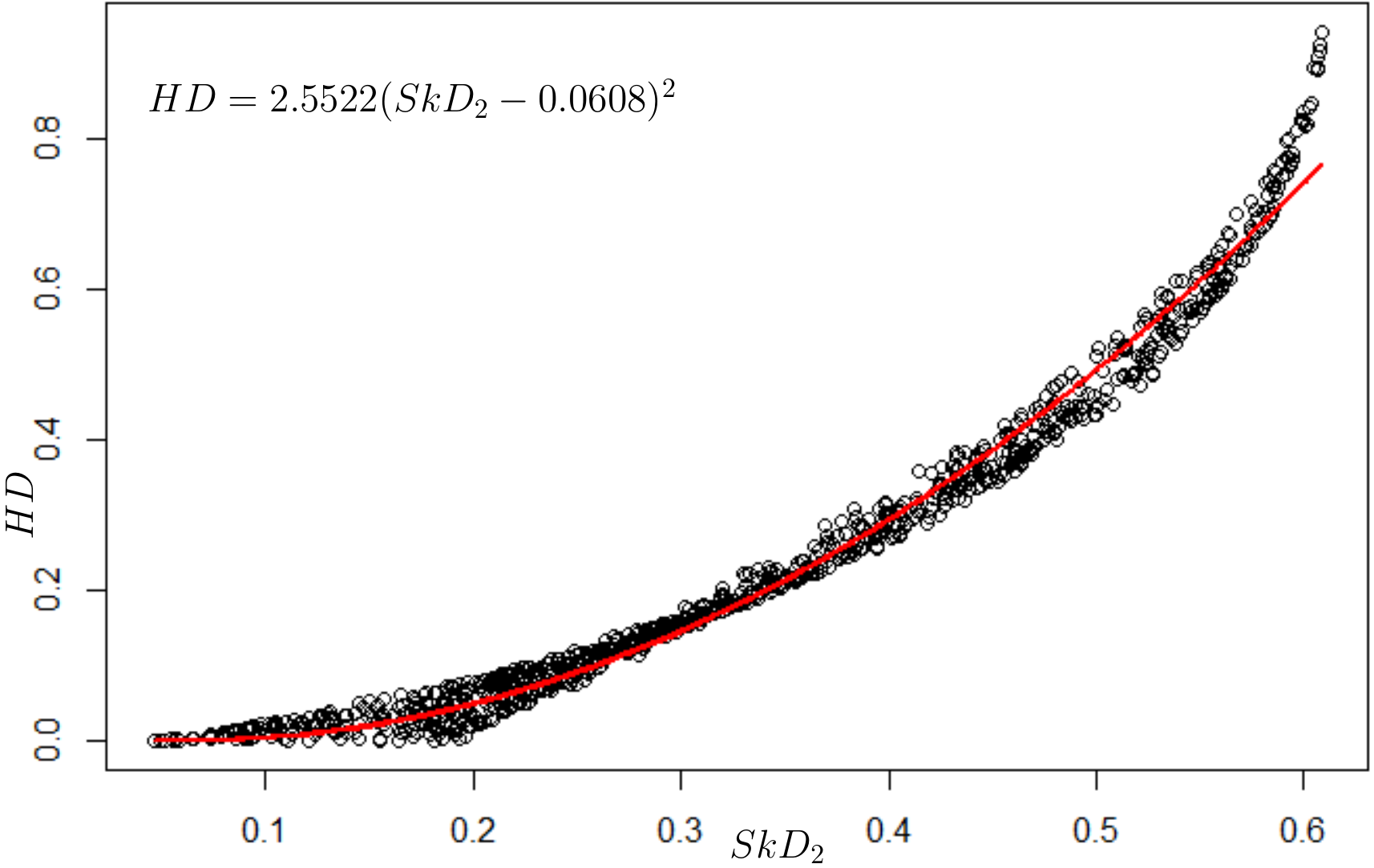}
  \caption{Fitting a quadratic model to $\HD$ and $\SkD_2$.}
  \label{fig:2-h}
\end{figure}

\begin{figure}[!ht]
  \centering
    \includegraphics[width=0.9\textwidth]{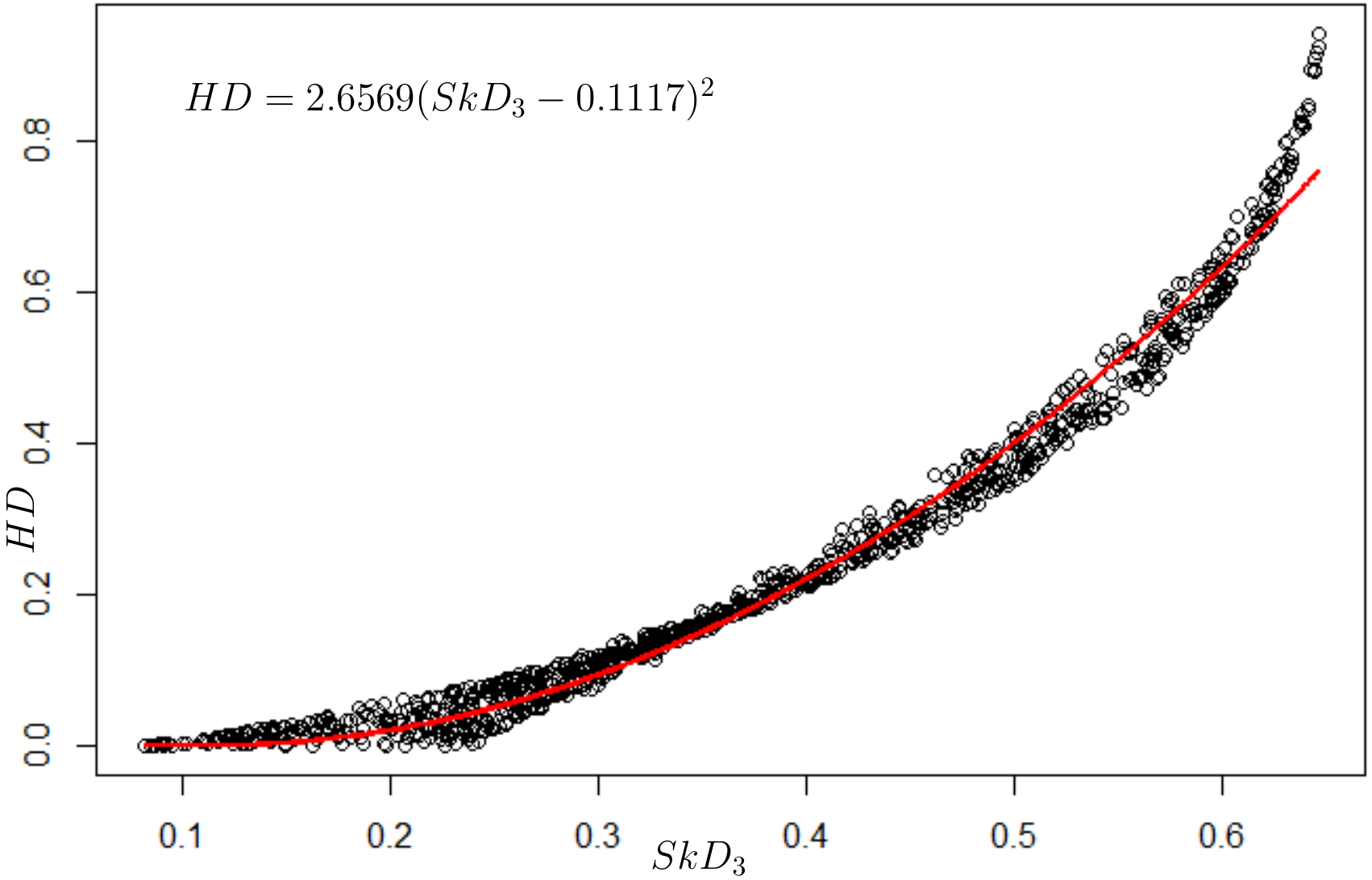}
  \caption{Fitting a quadratic model to $\HD$ and $\SkD_3$.}
  \label{fig:3-h}
\end{figure}

\begin{figure}[!ht]
  \centering
    \includegraphics[width=0.9\textwidth]{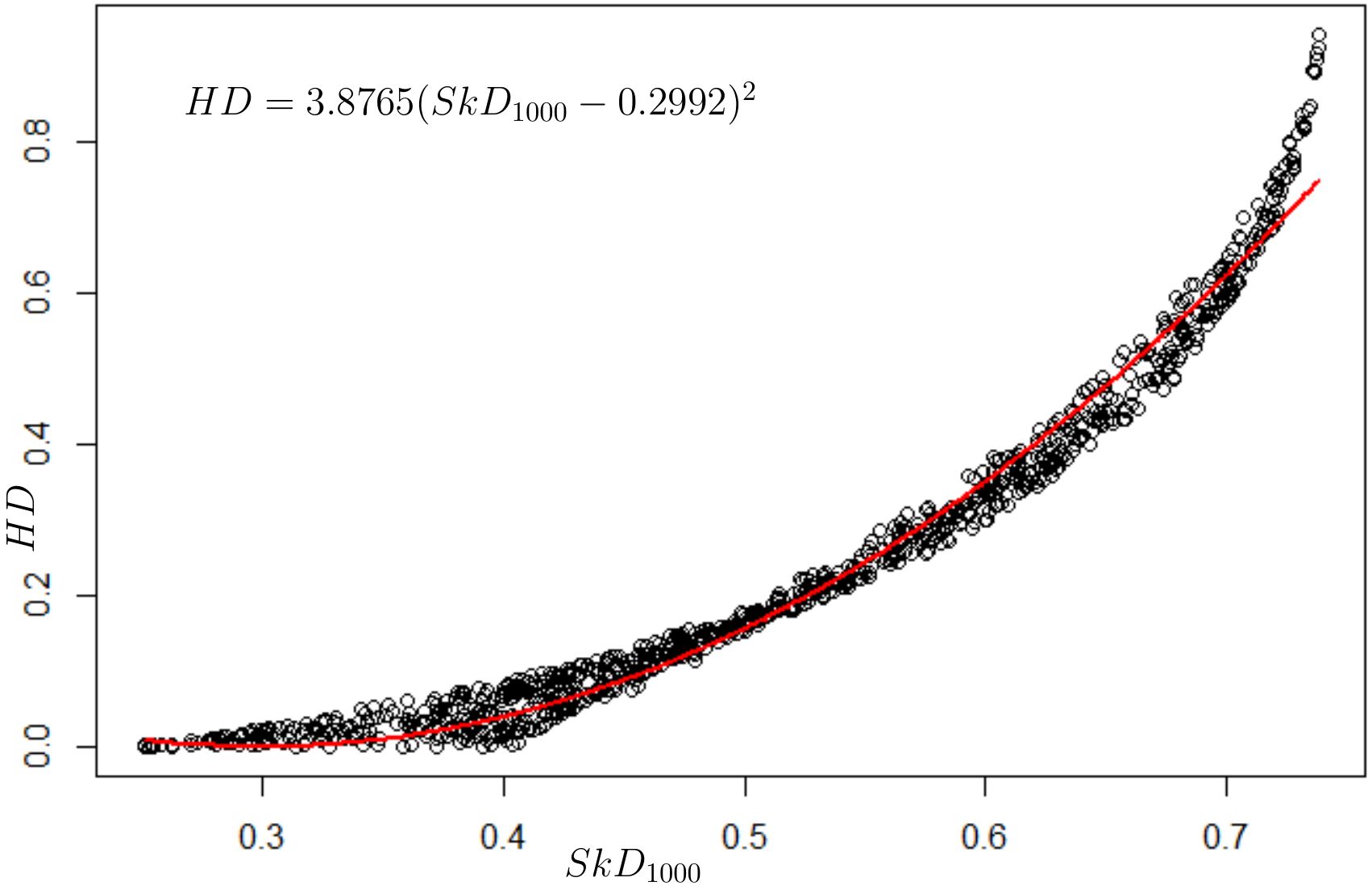}
  \caption{Fitting a quadratic model to $\HD$ and $\SkD_{1000}$.}
  \label{fig:1000-h}
\end{figure}

\begin{figure}[!ht]
  \centering
    \includegraphics[width=0.9\textwidth]{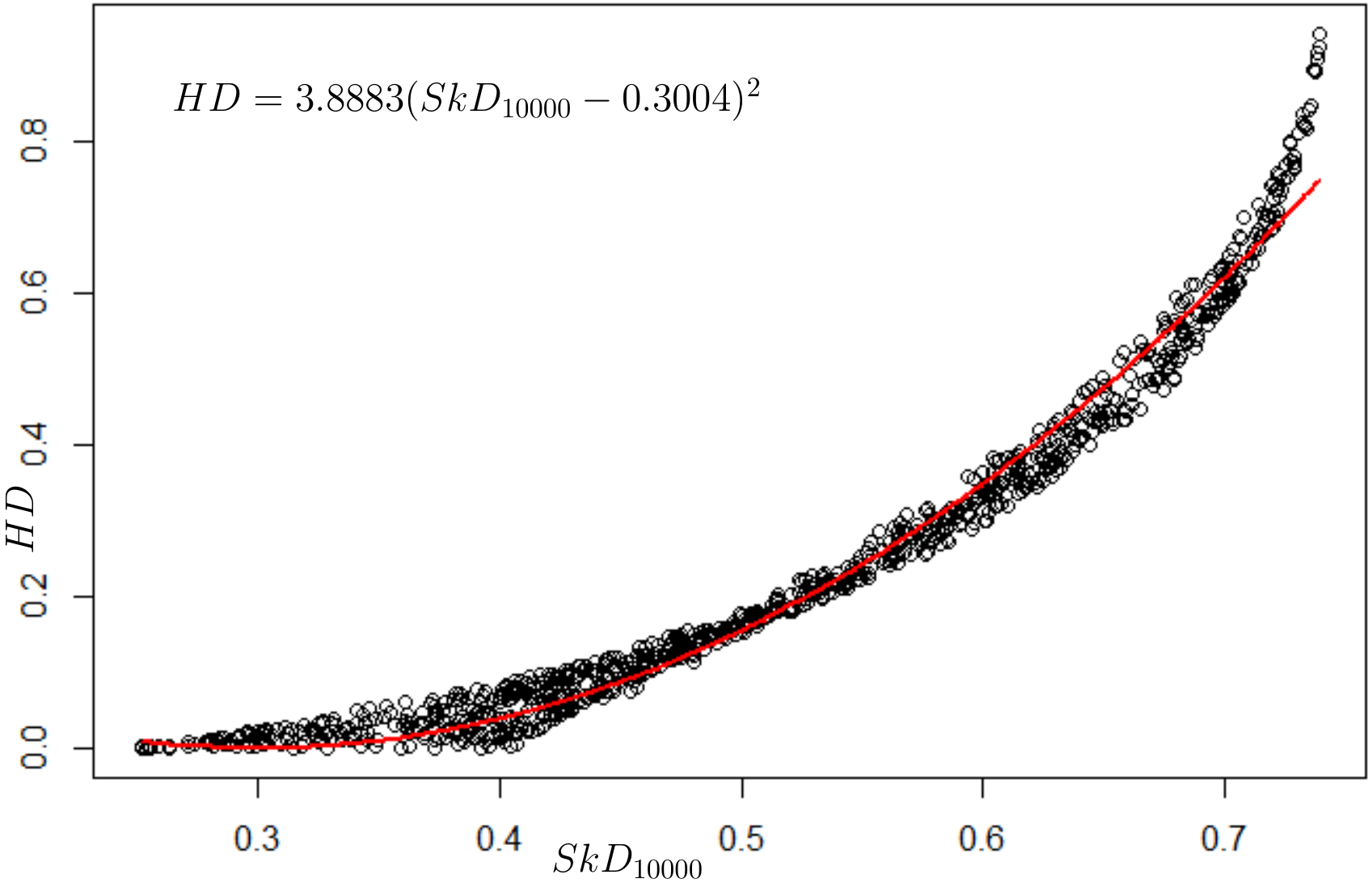}
  \caption{Fitting a quadratic model to $\HD$ and $\SkD_{10000}$.}
  \label{fig:10000-h}
\end{figure}

\begin{figure}[!ht]
  \centering
    \includegraphics[width=0.9\textwidth]{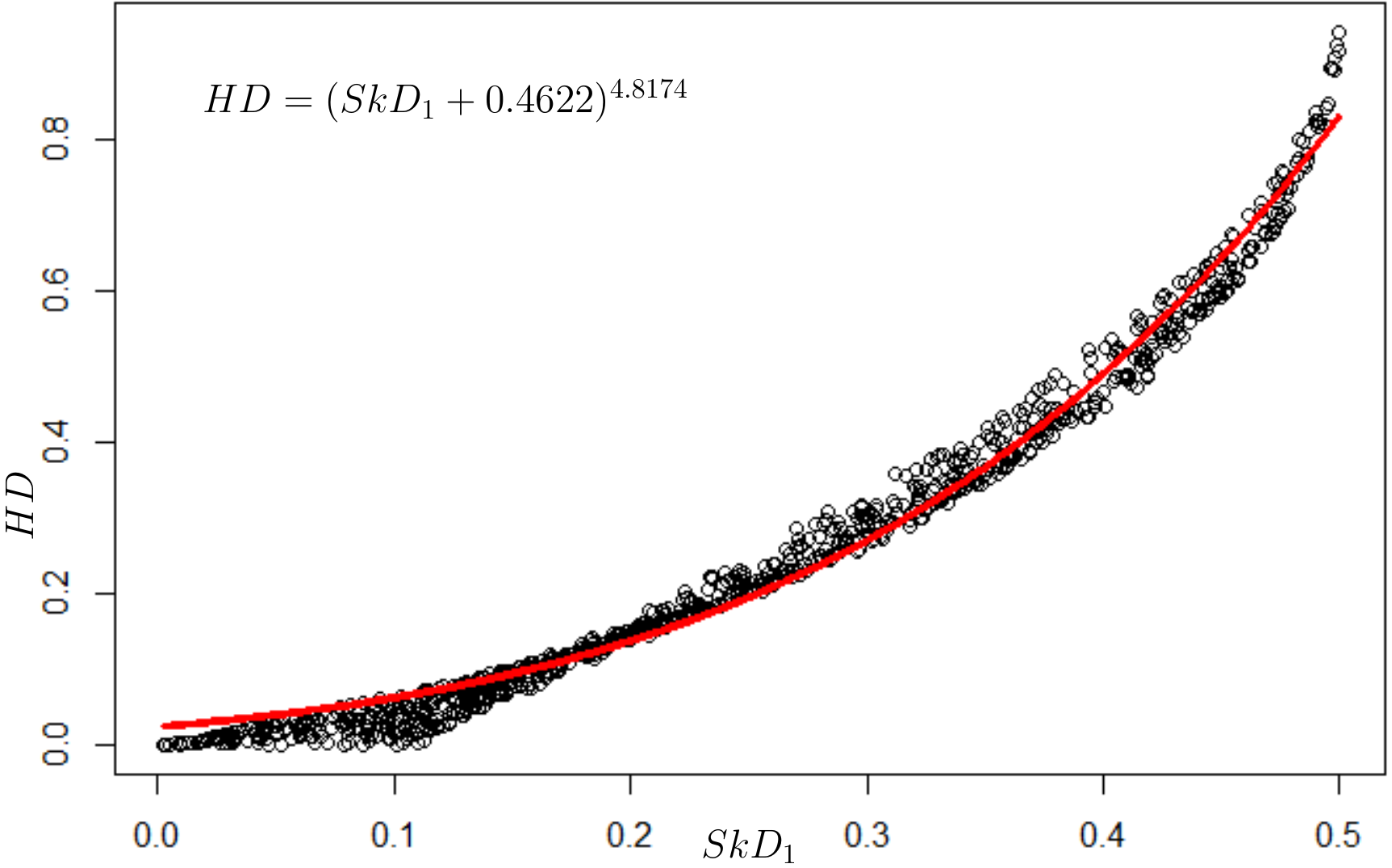}
  \caption{Fitting a power model to $\HD$ versus $\SkD_1$.}
  \label{fig:1-h-pow}
\end{figure}

\begin{figure}[!ht]
  \centering
    \includegraphics[width=0.9\textwidth]{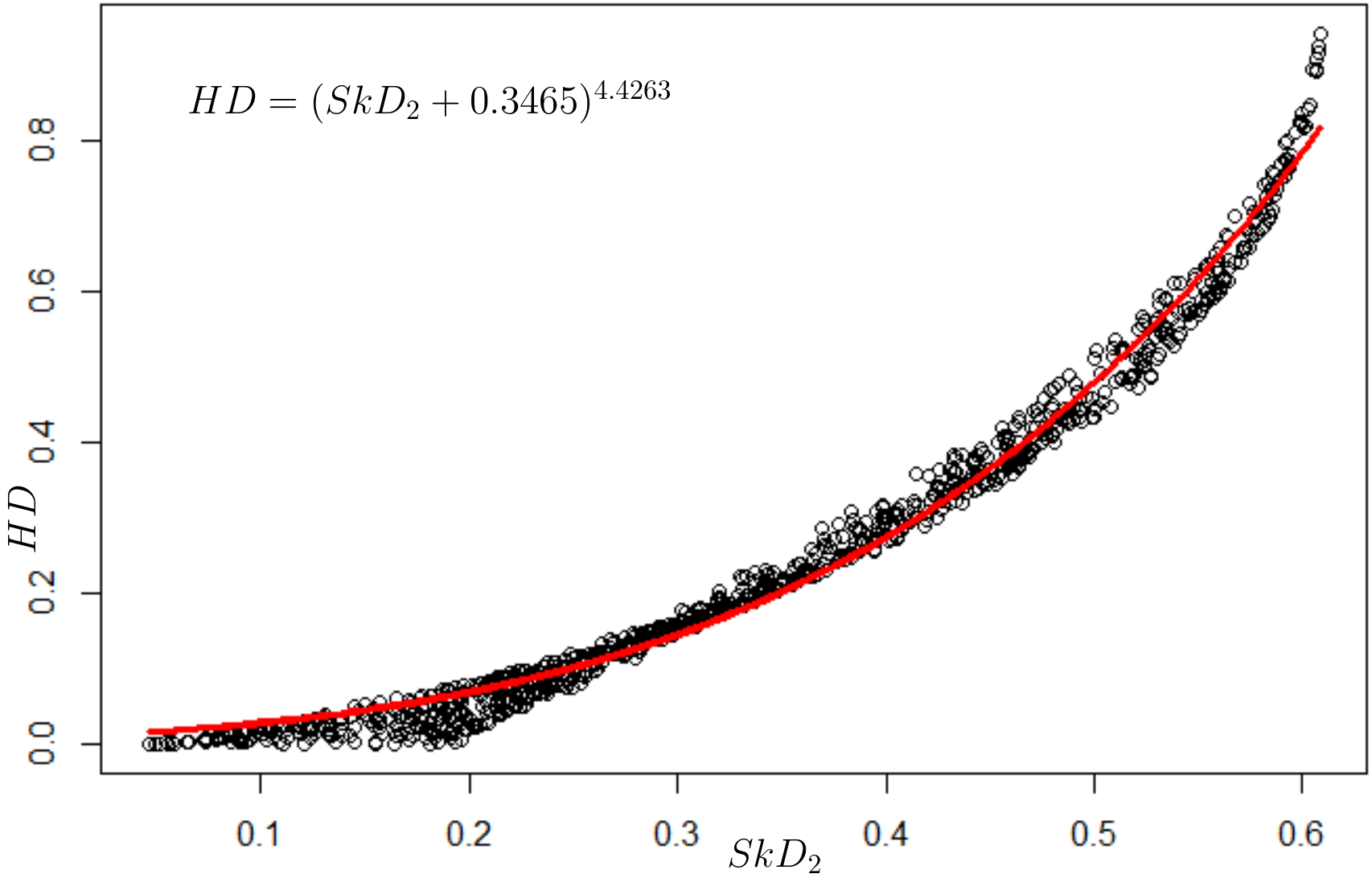}
  \caption{Fitting a power model to $\HD$ and $\SkD_2$.}
  \label{fig:2-h-pow}
\end{figure}

\begin{figure}[!ht]
  \centering
    \includegraphics[width=0.9\textwidth]{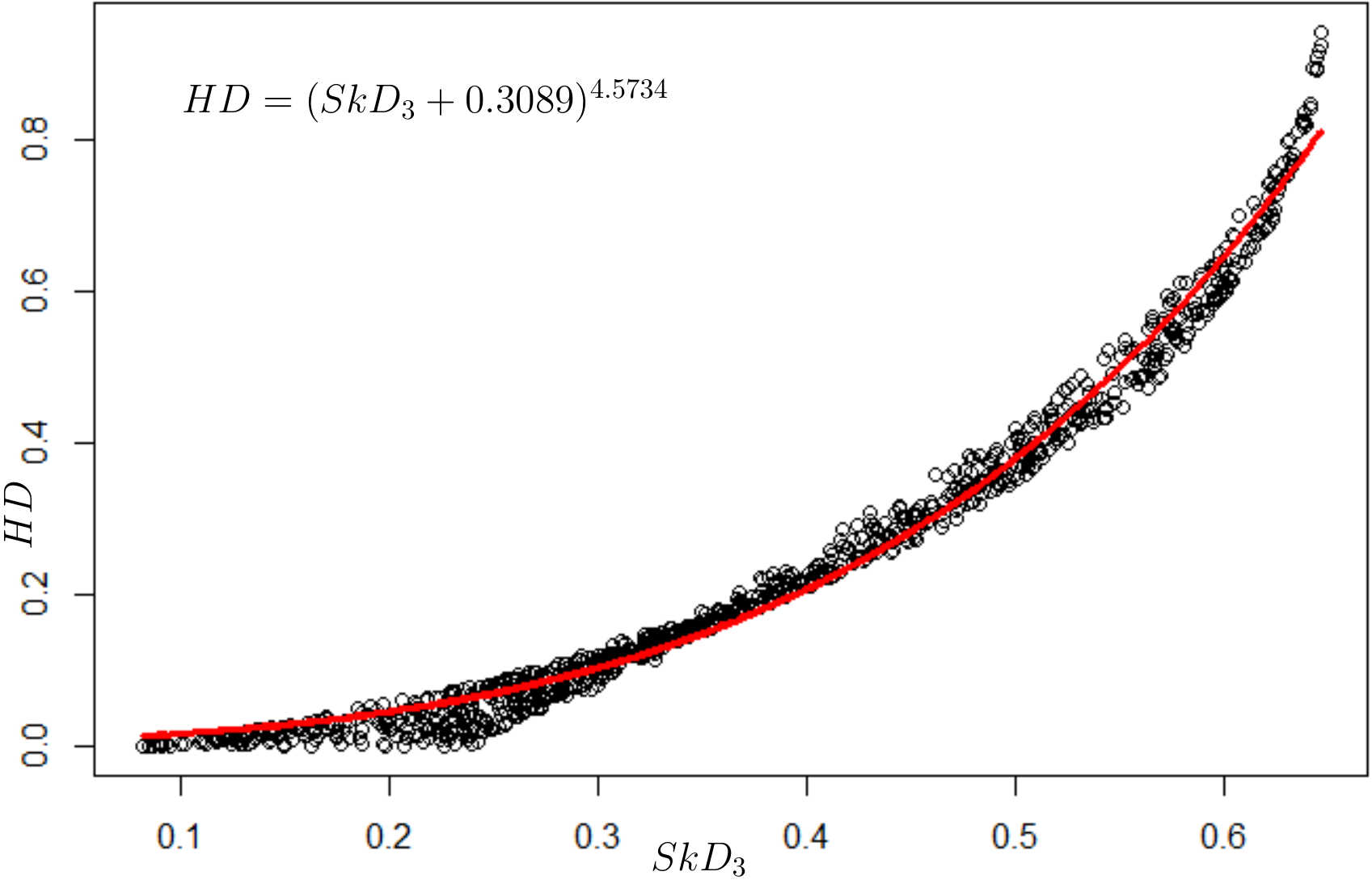}
  \caption{Fitting a power model to $\HD$ and $\SkD_3$.}
  \label{fig:3-h-pow}
\end{figure}

\begin{figure}[!ht]
  \centering
    \includegraphics[width=0.9\textwidth]{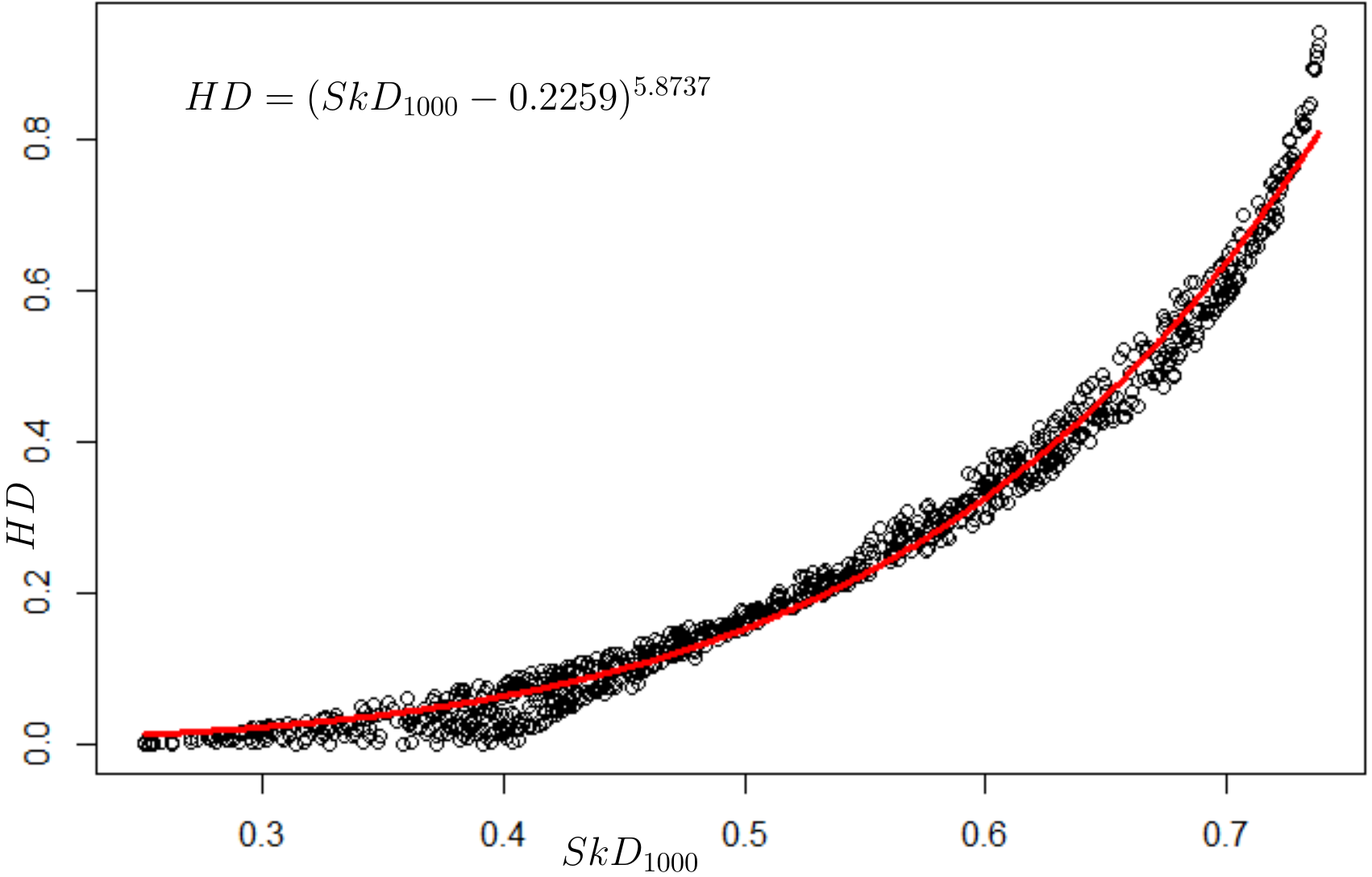}
  \caption{Fitting a power model to $\HD$ and $\SkD_{1000}$.}
  \label{fig:1000-h-pow}
\end{figure}

\begin{figure}[!ht]
  \centering
    \includegraphics[width=0.9\textwidth]{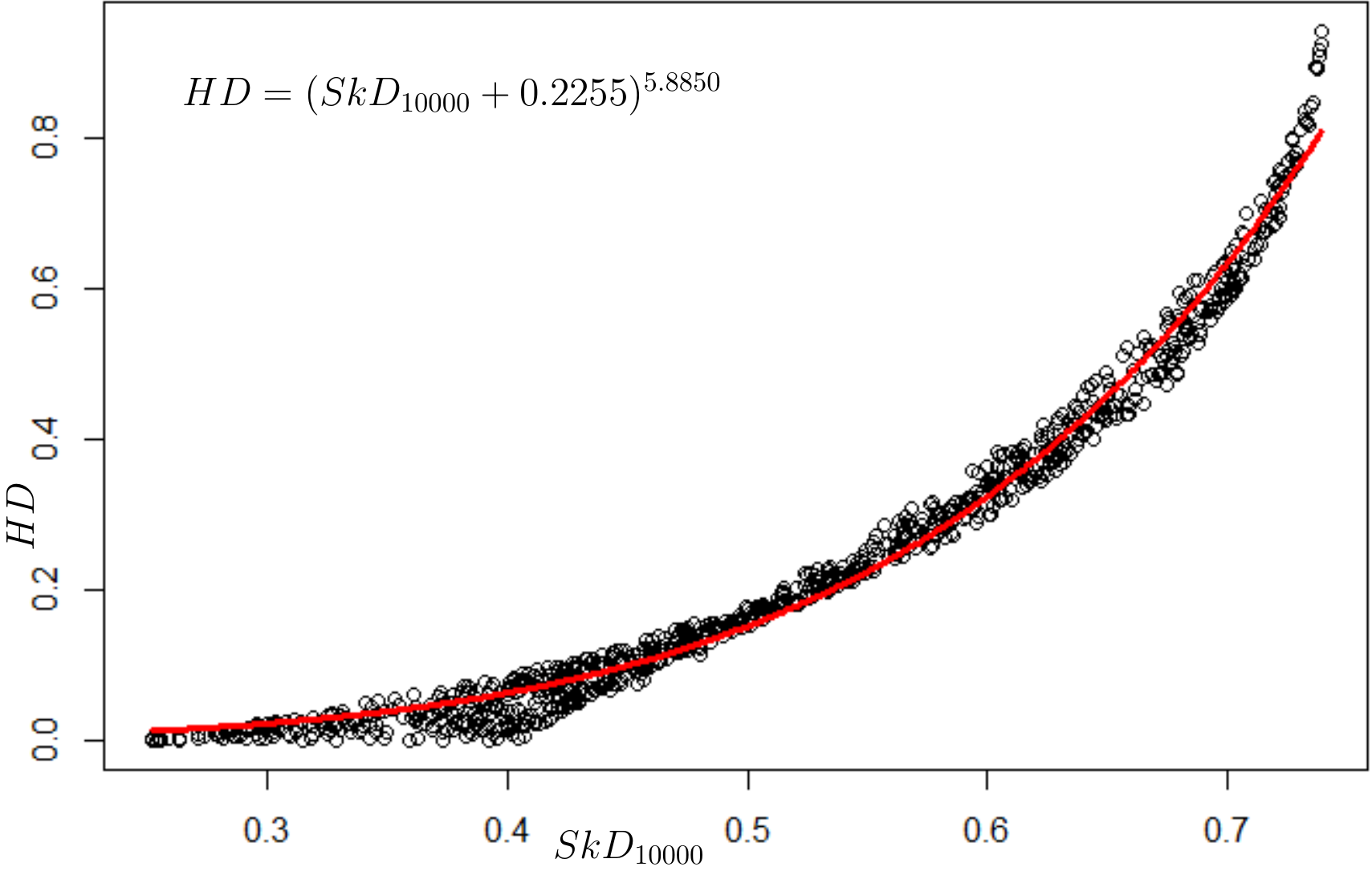}
  \caption{Fitting a power model to $\HD$ and $\SkD_{10000}$.}
  \label{fig:10000-h-pow}
\end{figure}
\chapter{Conclusion}
\label{ch:conclusion}
In this thesis we have presented a study of several depth functions such as halfspace depth, simplicial depth, and $\beta$-skeleton depth.  Emphasis was given to the studying of $\beta$-skeleton depth from both geometric and algorithmic viewpoints in $\mathbb{R}^2$. Furthermore, lower bounds for computing the planar $\beta$-skeleton depth ($1\leq \beta\leq \infty$), and approximation of different depth functions are also studied in this thesis. Finally, we provide some experimental results to support our approximation technique, and to illustrate the relationships among the aforementioned depth functions.
\section{Contributions}
Regarding the geometric perspective, in Chapter \ref{ch:geometric}, we proved that the exact bound $(1/4)n^4-\theta(n^3)$ for the combinatorial complexity of the arrangement of planar $\beta$-influence regions is achievable. The connectivity and convexity of $\beta$-skeleton depth regions are also studied in Chapter \ref{ch:geometric}.
\\\\For the algorithmic part, in Chapters \ref{ch:algorithmic} and \ref{ch:lowerbound}, we presented an optimal algorithm for computing the planar spherical depth (i.e. $\beta$-skeleton depth, $\beta=1$) of a query point. This algorithm takes $O(n \log n)$ time that matches the corresponding lower bound $\Omega (n\log n)$ proved in Chapter \ref{ch:lowerbound}. For the other values of $1<\beta\leq \infty$, employing the results on semialgebraic range counting problems in the literature, we developed an algorithm to compute the planar $\beta$-skeleton depth of a query point. This algorithm takes $O(n^{(3/2)+\epsilon})$ time. In developing this algorithm, we reduced the problem of computing the planar $\beta$-skeleton depth of a query point to a combination of at most $3n$ range counting problems, where $n$ is the size of the data set (Theorem \ref{thrm:q-skeleton}). In Section \ref{sec:tukey-alg}, we presented a simple and optimal algorithm for computing the planar halfspace depth. This algorithm takes $\Theta(n\log n)$ time. There are other optimal algorithms for this problem by Aloupis in \cite{aloupiscomputing}, and by Chan in \cite{chan2004optimal}. We used our specialized halfspace range counting method to obtain the number of data points in each halfspace whereas in Aloupis's algorithm the idea of sweeping halfline is employed to obtain such number.
\\\\The results in Chapter \ref{ch:lowerbound} include proving lower bounds for the complexity of computing planar $\beta$-skeleton depth, $1\leq \beta\leq \infty$. In this chapter, using different reductions, we proved that computing the planar $\beta$-skeleton depth of a query point allows us to answer the problem of Element Uniqueness, which takes $\Omega (n\log n) $ time. As such, computing the planar $\beta$-skeleton depth also requires $\Omega (n\log n) $ time.
\\\\Finally, in Chapter \ref{ch:experiments}, we investigated some relationships among the influence regions of $\beta$-skeleton depth and simplicial depth. We employed these relationships to prove that there exists $\beta^*<\infty$ such that for all $\beta\geq \beta^*$, and all $q$ in any finite range $\mathcal{R}\subset \mathbb{R}^2$ that contains $S$, the values of $\SkD_{\beta}(q;S)$ and $S_{\infty}(q;S)$ are equal. Also, we proved that the $\beta$-skeleton depth has a lower bound in terms of a constant factor of simplicial depth (in particular, $\SkD_{\beta}\geq(2/3)\SD$). In the remainder of Chapter \ref{ch:experiments}, we proposed a method of approximation, using the idea of fitting functions, to approximate one depth function by another one. To support the theoretical results in Chapter \ref{ch:experiments}, we provided some experimental results. As an example, the experimental results suggest that with a reasonable amount of error, the halfspace depth can be approximated by a quadratic function of the $\beta$-skeleton depth ($\HD\approxeq3.8883(\SkD_{\beta}-0.3004)^2$ if $\beta \to \infty$).
\section{Open Problems and Directions for Future Work}
\begin{open problem}
In Chapter \ref{ch:lowerbound}, we proved that planar $\beta$-skeleton depth requires $\Omega(n\log n)$ time. However, our best algorithm for computing the planar $\beta$-skeleton depth ($1<\beta\leq \infty$) takes $O(n^{(3/2)+\epsilon})$ time. Is it possible to develop an $O(n\log n)$ algorithm for this problem. Note that for the case of $\beta=1$, we presented such an algorithm.
\end{open problem}
\begin{open problem}
In studying the relationships among different depth functions, we proved that $\SkD_{\beta}\geq (2/3)\SD$. Is there some constant $c$ such that for all query points within the convex hull $\SkD_{\beta}\leq c\SD$.
\end{open problem}
\begin{open problem}
In Section \ref{sec:beta-alg}, we used the existing results on semialgebraic range counting and developed Algorithm~\ref{Alg:betaskeleton-pseudocode} with $O(n^{(3/2)+\varepsilon})$ query time, $O(n)$ storage, and $O(n\log n)$ expected preprocessing time. Can the query time be reduced to $O(n^{(4/3)+\varepsilon})$ by spending $O(n^{(4/3)+\varepsilon})$ time and space on the preprocessing. Or, is it possible to answer a semialgebraic range query in $O(n^{(1/3)+\varepsilon})$ time by spending $O(n^{(4/3)+\varepsilon})$ time and space on the preprocessing.   
\end{open problem}
\subsection{Future Work}
Some of the directions for future work include:  
\begin{itemize}
\item developing efficient algorithms to compute the $\beta$-skeleton median (central point).
\item studying the $\beta$-skeleton depth in higher dimensions.
\item using the $\beta$-skeleton depth to find outliers in a data set. This direction can be pursued to develop an efficient intrusion detection system. Regarding this application of $\beta$-skeleton depth: Algorithm \ref{Alg:sph-pseudocode} is used in an intrusion detection method proposed in \cite{sharafaldin2018eagleeye}. 
\end{itemize}
\renewcommand{\bibname}{References}
\addcontentsline{toc}{chapter}{References}

\clearpage
\phantomsection  
\addtocontents{toc}{\cftpagenumbersoff{chapter}}  
\addcontentsline{toc}{chapter}{Vita}
\chapter*{Vita}
\pagestyle{empty}
\thispagestyle{empty}
\singlespacing
\textbf{Candidate's full name:} Rasoul Shahsavarifar\\
\textbf{University attended (with dates and degrees obtained):}\\
University of New Brunswick, Fredericton, Canada, 2014-2019, PhD\\
University of Tabriz, Tabriz, Iran, 2006-2008, MSc\\
Razi University, Kermanshah, Iran, 2001-2006, BSc\\

\textbf{Publications:}\\
Hassan Mahdikhani, Rasoul Shahsavarifar, Rongxing Lu, David Bremner, “Achieve Privacy-Preserving Simplicial Depth Query over Outsourced Cloud Platform”, manuscript submitted to Journal of Information Security and Applications, March 2019.\\
Rasoul Shahsavarifar and David Bremner, `` Approximate Data depth Revisited'', CCCG2018, University of Manitoba, Winnipeg, MB, Canada, 7-10 Aug 2018.\\
Rasoul Shahsavarifar and David Bremner, `` Approximate Data Depth Revisited'', arXiv preprint arXiv:1805.07373, 2018.\\
Rasoul Shahsavarifar and David Bremner, `` Computing the Planar $\beta$-skeleton Depth'', arXiv preprint arXiv:1803.05970, 2018.\\
Rasoul Shahsavarifar and David Bremner, `` An Optimal Algorithm for Computing the Spherical Depth of Points in the Plane'', arXiv preprint arXiv:1702.07399, 2017.\\

\textbf{Conference Presentations:}\\
Rasoul Shahsavarifar and David Bremner, `` Approximate Data depth Revisited'', CCCG2018, University of Manitoba, Winnipeg, MB, Canada, 7-10 Aug 2018.\\
Rasoul Shahsavarifar and David Bremner, `` Approximation of Data depth'', CMS2018, University of New Brunswick, Fredericton, NB, Canada, 1-4 June 2018.\\
Rasoul Shahsavarifar and David Bremner, `` On the Planar Spherical Depth and Lens Depth'', CCCG2017, Carleton university, Ottawa, ON, Canada, 2017.\\
Rasoul Shahsavarifar and David Bremner, `` Computing the spherical depth of points in the plane'', FWCG2016, University of New York, New York,USA, 2016.\\
Rasoul Shahsavarifar and David Bremner, `` Geometric and Computational Aspects of Data Depth'', Annual Graduate Research Conference, University of New Brunswick, Frederiction, NB, Canada, April 23, 2015.

\end{document}